\theoremstyle{plain}% default
\newtheorem{thm}{Theorem}[section]
\newtheorem{lem}[thm]{Lemma}
\newtheorem{prop}[thm]{Proposition}
\newtheorem{cor}[thm]{Corollary}
\theoremstyle{definition}
\newtheorem{defn}[thm]{Definition}
\newtheoremstyle{exmp_cont}
{\topsep} {\topsep}%
{\upshape}% Body font
{}% Indent amount (empty = no indent, \parindent = para indent)
{\bfseries\scshape}% Thm head font
{.}% Punctuation after thm head
{1em}% Space after thm head (\newline = linebreak)
{\thmname{#1} \thmnumber{ #2}\thmnote{#3} (continued)}% Thm head spec
\theoremstyle{exmp_cont}
\DeclareMathOperator{\diag}{diag}
\DeclareMathOperator{\sgn}{sgn}
\DeclareMathOperator*{\argmax}{arg\,max}
\newcommand*\fall{\forall\,}
\begin{document}

\title{A Strongly Polynomial Reduction for Linear Programs over Grids}

%\author{Jan Foniok%
 %\thanks{Mathematics Institute, University of Warwick, United Kingdom}
%\and
%Komei Fukuda%
  %\thanks{Department of Mathematics and Institute of Theoretical Computer Science, ETH Zurich, Switzerland}
%\and
\author{Lorenz Klaus%
 \thanks{National Institute of Informatics (NII), Japan. {\tt lorenz@nii.ac.jp} and {\tt lklaus@iaeth.ch}}
}
\date{}

\maketitle

\begin{abstract}
We investigate the duality relation between linear programs over grids (Grid-LPs) and generalized linear complementarity problems (GLCPs) with hidden K-matrices. The two problems, moreover, share their combinatorial structure with discounted Markov decision processes (MDPs). Through proposing reduction schemes for the GLCP, we obtain a strongly polynomial reduction from Grid-LPs to linear programs over cubes (Cube-LPs). As an application, we obtain a scheme to reduce discounted MDPs to their binary counterparts. This result also suggests that Cube-LPs are the key problems with respect to solvability of linear programming in strongly polynomial time. We then consider two-player stochastic games with perfect information as a natural generalization of discounted MDPs. We identify the subclass of the GLCPs with P-matrices that corresponds to these games and also provide a characterization in terms of unique-sink orientations. A strongly polynomial reduction from the games to their binary counterparts is obtained through a generalization of our reduction for Grid-LPs.
\end{abstract}

%\begin{remark}
 %This is a report on ongoing research activities. The content is subject to change.
%\end{remark}

\section{Introduction}
\emph{Linear Programming} is of particular importance in mathematical optimization. In the early days, \emph{Dantzig's simplex method} \cite{Dan:Linear} was the only practical solving method for \emph{linear programs} (LPs). The method runs fast on instances arising from applications in practice, but the algorithmic complexity had been unknown. In 1972, Klee and Minty~\cite{KleMin:How-good} constructed artificial \emph{LPs over cubes} (Cube-LPs) on which Dantzig's simplex method requires an exponential number of pivot steps.  %, and thus also in polynomial time.
These days, \emph{Khachiyan's ellipsoid method} \cite{Kha} and \emph{Karmarkar's interior-point method} \cite{Kar}, both originating from nonlinear optimization, are polynomial-time solving methods. However, it is an open problem  whether there exist \emph{strongly polynomial} solving methods. These are methods where the number of arithmetic operations is polynomially bounded by the number of variables and constraints. Strong interest in the development and analysis of pivoting schemes therefore persists. Simple \emph{pivot rules} for the simplex method that terminate in a polynomial number of pivot steps would yield such a strongly polynomial solving method. For almost all ever proposed deterministic pivot rules, inefficiency has been proven through artificially constructed Cube-LPs that yield a superpolynomial number of pivot steps. Analysis of randomized and history-based pivot rules, on the contrary, is difficult. In recent years, Friedmann \emph{et al.}~\cite{friedmann2011subexponential,friedmann2011subexponentialZadeh}, nevertheless, succeeded in proving inefficiency of certain rules for \emph{LPs over grids} (Grid-LPs), such as the \textsc{Random Facet} and \textsc{Zadeh}'s rule. This raises the question whether these rules are also inefficient for ordinary Cube-LPs. In other words, we would like to identify the `key problem' of linear programming, which is the subproblem that needs to be addressed in order to clarify whether there exists a strongly polynomial solving method. In this paper, we prove that Grid-LPs admit a strongly polynomial reduction to Cube-LPs. The only requirement is that the problem instances are given in some specific representation, which is completely unrelated to the underlying combinatorics. This result gives strong evidence that linear programming over cubes is the `key problem'. Through the gained insight, the number of nontrivial constraints is a main determinant of the difficulty of linear programming with respect to pivoting algorithms. As an application, \emph{discounted Markov decision processes} (MDPs) admit formulations as Cube-LPs.

The result is obtained through a duality theory between Grid-LPs and \emph{generalized linear complementarity problems} (GLCPs) with \emph{hidden K-matrices}, which is due to Mangasarian \cite{Man:Linear}. The GLCP is a mathematical framework that unifies many optimization problems, such as linear and \emph{convex quadratic programming} \cite{Lemke:Bimatrix}, \emph{bimatrix games} \cite{Lemke:Bimatrix}, and \emph{stochastic games} \cite{gartner2005simple, svensson2007linear, jurdzinski2008simple}. Murty's book \cite{Mur:Linear} and the monograph by Cottle, Pang, and Stone \cite{CotPanSto:LCP} give a survey of applications and provide an in-depth study of the ordinary \emph{linear complementarity problem} (LCP). The generalization to GLCPs is due to Cottle and Dantzig \cite{cottle1970generalization}. The decision problem whether a general LCP has a solution is NP-complete \cite{Chung:Hardness}. LCPs with \emph{P-matrices} (P-LCPs) have a unique solution \cite{Mur:On-the-number}, and theoretical results suggest the existence of a polynomial-time solving method. The superclass of LCPs with \emph{sufficient matrices}, for instance, is in $\text{NP} \cap \text{coNP}$~\cite{FukTer:Linear}. Moreover, Megiddo \cite{Meg:A-Note-on-the-Complexity} proved that if the P-LCP is NP-hard, then $\text{NP} = \text{coNP}$. Despite these promising facts, no efficient solving method is known. The dual problem of linear programming over cubes, which is the hidden K-LCP, builds a proper subclass of the P-LCP. In order to reduce Grid-LPs to Cube-LPs, the basic idea, therefore, is to provide a reduction from GLCPs to LCPs that preserves the hidden K-property.

The exposition starts with a short preliminary section. It continues with a detailed investigation of the P, K, and hidden K-property of block matrices. The emphasis is on dual characterizations in particular. This research actually originated from the study of these matrix classes in the combinatorial setting of \emph{oriented matroids} \cite{PhD-Klaus}. In Section~\ref{sec:HidKGLCPLP}, the duality theory that relates Grid-LPs to hidden K-GLCPs is developed. We are especially interested in \emph{unique-sink orientations} (USOs), which provide a combinatorial model to illustrate the behavior of pivoting methods in terms of digraphs~\cite{StiWat:Digraph-models}. USOs arising from Grid-LPs are shown to be the same as the USOs arising from hidden K-GLCPs. Section~\ref{sec:DiscMDPs} is devoted to discounted Markov decision process (MDPs). These are the most general single-player stochastic games with perfect information. They admit formulations as Grid-LPs and, through duality, also as (hidden) K-GLCPs. Conversely, every hidden K-GLCP describes a discounted MDP. The three problems are in fact combinatorially equivalent.

The next sections are devoted to reduction schemes for several classes of GLCPs. In Section~\ref{sec:HidKKRelation}, we investigate the relation between hidden K-GLCPs and K-GLCPs in detail. The former can be converted to the latter. The dimension of the problem is preserved whereas the size of each block increases by one. As a consequence, every USO arising from a Grid-LP is fully contained in some USO arising from a K-GLCP. Through this observation, the K-GLCP cannot be considered as trivial to solve---contrary to the ordinary K-LCP, which admits strongly polynomial pivoting methods \cite{FonFukGar:Pivoting}. In Section~\ref{sec:PRed}, we propose a strongly polynomial reduction from the P-GLCP to the ordinary P-LCP. Unfortunately, the hidden K-property is not preserved. This issue can be fixed through an adapted reduction scheme. First, in Section~\ref{sec:KReduction}, we reduce arbitrary K-GLCPs to K-GLCPs with blocks of size at most two. This reduction together with the results from Section~\ref{sec:HidKKRelation} then yields a reduction from the hidden K-GLCP to the ordinary hidden K-LCP. Through duality, a reduction from Grid-LPs to Cube-LPs is obtained for free. The main result is presented in Section \ref{sec:HidKReduction}.

In Sections \ref{sec:TwoPlayerGames} and \ref{sec:Red2PG}, we provide a short discussion of two-player stochastic games with perfect information and their formulation as GLCPs. The aim is basically to generalize our results obtained for discounted MDPs. First, we identify the subclass of P-GLCPs that corresponds to these games and also provide a characterization in terms of USOs. Finally, we propose a strongly polynomial reduction from the games to their binary counterparts  through a generalization of the reduction scheme for Grid-LPs.

%The main result is discussed in Section .  %They are strongly polynomial for instances in arbitrary form if hidden K-witnesses for block matrices can be computed in strongly polynomial time.

\section{Preliminaries} \label{sec:Prelim}

A (vertical) \emph{block matrix} is a matrix
$$M = \begin{bmatrix} M^1 \\ M^2 \\ \vdots \\ M^n \end{bmatrix} \in \mathbb{R}^{m \times n}$$
consisting of $n$ vertically aligned blocks, where each block $M^j$ has $n$ columns and an arbitrary number $b_j \in \mathbb{N}$ of rows. Block matrix $M$ is of \emph{type} $b:=(b_j)_{j \in [n]}$. %Note that $m=\sum_{j=1}^n b_j$.

The $i$th row in block $M^j$ is denoted by $m^j_{i \cdot}$; the $k$th element in the row by $m^j_{ik}$. Let
$$N(b) := \{ (j,i) : \, j \in [n] \text{ and } i \in [b_j] \}$$
denote the row indexes of $M$. An index set $B \subseteq N(b)$ is \emph{complementary} if for each $j \in [n]$ at most one $(j,i)$ for $i \in [b_j]$ is in $B$. If additionally $|B|=n$, then $B$ is \emph{maximal complementary}. The index set $\overline{B}:=N(b) \backslash B$ is the \emph{complement} of $B$. Sometimes, when speaking about a basis $B$, we denote the complement by $N$ instead of $\overline{B}$. Two subsets $B$ and $C$ of $N(b)$ are \emph{adjacent} if they differ in exactly one element. A \emph{representative submatrix} of $M$ is an $n \times n$ submatrix $M_{B}$ for some maximal complementary $B \subseteq N(b)$. Two representative submatrices $M_B$ and $M_{C}$ are \emph{adjacent} if $B$ and $C$ are adjacent. For a value $c \in \mathbb{R}$, let $\mathbf{c}$ denote a vector whose every entry is $c$. The dimension of $\mathbf{c}$ depends on the context.

For any $X \in \mathbb{R}^{n \times n}$, whose $j$th row is denoted by $x_j$, let $[M|X]$ denote the block matrix obtained from $M$ by extending each block $M^j$ by the row vector $x_j$, where each $x_j$ becomes the new last row of block $j$ in $[M|X]$, whose index is ($j, b_j + 1$).  Block matrix $[M|X]$ is of type $b + \mathbf{1}$.

Let $E(b)$ denote the block matrix of type $b \in \mathbb{N}^n$ whose every representative submatrix is the identity matrix.

Similar notational conventions are in use for horizontal block matrices and block vectors. Speaking of a block matrix, we usually refer to a vertical block matrix, unless stated otherwise.

\subsubsection*{Generalized linear complementarity problem}

Let $M \in \mathbb{R}^{m \times n}$ be a block matrix of type $b \in \mathbb{N}^n$ and $q \in \mathbb{R}^m$. The \emph{generalized linear complementarity problem} (GLCP) is to find a vector $z \in \mathbb{R}^n$ and a block vector $w \in \mathbb{R}^m$ of type $b$ such that
\begin{subequations} \label{eq:LCP}
 \begin{align}
  w-Mz&=q, \label{lcpvinvi} \\
  w,z &\geq \mathbf{0}, \label{lcpvposi} \\
  z^j \prod_{j=1}^{b_j} w^j_i &= 0 \quad \fall j \in [n]. \label{lcpv2compi}
 \end{align}
\end{subequations}
The GLCP($M,q$) is of \emph{type} $b$. A pair $(w,z)$ that satisfies \eqref{lcpvinvi} and the \emph{nonnegativity} condition \eqref{lcpvposi} is \emph{feasible}. If the pair additionally satisfies the \emph{complementarity} condition \eqref{lcpv2compi}, then it is a \emph{solution}. A \emph{solution basis} is a set $N = N(b + \mathbf{1}) \backslash B$ for some maximal complementary $B \subseteq N(b + \mathbf{1})$ such that the system $[I|-M]_N x = q$, $x \geq \mathbf{0}$ is feasible, where the identity matrix $I$ is supposed to be a horizontal block matrix of type $b$.

 GLCPs with special properties are in theory and practice likewise important. In this exposition, we mainly consider GLCPs with block P-matrices and interesting subclasses.

\begin{defn}
 A \emph{P-GLCP} is a GLCP($M,q$) where $M$ is a block P-matrix.
\end{defn}

A P-GLCP of type $\mathbf{1}$ is a P-LCP. Analogous notions are in use for other (block) matrix classes.

\subsubsection*{Grids and unique-sink orientations}

The grid of type $b \in \mathbb{N}^n$, denoted by G$(b)$, is the undirected Graph $(V,E)$ with
\begin{align*}
 V & := \left\{ B \subseteq N(b) : \, B \text{ is maximal complementary} \right\} \text{ and } \\
 E & := \left\{ \{B,C\} : \, B, C \in V \text{ are adjacent} \right\}.
\end{align*}
The \emph{dimension} of the grid G$(b)$ is $n$. In cases where the actual block sizes $b_j$ for $j \in [n]$ are not important, we speak of an $n$-grid. A \emph{subgrid} of the grid G$(b)$ is a subgraph induced by any subset of $N(b)$. The grid G($\mathbf{2}$) of dimension $n$ is the \emph{$n$-cube}.

A \emph{unique-sink orientation} (USO) of any grid G$(b)$ is an orientation of the edges such that every subgrid has a unique local \emph{sink}, which is a vertex with no outgoing edges. The grid G$(b)$ is considered to be a subgrid of itself. Hence, there is a unique global sink.

Stickney and Watson \cite{StiWat:Digraph-models} modeled \emph{simple principal pivoting methods} for the P-LCP as path-following algorithms on USOs of $n$-cubes. The global sink corresponds to the unique solution to the P-LCP. G\"artner, Morris, and R\"uest \cite{GarMorRus:Unique} generalized this result to P-GLCPs and USOs of grids.

Let any P-GLCP($M,q$) of type $b \in \mathbb{N}^n$ be given. Let $B$ and $C$ be any two adjacent vertices of the grid G($b + \mathbf{1}$). Suppose that $C=(B \backslash \{(j,i)\}) \cup \{(j,k)\}$, where $j \in [n]$ and $i,k \in [b_j+1]$. The orientation of the edge $\{B,C\}$ in the arising USO of the grid G($b + \mathbf{1}$) is determined by
$$B \rightarrow C :\Leftrightarrow ([ I | -M]^{-1}_N q)^j_k < 0,$$
where $N:=N(b + \mathbf{1}) \backslash B$. The block vector $[ I | -M]^{-1}_N q$ is indexed by $N$. We presuppose nondegeneracy, otherwise the orientation of some edges will be undetermined.

\begin{defn}
 A \emph{P-USO} is a USO of a grid that arises from a P-GLCP.
\end{defn}
Analogous notions are in use for subclasses of the P-GLCP. The model of USOs is generalizing, which follows from the observation that P-USOs additionally satisfy the Holt-Klee condition \cite{GarMorRus:Unique} and some counting results on USO classes \cite{FonGarKlaSpr:CountingUSOs}.

\subsubsection*{Linear programs over grids}

Consider a system
$$M^Tu \leq p \text{ and } u \geq \mathbf{0}$$
for any block matrix $M \in \mathbb{R}^{m \times n}$ of type $b \in \mathbb{N}^n$ and a vector $p \in \mathbb{R}^n$, which defines a polyhedron in $\mathbb{R}^m$ with $n + m$ facets. The feasible region of the system is \emph{combinatorially equivalent to the grid G$(b + \mathbf{1})$} if every maximal complementary $B \subseteq N(b + \mathbf{1})$ is a nondegenerate \emph{feasible basis}; i.e., every representative submatrix $[M^T|I]_B$ is nonsingular and such that $[M^T|I]_B^{-1}p > \mathbf{0}$.

A \emph{Grid-LP} is an LP
 \begin{equation*}
  \begin{aligned}
	 & \min  & q^T u & \\
	 & \text{ s.t.} & M^Tu & \leq p \\
	 &              & u & \geq \mathbf{0}
	\end{aligned}
 \end{equation*}
for $q \in \mathbb{R}^m$ whose feasible region is combinatorially equivalent to the grid G($b + \mathbf{1}$). In case of $b=\mathbf{1}$, we speak of a \emph{Cube-LP}. Every Grid-LP is obviously primal nondegenerate. If the Grid-LP is also dual nondegenerate, the objective function induces a USO of the grid G$(b + \mathbf{1})$. The orientation of the edge connecting two adjacent vertices $B$ and $C:=(B \backslash \{(j,i)\}) \cup \{(j,k)\}$, where $j \in [n]$ and $i,k \in [b_j+1]$, is determined by
$$B \rightarrow C :\Leftrightarrow (c^T_N - c^T_B A_B^{-1}A_N)^j_k < 0$$
for $A:=[M^T|I]$, $c:=[q|\mathbf{0}]$, and $N:=N(b+ \mathbf{1}) \backslash B$. The block row vector $c^T_N - c^T_B A_B^{-1}A_N$, which is indexed by $N$, is the \emph{reduced cost vector} with respect to basis $B$.

\begin{defn}
 An \emph{LP-USO} is a USO of a grid that arises from a Grid-LP.
\end{defn}

\section{Important classes of block matrices} \label{sec:BlockMatrices}

This exposition starts with a discussion of the P-property for block matrices and then proceeds to other properties, such as the Z and K-property. All definitions are given in terms of vertical block matrices.

\subsubsection*{P-property}

A \emph{square P-matrix} is an $n \times n$ matrix whose principal minors are all positive. The P-property is preserved under taking the transpose and inverse of square matrices~\cite{tsatso:generating}. The property extends straightforwardly to block matrices.

\begin{defn}
 A \emph{block P-matrix} is a block matrix whose every representative submatrix is a square P-matrix.
\end{defn}

\begin{thm} \label{thm:altBlockP}
 For a block matrix $M \in \mathbb{R}^{m \times n}$ of type $b \in \mathbb{N}^n$, the following are equivalent.
 \vspace{+2mm}
 \begin{compactenum}\itemsep2mm
  \item[\rm(a)] Block matrix $M$ is a P-matrix.
	%\item[\rm(a*)] Matrix $M^T$ is a (horizontal) block P-matrix.
	\item[\rm(b)] Every nonzero $x \in \mathbb{R}^{n}$ satisfies $x_j(Mx)^j_i > 0$ for some $j \in [n]$ and each $i \in [b_j]$.
	\item[\rm(c)] For every nonzero $y \in \mathbb{R}^{m}$ of type $b$, there exists $j \in [n]$ such that either $y^j_i(-M^Ty)_j < 0$ for some $i \in [b_j]$ or $y^j_i y^j_k < 0$ for some $i,k \in [b_j]$.
	%\item[\rm(c)] Every $x \in \mathbb{R}^{m+n}$ in the nullspace of $I|-M$ is such that $x^j > 0$ or $x^j < 0$ for some $j \in [n]$.
	%\item[\rm(c*)] Every $y \in \mathbb{R}^{m+n}$ in the nullspace of $M^T|I$ is such that $y^j_i > 0$ and $y^j_k < 0$ for all $j \in [n]$ and some $i,k \in [b_j]$.
	
	\item[\rm(d)] For every $\sigma \in \{-1 ,+1\}^n$, there exists a vector $x \in \mathbb{R}^n$ such that $\sigma_j x_j > 0$ and $\sigma_j (Mx)^j > \mathbf{0}$ for each $j \in [n]$.
	\item[\rm(e)] For every $\sigma \in \{-1 ,+1\}^{m+n}$ of type $b + \mathbf{1}$ with $\sigma^j_i \sigma^j_k < 0$ for every $j \in [n]$ with some $i,k \in [b_j +1]$, there exists a vector $y \in \mathbb{R}^{m}$ such that for every $j \in [n]$, we have $\sigma^j_i y^j_i > 0$ for all $i \in [b_j]$ and $\sigma^j_{b_j + 1} (-M^Ty)_j > 0$.
	\item[\rm(f)] Each GLCP(M,q) for $q \in \mathbb{R}^m$ has exactly one solution.
	%\item[\rm(c*)] For every $\sigma \in \{-1, 0 ,+1\}^n$, there exists a vector $y \in \mathbb{R}^m$ such that $\sigma_j y^j_i > 0$ and $\sigma_j (-M^Ty)_j < 0$ for each $j \in [n]$ and all $i \in [b_j]$.
	%\item[\rm(b^*)] Every nonzero $y \in \mathbb{R}^{m}$ satisfies $y^j_i(M^Ty)_j > 0$ for some $j \in [n]$ and all $i \in [b_j]$.
 \end{compactenum}
\end{thm}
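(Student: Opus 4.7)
The plan is to establish the six equivalences in a star centered on (a), by lifting three classical characterizations of square P-matrices---positive principal minors, the Fiedler-Pt\'ak sign-reversal characterization, and the Gale-Nikaido sign-pattern theorem---to representative submatrices, and by using Farkas/Gordan-type theorems of the alternative to pass between existential and universal formulations.

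The equivalence (a) $\Leftrightarrow$ (b) is essentially Fiedler-Pt\'ak applied blockwise: if (b) fails for some nonzero $x$, then for every $j$ one can pick $i_j \in [b_j]$ with $x_j (Mx)^j_{i_j} \leq 0$, and the maximal complementary set $B = \{(j,i_j): j\in[n]\}$ indexes a representative submatrix violating (a); the converse is immediate. For (a) $\Leftrightarrow$ (c), I consolidate a block vector $y$ whose every block has uniform sign by writing $y^j = s_j \bar\alpha_j \beta^j$ with $\beta^j$ a probability vector on $[b_j]$ and $s_j \in \{-1,0,+1\}$ the common sign. The $n \times n$ matrix $\tilde M$ with $j$th row $\sum_i \beta^j_i m^j_{i\cdot}$ satisfies $M^Ty = \tilde M^T z$ for $z_j := s_j \bar\alpha_j$, and (b) forces $\tilde M$ to be a P-matrix because $x_j(\tilde Mx)_j$ is a convex combination of the $x_j(Mx)^j_i$. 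The classical characterization of $\tilde M^T$ then delivers the required block index in (c). Conversely, any test vector for a representative submatrix $M_B$ embeds as a block vector supported only on $B$, so (c) specializes to the Fiedler-Pt\'ak condition for $M_B^T$.

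Statements (d) and (e) are sign-pattern existence results, which I handle via Gordan's theorem of the alternative. For (d), infeasibility of the strict system $\sigma_j x_j > 0$, $\sigma_j (Mx)^j_i > 0$ is equivalent to the existence of nonzero nonnegative multipliers $\lambda,\mu$ with $\lambda_k \sigma_k + \sum_{(j,i)} \mu^j_i \sigma_j m^j_{ik} = 0$ for every $k$; substituting $y^j_i := \sigma_j \mu^j_i$ yields a nonzero block vector with uniform per-block signs and $\sigma_j (M^Ty)_j \leq 0$, contradicting (c). Statement (e) is dualized similarly on $M^T$: the hypothesis that each block of $\sigma$ has mixed signs rules out the degenerate branch of Gordan's alternative, while the free coordinate $\sigma^j_{b_j+1}$ encodes the desired sign of $(-M^Ty)_j$. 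Finally, (a) $\Leftrightarrow$ (f) is the Samelson-Thrall-Wesler theorem extended to the GLCP by Cottle and Dantzig~\cite{cottle1970generalization}: uniqueness is immediate from (b) applied to the difference of two putative solutions, and existence follows from a standard parametric pivoting argument on $q$.

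The main obstacle is the Gordan-based dualization used in the proofs of (c) and (e): one must carefully match dual multipliers back to block vectors, correctly handling both the uniform-sign-per-block regime of (c) and the mixed-sign-per-block regime of (e), including the additional per-block coordinate $\sigma^j_{b_j+1}$ in the latter. The existence direction of (f) is classical but technically nontrivial, and I would invoke~\cite{cottle1970generalization} rather than reprove it in detail.
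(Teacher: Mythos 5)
Your proposal is correct and covers all six equivalences, but it takes a genuinely different technical route from the paper's for conditions (c), (d), and (e). The paper proves (a) $\Leftrightarrow$ (b) exactly as you do, by lifting Fiedler--Pt\'ak to representative submatrices, and disposes of (f) by citing Cottle--Dantzig and Habetler--Szanc; but for (c), (d), (e) it only gestures at a proof ``in the combinatorial setting of oriented matroids,'' pointing to the square-case treatment in the cited Foniok--Fukuda--Klaus paper, where (c) is described as ``the dual statement of (b)'' and (d), (e) as ``dual to each other.'' You instead give a concrete, self-contained argument: for (c) you consolidate a uniform-sign block vector $y$ into a weighted row average $\tilde M$ whose $j$th row is $\sum_i \beta^j_i m^j_{i\cdot}$, observe that $x_j(\tilde M x)_j = \sum_i \beta^j_i\,x_j(Mx)^j_i$ is a convex combination of positive quantities whenever (b) holds (for the same $j$ guaranteed by (b)), so $\tilde M$ and hence $\tilde M^T$ is a square P-matrix, and then read off the required block index from Fiedler--Pt\'ak for $\tilde M^T$ via $(\tilde M^T z)_j = (M^T y)_j$; the converse is the obvious embedding of a representative-submatrix test vector as a single-row-per-block $y$. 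For (d) and (e) you use Gordan's theorem: infeasibility of the strict sign system for a given $\sigma$ produces nonnegative multipliers that, after rescaling by the relevant signs, yield a block vector violating (c) (resp.\ a vector violating (b)), and in the (e) direction the mixed-sign hypothesis on each block is exactly what guarantees that the constructed $x$ produces the required nonpositive product $x_j(Mx)^j_{i^*} \leq 0$ for some $i^* \in [b_j]$. Both approaches deliver the theorem, but yours trades the oriented-matroid abstraction for classical LP duality and a direct averaging trick, which is more elementary and arguably easier to verify, at the cost of losing the structural explanation (complementary cocircuit duality) that the oriented-matroid viewpoint provides. One small point worth making explicit in a final write-up: in the (c) argument, the block index $j$ produced by Fiedler--Pt\'ak for $\tilde M^T$ necessarily has $z_j \neq 0$, hence $\bar\alpha_j > 0$, so a nonzero entry $y^j_{i^*}$ with $\beta^j_{i^*} > 0$ exists and inherits sign $s_j$; this is what converts $z_j(\tilde M^T z)_j > 0$ into the entrywise statement $y^j_{i^*}(-M^T y)_j < 0$ required by (c).
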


For square matrices, equivalence of (a) and (b) is due to Fiedler and Pt\'{a}k \cite{FiePta:On-matrices-with}. The equivalence extends to block matrices. Both the implication $(b) \Longrightarrow (a)$ and the contrapositive of $(a) \Longrightarrow (b)$ directly follow from the square case. Since the simplest form of a \emph{principal pivot transform} (ppt) of $M$, in terms of characterization (b), corresponds to a single exchange $x_j \leftrightarrow (Mx)^j_i$, P-matrices are closed under ppts. Condition (c) is the dual statement of (b) and can be proven in the combinatorial setting of oriented matroids. Characterizations (d) and (e) are dual to each other and can as well be proven using oriented matroid theory. A proof for the square case is contained in \cite{FonFukKla:Combinatorial}. Characterization (f) connects the P-property to the linear complementarity theory \cite{cottle1970generalization, habetler1995existence}.

Positive row and column scaling operations preserve the P-property.

\begin{lem} \label{lem:PGLCPScaling}
 Let $M \in \mathbb{R}^{m \times n}$ be a P-matrix of type $b \in \mathbb{N}^n$. Let $L \in \mathbb{R}^{m \times m}$ and $H \in \mathbb{R}^{n \times n}$ be positive diagonal matrices. The P-GLCP$(M,q)$ and the P-GLCP$(LMH,Lq)$ for $q \in \mathbb{R}^m$ induce the same USO of the grid G$(b + \mathbf{1})$.
\end{lem}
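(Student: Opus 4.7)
The plan is to compare the two edge-orientation formulas directly. Recall that for a P-GLCP$(M,q)$ of type $b$, the orientation of the edge between adjacent vertices $B$ and $C=(B\setminus\{(j,i)\})\cup\{(j,k)\}$, with $N:=N(b+\mathbf{1})\setminus B$, is determined by the sign of $([I|-M]^{-1}_N q)^j_k$. Since positive row and column scaling preserves the P-property, the scaled instance $(LMH,Lq)$ is again a P-GLCP, and it suffices to show that the sign of the corresponding entry in $[I|-LMH]^{-1}_N(Lq)$ is the same for every edge.

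The key step I would carry out is a single algebraic factorization. Writing $L=\diag(\ell)$ with $\ell\in\mathbb{R}^m$ of type $b$ and $H=\diag(h)$ with $h\in\mathbb{R}^n$, I would introduce the positive diagonal matrix $D$ indexed by $N(b+\mathbf{1})$ defined by
\[
 D^j_{i,i} := \ell^j_i \text{ for } i \in [b_j], \qquad D^j_{b_j+1,b_j+1}:=1/h_j \text{ for } j \in [n],
\]
and verify column-by-column the identity
\[
 [I|-LMH] \;=\; L \, [I|-M] \, D^{-1}.
\]
The first $m$ columns give $L\cdot I \cdot L^{-1}=I$, and the last $n$ columns give $L\cdot(-M)\cdot H = -LMH$, so the identity holds.

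Because $D$ is diagonal, restriction to any column subset $N$ commutes with it, hence $[I|-LMH]_N = L\,[I|-M]_N\,D^{-1}_{NN}$, where $D_{NN}$ is again a positive diagonal matrix. Inverting (which is legitimate because both problems are P-GLCPs, so the representative submatrices are nonsingular) and right-multiplying by $Lq$ yields
\[
 [I|-LMH]^{-1}_N(Lq) \;=\; D_{NN}\,[I|-M]^{-1}_N L^{-1} L q \;=\; D_{NN}\,[I|-M]^{-1}_N q.
\]
Since $D_{NN}$ is a positive diagonal matrix, it preserves the sign of every component, so $([I|-LMH]^{-1}_N(Lq))^j_k<0$ iff $([I|-M]^{-1}_N q)^j_k<0$ for every $(j,k)\in N$. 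Hence both P-GLCPs induce the same orientation on each edge of G$(b+\mathbf{1})$.

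I do not anticipate a genuine obstacle here; the only delicate point is bookkeeping the block indexing so that the diagonal matrix $D$ exactly lines up the two extra columns contributed by $-M$ with the scaling introduced by $H$, while the $m$ original columns coming from $I$ absorb the row scaling by $L$. Once that factorization is in place, the conclusion is immediate from the sign-preservation property of positive diagonal matrices.
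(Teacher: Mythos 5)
Your proof is correct and rests on the same underlying observation as the paper's: the row scaling by $L$ and column scaling by $H$ amount to a positive rescaling of the components of the basic solutions, which leaves every sign unchanged. The paper phrases this via the parametrization $[Mz+q\,|\,z]$ versus $[LMz+Lq\,|\,H^{-1}z]$ of the two affine feasible sets, whereas you make the same fact explicit through the factorization $[I\,|\,{-LMH}]=L\,[I\,|\,{-M}]\,D^{-1}$; your version spells out the final sign-preservation step ($D_{NN}$ positive diagonal) that the paper leaves implicit, so it is, if anything, a cleaner write-up of the same argument.
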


\begin{proof}
 The feasible vectors $x \in \mathbb{R}^{m+n}$ to the system $[I|-M]x=q$, where $I$ is supposed to be a horizontal block matrix of type $b$, are the vectors $[Mz+q|z]$ for $z \in \mathbb{R}^n$. Similarly, the feasible vectors to the system $[I|-LMH]x=Lq$ are the vectors $[LMz+Lq|H^{-1}z]$ for $z \in \mathbb{R}^n$. For any maximal complementary $B \subset N(b+ \mathbf{1})$, we have $[Mz+q|z]_B=0$ for some unique $z$ if and only if $[LMz+Lq|H^{-1}z]_B=0$.
\end{proof}

%\begin{proof}
 %A pair $(w,z)$ is a solution to P-GLCP$(M,q)$ if and only if $(Lw, D^{-1}z)$ is a solution to P-GLCP$(LMD,Lq)$.
%\end{proof}

For a P-matrix $M \in \mathbb{R}^{m \times n}$ of type $b \in \mathbb{N}^n$, the matrix $\mathbf{S}MS$, where $S$ is an $n \times n$ \emph{signature matrix}, is a P-matrix of the same type. The matrix $\mathbf{S}$ denotes the diagonal block matrix with $n$ blocks whose $j$th block is $s_{jj}I$ of dimension $b_j \times b_j$.

\subsubsection*{Z-property}

A \emph{square Z-matrix} is an $n \times n$ matrix whose off-diagonal elements are all nonpositive. A \emph{block Z-matrix} is a block matrix whose every representative submatrix is a square Z-matrix.

Ordinary Z-LCPs have many nice properties. For instance, a square matrix $M \in \mathbb{R}^{n \times n}$ is a Z-matrix if and only if whenever an LCP($M,q$) for $q \in \mathbb{R}^n$ is feasible, it has a solution that is the least element of the feasible region~\cite{tamir1974leastelementZ}. The least-element theory was generalized to block matrices by Ebiefung and Kostreva \cite{ebiefung1997generalized}. %We do not go into further details here.% because such a characterization is of little relevance for this work. %Here, we propose a dual characterization instead.

Chandrasekaran's method for Z-LCPs \cite{Cha:A-special} terminates in linear number in $n$ of iterations with either a solution or a certificate that no solution exists. The method generalizes to the Z-GLCP but is not polynomial anymore \cite{ebiefung1997algorithm}.

\subsubsection*{K-property}

A \emph{square K-matrix} is a square Z-matrix that is also a P-matrix.

\begin{defn}
 A \emph{block K-matrix} is a block matrix whose every representative submatrix is a square K-matrix.
\end{defn}

%We summarize alternative characterizations of the K-property.

\begin{thm} \label{thm:eqBlockZ}
 Let $M \in \mathbb{R}^{m \times n}$ be a block Z-matrix of type $b \in \mathbb{N}^n$. The following statements are equivalent.
 \vspace{+2mm}
 \begin{compactenum}[\rm(a)]\itemsep2mm
  \item Block matrix $M$ is a P-matrix.
  \item There exists $x \in \mathbb{R}^n > \mathbf{0}$ with $Mx > \mathbf{0}$.
	\item Every representative matrix $C$ of $M$ is nonsingular and such that $C^{-1} \geq \mathbf{0}$.
	\item For every $p \in \mathbb{R}^n > \mathbf{0}$, the feasible region of $M^Tu \leq p$ with $u \geq \mathbf{0}$ is combinatorially equivalent to the grid G$(b + \mathbf{1})$.
	\item There exists $s \in \mathbb{R} > 0$ such that every representative matrix of $M$ can be represented as $sI-S$, where $s > \rho(S)$ and $S \geq \mathbf{0}$.
 \end{compactenum}
\end{thm}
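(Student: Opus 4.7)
The plan is to deduce each equivalence from the corresponding classical square-case result, applied to every representative submatrix of $M$. Recall that for a square Z-matrix $C$ the following are simultaneously equivalent: $C$ is a P-matrix; $Cx>\mathbf{0}$ for some $x>\mathbf{0}$; $C$ is nonsingular with $C^{-1}\geq\mathbf{0}$; and $C=sI-S$ for some $S\geq\mathbf{0}$ with $s>\rho(S)$. Applied representative-by-representative, this immediately gives (a) $\Leftrightarrow$ (c). For (a) $\Leftrightarrow$ (e), a \emph{uniform} $s$ is obtained by choosing $s$ strictly larger than all diagonal entries $m^j_{ii}$ of $M$; the Z-property then forces $S_C:=sI-C\geq\mathbf{0}$ for every representative submatrix $C$, and the remaining requirement $s>\rho(S_C)$ coincides with the P-property of $C$.

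For (a) $\Leftrightarrow$ (b), the forward direction follows directly from Theorem~\ref{thm:altBlockP}(d) applied with $\sigma=\mathbf{1}$, which supplies $x>\mathbf{0}$ with $(Mx)^j>\mathbf{0}$ for every $j$, i.e., $Mx>\mathbf{0}$. For the converse, given such an $x$, every representative submatrix $C$ of $M$ inherits $Cx>\mathbf{0}$ (since $Cx$ is a subvector of $Mx$), and the square-case characterization forces $C$ to be a K-matrix.

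The main work is (a) $\Leftrightarrow$ (d). Given a maximal complementary $B\subseteq N(b+\mathbf{1})$, I would partition $[n]$ into the set $J$ of indices where $B$ selects a column coming from $M^T$ and its complement $K:=[n]\setminus J$, on which $B$ selects the appended identity column. Simultaneously permuting rows and columns so that $J$ precedes $K$, the submatrix $[M^T|I]_B$ takes the block-triangular form
\[
\begin{pmatrix} C^T & 0 \\ D & I \end{pmatrix},
\]
where $C$ is a $|J|\times|J|$ principal submatrix of a full representative submatrix of $M$ and $D\leq\mathbf{0}$ since its entries are off-diagonal entries of $M$. Assuming (a), the submatrix $C$ is itself a square K-matrix (K-matrices are closed under principal submatrices), so $C^{-T}\geq\mathbf{0}$ and $u:=C^{-T}p_J>\mathbf{0}$ whenever $p_J>\mathbf{0}$; the matching slack $p_K-Du\geq p_K>\mathbf{0}$ then delivers $[M^T|I]_B^{-1}p>\mathbf{0}$, establishing (d). Conversely, specializing (d) to $p=\mathbf{1}$ and to bases with $J=[n]$ shows that every full representative submatrix $C$ of $M$ admits some $u>\mathbf{0}$ with $C^Tu>\mathbf{0}$; the square-case characterization applied to the Z-matrix $C^T$ then yields that $C^T$, and hence $C$, is a K-matrix. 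The main obstacle will be recording the block-triangular decomposition cleanly enough that the sign information on $D$ and the K-matrix structure of $C$ can be read off without case analysis.
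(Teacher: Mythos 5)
Your argument is correct, but it takes a noticeably different and more self-contained route for the crux, namely (a) $\Leftrightarrow$ (d). The paper does not prove this equivalence directly: it instead observes that every block Z-matrix is a block hidden Z-matrix with the canonical witness $(I, M, \mathbf{1}, \mathbf{0})$ and then invokes Theorem~\ref{thm:eqBlockHiddenZ}, whose proof passes through Motzkin's theorem of the alternative (for the hidden-Z characterization) and a Cramer's-rule sign argument; similarly (a) $\Leftrightarrow$ (b) is delegated to the cited reference. You avoid the hidden-Z detour entirely: the permutation of a maximal complementary $B$ into the index sets $J$ and $K$, exposing the block-triangular form $\bigl(\begin{smallmatrix} C^T & 0 \\ D & I\end{smallmatrix}\bigr)$ with $C$ a principal K-submatrix and $D\leq \mathbf{0}$, makes the positivity of $[M^T|I]_B^{-1}p$ manifest, and the converse falls out by specializing to $J=[n]$ and $p=\mathbf{1}$. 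This is arguably cleaner and more elementary, and it also explicitly supplies the uniform $s$ needed for (e) and the $\sigma=\mathbf{1}$ specialization of Theorem~\ref{thm:altBlockP}(d) for (a) $\Rightarrow$ (b), both of which the paper leaves implicit. Two minor points worth recording when writing this up: that $C^{-T}p_J>\mathbf{0}$ (not merely $\geq\mathbf{0}$) uses that a nonsingular nonnegative matrix has no zero row, and that the entries of $C$ depend only on the choices made in $J$, so $C$ really is a principal submatrix of \emph{some} representative submatrix of $M$.
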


For square matrices, equivalence of (a), (b), and (c) is due to Fiedler and Pt\'{a}k \cite{FiePta:On-matrices-with}. Equivalence of (a) and (c) for block matrices directly follows from the square case. Equivalence of (a), (b), and (d) for block matrices follows from Theorem \ref{thm:eqBlockHiddenZ} below. Condition (d) is a dual characterization. In characterization (e), which has been proposed by Ostrowski \cite{ostrowskiKmatrix} for square matrices, $\rho(\cdot)$ denotes the spectral radius. A proof for block matrices is immediately obtained.

\begin{defn} \label{lem:stochK}
 A \emph{stochastic K-matrix} is a block matrix $M$ for which there exists $\gamma \in [0,1)$ such that every representative submatrix is of the form $I - \gamma S$ for some $S \geq \mathbf{0}$ with $S \mathbf{1} = \mathbf{1}$.
\end{defn}

 Every stochastic K-matrix is a K-matrix. The Z-property is obviously satisfied. Moreover, for every representative submatrix $I-\gamma S$, we have $\rho(\gamma S) < 1$ because $S$ is a \emph{rowstochastic} matrix and thus $\rho(S)=1$. Hence $I-\gamma S$ is a square K-matrix.

Every K-matrix can be transformed into a stochastic K-matrix by appropriately scaling the rows and columns. %The form is not unique.

\begin{lem} \label{lem:NormFormComp}
 Let $M \in \mathbb{R}^{m \times n}$ be a K-matrix of type $b \in \mathbb{N}^n$. There exists positive diagonal matrices $L \in \mathbb{R}^{m \times m}$ and $H \in \mathbb{R}^{n \times n}$ such that $LMH$ is a stochastic K-matrix of type $b$.
\end{lem}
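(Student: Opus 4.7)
The plan is to diagonally scale the rows and columns of $M$ so that the scaled block matrix $LMH$ satisfies, entry-wise, (i) every row sums to a common value $1 - \gamma$, (ii) each diagonal entry $(LMH)^j_{ij}$ is at most $1$, and (iii) each off-diagonal entry $(LMH)^j_{ik}$ with $k \neq j$ is nonpositive. Under these entry-wise conditions, setting $s^j_{ik} := (\delta_{jk} - (LMH)^j_{ik})/\gamma$ produces a nonnegative row-stochastic matrix $S$, and each representative submatrix of $LMH$ then takes the form $I - \gamma S_R$ demanded by Definition \ref{lem:stochK}.

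First I would apply Theorem \ref{thm:eqBlockZ}(b) to the block K-matrix $M$ to obtain $x \in \mathbb{R}^n$ with $x > \mathbf{0}$ and $r := Mx > \mathbf{0}$, and set $H := \diag(x)$. Since $M$ is a Z-matrix and $x$ is positive, the inequality $r^j_i = m^j_{ij} x_j + \sum_{k \neq j} m^j_{ik} x_k \leq m^j_{ij} x_j$ holds for every $(j,i) \in N(b)$, and the K-property ensures $m^j_{ij} > 0$. In particular, the ratios $\alpha^j_i := r^j_i / (m^j_{ij} x_j)$ lie in $(0, 1]$.

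I would then choose $\gamma := 1 - \min_{(j,i)} \alpha^j_i \in [0, 1)$ and define $L := \diag(l)$ via $l^j_i := (1 - \gamma)/r^j_i > 0$. Condition (i) is immediate from $(LMH \mathbf{1})^j_i = l^j_i r^j_i = 1 - \gamma$; condition (ii) follows because the diagonal entry equals $(1-\gamma)/\alpha^j_i \leq 1$ by the choice of $\gamma$; condition (iii) is preserved since positive diagonal scaling leaves signs unchanged. The verification of Definition \ref{lem:stochK} is then a one-line calculation. If $\gamma = 0$ happens to occur (exactly when all $\alpha^j_i$ equal $1$), every representative submatrix of $LMH$ is the identity and any nonnegative row-stochastic $S$ trivially realises the required form.

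The main subtlety, which really is the content of the lemma, is that $\gamma$ must simultaneously stay strictly below $1$ and be large enough to bound every diagonal entry by $1$. The former requires $r^j_i > 0$ for every $(j,i)$, and this is exactly what the block P-property of the Z-matrix $M$ delivers via Theorem \ref{thm:eqBlockZ}; the latter forces $\gamma \geq 1 - \alpha^j_i$ rowwise, and the Z-property is what keeps $\alpha^j_i \leq 1$ so that a valid $\gamma$ exists at all. The proof therefore really hinges on using the K-property twice: once to guarantee a strictly positive $r$, and once to control the diagonal.
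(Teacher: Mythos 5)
Your proof is correct and takes essentially the same route as the paper: obtain $x>\mathbf{0}$ with $Mx>\mathbf{0}$ via Theorem~\ref{thm:eqBlockZ}(b), set $H:=\diag(x)$, and scale rows so that every row of $LMH$ sums to a common value $1-\gamma$ with the largest diagonal entry normalised to $1$. The paper derives $\gamma$ by first scaling rows to an arbitrary constant $c$ and then dividing by the largest diagonal $t$; your $\gamma = 1-\min_{(j,i)}\alpha^j_i$ is exactly this $(t-c)/t$ computed explicitly, with the added benefit of making the bound $\alpha^j_i\le 1$ (hence $\gamma<1$) and the degenerate case $\gamma=0$ fully visible.
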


\begin{proof}
 According to (b) in Theorem \ref{thm:eqBlockZ}, there exists an $x>\mathbf{0}$ with $Mx>\mathbf{0}$. Let $H:=\diag(x)$ and $L \in \mathbb{R}^{m \times m}$ be positive diagonal matrices such that $LMH\mathbf{1}=\mathbf{c}$ for any constant $c > 0$; here, we use that $Mx > \mathbf{0}$. Matrix $LMH$ is obviously a block K-matrix of the same type as $M$. Let $t$ be equal to the largest diagonal element that appears in any representative submatrix of $LMH$. Then, every representative submatrix can be represented as $tI - T$ for some $T \geq \mathbf{0}$. Since $(tI - T)\mathbf{1} = \mathbf{c}$, we have $t \geq c > 0$. Now, if we would have chosen $c \slash t$ instead of $c$, we would get $I - \gamma S$, where $\gamma := (t-c) \slash t$ and $S:=1 \slash (t-c) T \geq \mathbf{0}$. Note that $\gamma \in [0,1)$ and $S \mathbf{1} = \mathbf{1}$.
\end{proof}

We call $LMH$ a \emph{stochastic form} of the K-matrix $M$. Note that stochastic forms are not uniquely determined. Consider, for instance, the identity matrix $I$ of any order. Then $I=I-0I$ but also $LI=I-1/2I$ for $L:=\diag(1/2, \ldots, 1/2)$. The construction scheme outlined in the proof of Lemma~\ref{lem:NormFormComp} computes a stochastic form whose factor is minimal for fixed $x$. At the moment, it is not clear how exactly to find a stochastic form with overall minimal factor. See also the remarks accompanying Lemma~\ref{cor:StochFormHidK} below.  

%\begin{thm}
 %Let $M \in \mathbb{R}^{m \times n}$ be a K-matrix of type $b \in \mathbb{N}^n$. The optimal solution $(\gamma^*,x^*)$ to the LP 
 %$$\min \gamma \text{ subject to } Mx \geq (1-\gamma) \mathbf{1} \text{ and } M \diag(x) \leq E(b)$$
 %is such that $L M \diag(x^*)$, where $L \in \mathbb{R}^{m \times m}$ is the positive diagonal matrix with $LM \diag(x^*) = (1-\gamma^*)\mathbf{1}$, is a stochastic form of $M$ with minimal factor.
%\end{thm}

%\begin{proof}
 %Suppose that $LMH$ for some positive diagonal matrices $L \in \mathbb{R}^{m \times m}$ and $H \in \mathbb{R}^{n \times n}$ is a stochastic form of $M$ with factor $\gamma$. 
%Let $s$ be the smallest such that $M((1 \slash s)H) \leq E(b)$. Let $t$ be the inverse of the largest element in $(sL)$. From $LMH \mathbf{1} = (1- \gamma) \mathbf{1}$, it follows that $((1\slash t)(sL))M((1 \slash s)H)\mathbf{1}=(1\slash t)(1-\gamma) \mathbf{1}$. Then $M((1 \slash s)H)=(t(sL)^{-1}(1\slash t)(1-\gamma) \mathbf{1} \geq (1-\gamma) \mathbf{1}$.
%Let $s$ be the largest element in $L$. From $LMH \mathbf{1} = (1- \gamma) \mathbf{1}$, it follows that $LM(sH)\mathbf{1}=s(1-\gamma) \mathbf{1}$. Then $M(sH)=L^{-1}s(1-\gamma) \mathbf{1} \geq (1-\gamma) \mathbf{1}$. Do we have $M(sH) \leq E(b)$?
%\end{proof}

\begin{defn}
 A \emph{stochastic K-GLCP} is a GLCP$(M,q)$ with a stochastic K-matrix $M$.
\end{defn}

Note that, by Lemma \ref{lem:NormFormComp} together with Lemma \ref{lem:PGLCPScaling}, every K-USO of a grid is realized by some stochastic K-GLCP.

Every principal pivoting algorithm solves the ordinary K-LCP of order $n$ in at most $2n$ pivot steps, regardless of the applied pivot rule and the initial complementary basis~ \cite{FonFukGar:Pivoting}. The K-GLCP, on the contrary, is at least as difficult as linear programming over grids. See Sections \ref{sec:HidKGLCPLP}, \ref{sec:DiscMDPs}, and \ref{sec:HidKKRelation} for details.

\subsubsection*{Hidden Z-property}

Mangasarian~\cite{Man:Linear} proposed the hidden Z-property for square matrices in order to solve certain LCPs as LPs. See also Section \ref{sec:HidKGLCPLP} for further details.
Here, we directly proceed with the generalization to block matrices, which is due to Mohan and Neogy~\cite{mohan1997vertical}.

\begin{defn}
 A \emph{block hidden Z-matrix} is a block matrix $M \in \mathbb{R}^{m \times n}$ of type $b \in \mathbb{N}^n$ for which there exist a square Z-matrix $X \in \mathbb{R}^{n \times n}$ and a block Z-matrix $Y \in \mathbb{R}^{m \times n}$ of type $b$ such that
 \begin{equation} \label{eq:blockHiddenK}
  \begin{aligned}
	   & MX=Y \text{ and } \\
     & X^T r + Y^T s > \mathbf{0} \text{ for some } r \in \mathbb{R}^n \geq \mathbf{0} \text{ and } s \in \mathbb{R}^m \geq \mathbf{0}.
    \end{aligned}
  \end{equation}
\end{defn}

%It would be interesting to know whether a block matrix is a block hidden Z-matrix if and only if every representative submatrix is a square hidden Z-matrix---analogue to the P, Z, and K-property.
The tuple $(X,Y,r,s)$ is a \emph{hidden Z-witness} of $M$. Block Z-matrix $Y$ is redundant, and thus may be omitted. Hidden Z-witnesses are not uniquely determined.

\begin{lem} \label{lem:WitnessPropScaling}
 Let $M \in \mathbb{R}^{m \times n}$ be a hidden Z-matrix with witness $(X,Y,r,s)$. A tuple $(XD,YD,r,s)$, where $D:=\diag (d)$ for any $d \in \mathbb{R}^n > \mathbf{0}$, is an alternative hidden Z-witness of $M$.
  %\vspace{+2mm}
  %\begin{compactenum}[\rm(a)]\itemsep2mm
   %\item Let $D:=\diag (d)$ for any $d \in \mathbb{R}^n > 0$. The tuple $(XD,r,s)$ is a hidden Z-witness of $M$.
	 %\item Let $D:=\diag (d)$ for any $d \in \mathbb{R}^m > 0$. The block matrix $DM$ is a hidden Z-matrix with witness $(X,r,D^{-T}s)$. \qed
	 %\item Let $A$ be a block P-matrix in addition and $D:=\diag (d)$ for any $d \in \mathbb{R}^m > 0$. The GLCP($A,q$) and the GLCP($DA,Dq$) yield the same grid USO.
  %\end{compactenum}
\end{lem}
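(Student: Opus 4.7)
The plan is to verify the four defining conditions of a hidden Z-witness for the tuple $(XD,YD,r,s)$ one by one; all four follow immediately from the original witness property of $(X,Y,r,s)$ together with the fact that $D$ is a positive diagonal matrix.

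First I would check that $XD$ is a square Z-matrix. Since $D$ is positive diagonal, right-multiplication by $D$ rescales each column of $X$ by a positive factor, so the signs of off-diagonal entries are preserved: $(XD)_{ij}=X_{ij}d_j$ is nonpositive whenever $i\neq j$. The same column-scaling argument shows that $YD$ is a block Z-matrix, because any representative submatrix of $YD$ has the form $Y_B D$ for the corresponding representative submatrix $Y_B$ of $Y$, and the Z-property of $Y_B$ is preserved under right-multiplication by the positive diagonal matrix $D$.

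Next I would verify the product equation $M(XD)=YD$ by associativity, using the hypothesis $MX=Y$: $M(XD)=(MX)D=YD$. Finally, for the strict positivity condition, since $D$ is diagonal (so $D^T=D$) and entrywise positive,
\[
(XD)^T r + (YD)^T s \;=\; D\bigl(X^T r + Y^T s\bigr) \;>\; \mathbf{0},
\]
where the strict inequality follows componentwise from $X^T r + Y^T s > \mathbf{0}$ together with the positivity of the diagonal entries of $D$. Note that $r$ and $s$ are carried over unchanged, so no compatibility issue arises.

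I do not expect any genuine obstacle here; every step is essentially one line. The only point that merits a moment's care is the transfer of the Z-property from $Y$ to $YD$ at the level of representative submatrices, so I would be explicit about writing a representative submatrix of $YD$ as $Y_B D$ rather than conflating the block-level and entry-level statements.
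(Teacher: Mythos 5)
Your proof is correct and is exactly the direct verification the lemma calls for; the paper omits the proof entirely, presumably because each step is as immediate as you observe. The one point worth being careful about, which you handled properly, is that right-multiplication by $D$ acts by positive column scaling, so the off-diagonal sign pattern of $X$ and of each representative submatrix $Y_B$ is preserved, and the computation $(XD)^T r + (YD)^T s = D(X^T r + Y^T s) > \mathbf{0}$ uses $D^T = D$ and positivity of the diagonal entries.
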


\begin{lem}[\cite{mohan1997vertical}] \label{lem:WitnessProp}
 Let $(X,Y,r,s)$ be a hidden Z-witness of a block matrix. The square Z-matrix $X$ is nonsingular and some representative submatrix of $[Y|X]$ is a square K-matrix.
\end{lem}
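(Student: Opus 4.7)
The plan is to produce the claimed K-submatrix of $N := [Y|X]$ via a row-averaging argument, and then derive nonsingularity of $X$ as a corollary. Observe first that $N$ is a block Z-matrix of type $b + \mathbf{1}$: the off-diagonal entries of any representative submatrix come from the Z-matrices $X$ and $Y$, so are nonpositive. Collect the witness into one block vector $\tilde{w} \geq \mathbf{0}$ of type $b + \mathbf{1}$ via $\tilde{w}^j_i := s^j_i$ for $i \in [b_j]$ and $\tilde{w}^j_{b_j + 1} := r_j$, so the hidden Z-condition reads $\tilde{w}^T N > \mathbf{0}^T$.

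The central step is to extract from this positivity an $n \times n$ weighted-average matrix that is K. For any column $k$, all contributions to $(\tilde{w}^T N)_k$ with $j \neq k$ multiply Z off-diagonals of $X$ or $Y$ by nonnegative weights and are therefore nonpositive; strict positivity of the whole sum then forces the intra-block part $\sum_i \tilde{w}^k_i N^k_{ik}$ to be strictly positive, and in particular $\hat{w}_k := \sum_i \tilde{w}^k_i > 0$ for every $k$. I would then define $\hat{N} \in \mathbb{R}^{n \times n}$ whose $j$-th row is the convex combination $\hat{w}_j^{-1} \sum_i \tilde{w}^j_i N^j_{i \cdot}$. The off-diagonals of $\hat{N}$ are convex combinations of nonpositive numbers, so $\hat{N}$ is Z, and unwinding definitions gives $\hat{w}^T \hat{N} = \tilde{w}^T N > \mathbf{0}^T$ with $\hat{w} > \mathbf{0}$. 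Applying the equivalence (a) $\Leftrightarrow$ (b) of Theorem~\ref{thm:eqBlockZ} to the square Z-matrix $\hat{N}^T$ shows $\hat{N}^T$ is a square K-matrix; since Z and P are both preserved by transposition, so is $\hat{N}$.

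A second application of Theorem~\ref{thm:eqBlockZ}(b), this time to $\hat{N}$, produces $v > \mathbf{0}$ with $\hat{N} v > \mathbf{0}$. For each block $j$, the entry $(\hat{N} v)_j = \sum_i (\tilde{w}^j_i / \hat{w}_j) (N^j_{i \cdot} v)$ is a positive convex combination of the scalars $N^j_{i \cdot} v$, so some index $i_j^* \in [b_j + 1]$ satisfies $N^j_{i_j^* \cdot} v > 0$. The representative submatrix $A$ of $N$ assembled from the rows $\{(j, i_j^*)\}_{j \in [n]}$ is Z and satisfies $A v > \mathbf{0}$ with $v > \mathbf{0}$, so Theorem~\ref{thm:eqBlockZ}(b) yields that $A$ is a square K-matrix. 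Nonsingularity of $X$ is then immediate: any $v \neq \mathbf{0}$ with $X v = \mathbf{0}$ would give $Y v = M X v = \mathbf{0}$, so every row of $N$ would annihilate $v$, forcing $A v = \mathbf{0}$ and contradicting nonsingularity of $A$. The only delicate point of the plan is the transition from the fractional matrix $\hat{N}$ to a genuine representative submatrix; it succeeds because Theorem~\ref{thm:eqBlockZ}(b) supplies a single $v$ that serves all blocks simultaneously, letting the row selection be made block by block.
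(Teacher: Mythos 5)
Your proof is correct, but it takes a genuinely different route from the paper's. The paper works on the dual side: it considers the system $[Y^T|X^T]y = X^Tr+Y^Ts$, $y\ge\mathbf{0}$, which is feasible with $y=(s,r)$, observes that the Z-structure together with the strictly positive right-hand side forces every feasible basis $B$ to be maximal complementary and nondegenerate, and then reads off Theorem~\ref{thm:eqBlockZ}(b) for $[Y^T|X^T]_B$ directly from the (strictly positive) basic feasible solution; nonsingularity of $X$ follows by the same contradiction you use. In one step the paper produces, for \emph{every} feasible basis, a representative K-submatrix, with no auxiliary matrix required. Your approach instead builds the $n\times n$ row-averaged matrix $\hat{N}$ from the witness weights, establishes $\hat{N}$ is K by passing through its transpose, extracts a single positive vector $v$ with $\hat{N}v>\mathbf{0}$, and then does a block-by-block row selection to obtain a genuine representative submatrix $A$ with $Av>\mathbf{0}$. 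This is more constructive and avoids any basic-feasible-solution language, at the cost of two invocations of Theorem~\ref{thm:eqBlockZ} and the extra averaging step; what it buys is an explicit positive certificate $v$ for the selected submatrix. Both are sound. One cosmetic nit: you reuse the symbol $v$ for the positive vector from $\hat{N}$ and for the putative null vector of $X$ in the last paragraph; rename one of them to keep the two roles apart.
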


\begin{proof}
 Consider the system $[Y^T|X^T]y= X^T r + Y^T s$, which is obviously feasible. Since the right-hand side is strictly positive and $[Y^T|X^T]$ is a Z-matrix, every feasible basis $B \subseteq N(b+\mathbf{1})$ is maximal complementary and nondegenerate. Hence, the corresponding representative submatrices $[Y^T|X^T]_B$ satisfy condition (b) in Theorem \ref{thm:eqBlockZ}, and thus are K-matrices. Now, suppose that $X$ is singular, i.e., we have $Xv=\mathbf{0}$ for some $v \neq \mathbf{0}$. Then $MXv=Yv=\mathbf{0}$, and thus $[Y|X]_Bv=\mathbf{0}$, which is a contradiction.
\end{proof}

For a block matrix $M$ of type $b$, being a hidden Z-matrix basically means that block matrix $[M^T|I]$ of type $b + \mathbf{1}$ behaves like a horizontal Z-matrix---matrices $[M^T|I]$ and $X^T[M^T|I]=[Y^T|X^T]$ have identical null and row space, where the latter matrix satisfies the Z-property.

Next, we propose a dual characterization, which is probably more illustrative than the original definition.

\begin{thm} \label{thm:hiddenZDualChar}
 A block matrix $M \in \mathbb{R}^{m \times n}$ of type $b \in \mathbb{N}^n$ is a hidden Z-matrix if and only if there exists $p \in \mathbb{R}^{n}$ such that the system $M^T u \leq p$, $u \geq \mathbf{0}$ is feasible and for every feasible block vector $u \in \mathbb{R}^{m}$ of type $b$, either $u^j \neq \mathbf{0}$ or $(M^T u)_j < p_j$ for each $j \in [n]$. Moreover, the vectors $X^{-T}v$ for $v \in \mathbb{R}^n > \mathbf{0}$, where ($X,Y,r,s$) is any hidden Z-witness of $M$, are the valid choices for such $p$.
\end{thm}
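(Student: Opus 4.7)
The plan is to prove both directions, with Lemma~\ref{lem:WitnessProp} doing the structural work on the ``only if'' side and Farkas' lemma driving the ``if'' side.

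For the ``only if'' direction I fix a witness $(X,Y,r,s)$ and set $p := X^{-T} v$ for an arbitrary $v > \mathbf{0}$; Lemma~\ref{lem:WitnessProp} guarantees that $X$ is nonsingular. The key algebraic move is to multiply $M^T u + w = p$ on the left by $X^T$ and use $Y = MX$: the LP equation becomes $[Y^T|X^T](u,w)^T = v$. Feasibility then follows because Lemma~\ref{lem:WitnessProp} also provides a K-representative of $[Y^T|X^T]$, whose nonnegative inverse applied to $v > \mathbf{0}$ yields a nonnegative $(u,w)$; nonsingularity of $X$ pulls this back to a feasible pair for $M^T u \leq p$. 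For the block-positivity clause, given any feasible $(u,w)$ I apply the standard Z-matrix argument row-by-row to $[Y^T|X^T](u,w)^T = v > \mathbf{0}$: the off-block-diagonal entries of $[Y^T|X^T]$ in row $j$ are precisely the entries $Y^k_{i,j}$ and $X_{k,j}$ for $k \neq j$, all of which are $\leq 0$ by the Z-property of $Y$ and $X$. So in row $j$ those contributions are nonpositive, forcing the block-diagonal contribution to be strictly positive, whence $z^j := (u^j, w_j) \neq \mathbf{0}$.

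For the ``if'' direction I reconstruct a witness column-by-column via Farkas' lemma. Fixing $j^* \in [n]$, the hypothesis says that the augmented system $\{M^T u + w = p,\; u, w \geq \mathbf{0},\; u^{j^*} = \mathbf{0},\; w_{j^*} = 0\}$ is infeasible, so Farkas supplies $y^{(j^*)} \in \mathbb{R}^n$ with $m^k_{i \cdot} y^{(j^*)} \leq 0$ and $y^{(j^*)}_k \leq 0$ for all $k \neq j^*$ and $i \in [b_k]$, together with $p^T y^{(j^*)} > 0$. I then stack the $y^{(j^*)}$ as the columns of a matrix $X \in \mathbb{R}^{n \times n}$: the sign conditions on the off-diagonal coordinates of each $y^{(j^*)}$ are exactly the Z-property of $X$, and the conditions $m^j_{i\cdot} y^{(k)} \leq 0$ for $j \neq k$ translate directly into the Z-property of every representative of $Y := MX$. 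To furnish $r$ and $s$, I take any feasible $(u,w)$ (which exists by hypothesis), set $r := w$ and $s := u$, and observe that multiplying $M^T u + w = p$ by $X^T$ yields $Y^T s + X^T r = X^T p$, whose $j$-th coordinate is $p^T y^{(j)} > 0$; thus $(X,Y,r,s)$ is a hidden Z-witness.

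The ``moreover'' clause drops out: the backward construction exhibits $p = X^{-T} v$ with $v := X^T p > \mathbf{0}$ (using nonsingularity of $X$ from Lemma~\ref{lem:WitnessProp} applied to the just-constructed witness), while in the forward direction $v$ was arbitrary. The subtlest step I expect is the Farkas bookkeeping in the ``if'' direction: the infeasible subsystem has to zero out an \emph{entire} block $z^{j^*}$ rather than a single entry, the alternative certificate must be read off as sign conditions on both the diagonal and the off-block-diagonal coordinates, and those sign conditions then have to simultaneously certify $X$ as a square Z-matrix and $Y = MX$ as a block Z-matrix. Getting this translation between the Farkas dual and the two Z-conditions is the real content of the theorem.
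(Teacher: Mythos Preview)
Your proof is correct and follows essentially the same route as the paper's. The paper packages both directions through Motzkin's transposition theorem applied to the homogenized pair of subspaces $V(M,p)$ and $V(M,p)^\perp$, whereas you do the forward direction by a direct row-wise sign argument on $[Y^T|X^T](u,w)^T=v$ and the backward direction via $n$ applications of Farkas' lemma; unwinding the paper's use of Motzkin yields exactly the same sign computations and the same column-by-column extraction of $X$, so the two arguments coincide up to phrasing.
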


\begin{proof} $\Longrightarrow$. Let $(X,Y,r,s)$ be a hidden Z-witness of $M$, where $X=(x_j)_{j \in [n]}$ and $Y=(y_j)_{j \in [n]}$ are both to be understood columnwise. Consider the vector subspace
$$
V(M,p):=\left\lbrace x \in \mathbb{R}^{m + n + 1} : \, \begin{bmatrix} I & -M & \mathbf{0} \\ \mathbf{0}^T & p^T & 1 \end{bmatrix} x = \mathbf{0} \right\rbrace
$$
with $p:=X^{-T}v$ for any $v \in \mathbb{R}^n > \mathbf{0}$. Note that $V(M,p)$ contains for each $j \in [n]$ the vector $(-y_j^T, -x_j^T, v_j)^T$. By Motzkin's theorem of the alternative, the orthogonal vector subspace
$$
V(M,p)^\perp=\left\lbrace y \in \mathbb{R}^{m + n + 1} : \, \begin{bmatrix} M^T & I & -p \end{bmatrix} y = \mathbf{0} \right\rbrace
$$
contains no vector $(w^T, z^T, 1)^T$, where $w \in \mathbb{R}^m$ is of type $b$ and $z$ is in $\mathbb{R}^n$, with $w,z \geq \mathbf{0}$ while for some $j \in [n]$ both $w^j=\mathbf{0}$ and $z_j=0$. Since $X$ is nonsingular, a vector $y \in \mathbb{R}^{m + n +1}$ is in $V(M,p)^\perp$ if and only if
$\begin{bmatrix} Y^T & X^T & -v \end{bmatrix} y = \mathbf{0}.$
Since $[Y^T|X^T]$ is a K-matrix and $v > \mathbf{0}$, it follows that $M^T u \leq p$ with $u \geq \mathbf{0}$ is feasible.

$\Longleftarrow$. Let $M$ and $p$ be such that $V(M,p)^\perp$ is as supposed to be. By Motzkin's theorem of the alternative, the orthogonal vector subspace $V(M,p)$ contains for each $j \in [n]$ a vector $(y_j^T, x_j^T, 1)^T$, where $y_j \in \mathbb{R}^m$ is of type $b$ and $x_j$ is in $\mathbb{R}^n$, with $(y_j)^k \geq \mathbf{0}$ and $(x_j)_k \geq 0$ for every $k \neq j$. Let $X:=(-x_j)_{j \in [n]}$ and $Y:=(-y_j)_{j \in [n]}$, both defined columnwise. Matrices $X$ and $Y$ are Z-matrices and, by the structure of $V(M,p)$, we have $MX=Y$ and $p^TX = \mathbf{1}^T$. By assumption, vector subspace $V(M,p)^\perp$ contains a nonnegative vector $(s^T,r^T,1)^T$. Hence $p=r+M^Ts$, and thus $p^TX=r^TX+s^TMX=r^TX+s^TY$. Thus $r^TX+s^TY = \mathbf{1}^T> \mathbf{0}^T$. The tuple ($X,Y,r,s$) is a hidden Z-witness of $M$.
\end{proof}

Since the simplest form of a ppt, in terms of Theorem \ref{thm:hiddenZDualChar}, corresponds to an exchange $u^j_i \leftrightarrow p_j - (M^Tu)_j$, the hidden Z-matrices are closed under ppts. Every Z-matrix $M$ is a hidden Z-matrix---the tuple $(I, M, \mathbf{1}, \mathbf{0})$ is a witness. The Z-property is not preserved under taking ppts.

\subsubsection*{Hidden K-property}

The definition of the hidden K-property is now straightforward.

\begin{defn}
 A \emph{block hidden K-matrix} is a block hidden Z-matrix that is also a block P-matrix.
\end{defn}

\begin{thm} Let $M \in \mathbb{R}^{m \times n}$ be a block hidden Z-matrix with witness $(X,Y,r,s)$ of type $b \in \mathbb{N}^n$. The following statements are equivalent. \label{thm:eqBlockHiddenZ}
 \vspace{+2mm}
 \begin{compactenum}[\rm(a)]\itemsep2mm
  \item Block matrix $M$ is a P-matrix.
  \item There exists $x \in \mathbb{R}^n > \mathbf{0}$ with $Mx > \mathbf{0}$.
	%\item There exists $v > 0$ such that $Xv > 0$ and $Yv > 0$
	\item Block matrix $[Y | X]$ is a K-matrix.
	\item There exists $p \in \mathbb{R}^n > \mathbf{0}$ such that the feasible region of $M^Tu \leq p$ with $u \geq \mathbf{0}$ is combinatorially equivalent to the grid G$(b + \mathbf{1})$.
	%\item There exists $s \in \mathbb{R}$ such that every representative matrix $C$ of $Y^T | X^T$ can be represented as $sI-P$, where $s > \rho(P)$ and $P \geq 0$.
 \end{compactenum}
\end{thm}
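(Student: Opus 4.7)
My strategy is to establish the cycle (a) $\Rightarrow$ (b) $\Rightarrow$ (c) $\Rightarrow$ (a) and then close the statement via (c) $\Leftrightarrow$ (d). The implication (a) $\Rightarrow$ (b) is immediate from Theorem~\ref{thm:altBlockP}(d) applied with $\sigma = \mathbf{1}$: it produces an $x \in \mathbb{R}^n$ with $x > \mathbf{0}$ and $Mx > \mathbf{0}$.

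For (b) $\Rightarrow$ (c), I would start from $x > \mathbf{0}$ with $Mx > \mathbf{0}$ and set $z := X^{-1}x$; then $Xz = x > \mathbf{0}$ and $Yz = MXz = Mx > \mathbf{0}$, so $[Y|X]z > \mathbf{0}$. Theorem~\ref{thm:eqBlockZ}(b), applied to the block Z-matrix $[Y|X]$, then yields (c) once $z > \mathbf{0}$ is verified. To obtain the latter I would invoke Lemma~\ref{lem:WitnessProp} to pick a representative submatrix $C := [Y|X]_{B^*}$ that is a K-matrix and, using $Y = MX$, factor $C = R_{B^*} X$, where the $k$-th row of $R_{B^*}$ is $e_k^T$ when $B^*$ picks the bottom ($X$-)row in block $k$ and the row $m^k_{i_k, \cdot}$ of $M$ otherwise. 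Every entry of $R_{B^*} x$ is then either $x_k > 0$ or $(Mx)^k_{i_k} > 0$, so $R_{B^*} x > \mathbf{0}$. Since $C^{-1} = X^{-1} R_{B^*}^{-1} \geq 0$ is nonsingular, each of its rows contains a positive entry, forcing $z = C^{-1}(R_{B^*} x) > \mathbf{0}$.

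The same factorization handles (c) $\Rightarrow$ (a): for every representative submatrix $[Y|X]_B = R_B X$, expanding $\det R_B$ along its unit rows gives $\det R_B = \det((M_{B'})_{K,K})$, where $K \subseteq [n]$ is the set of blocks of $B$ supplied by $Y$ and $B'$ is any maximal complementary completion of $\{(k, i_k) : k \in K\}$. Because $[Y|X]$ is K, each $\det [Y|X]_B$ is positive, and $\det X > 0$ follows by choosing $B$ to select every bottom row. Dividing, every principal minor of every representative submatrix of $M$ is positive, so $M$ is a block P-matrix. For (c) $\Rightarrow$ (d), I would set $p := X^{-T} v$ for any $v > \mathbf{0}$; since $X$ is K (as a representative submatrix of $[Y|X]$), $X^{-T}$ is nonsingular and nonnegative, so $p > \mathbf{0}$. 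The transpose of $MX = Y$ reads $X^T[M^T|I] = [Y^T|X^T]$, which yields $[M^T|I]_B^{-1} p = [Y^T|X^T]_B^{-1} v$. Each $[Y^T|X^T]_B$ is the transpose of a K representative submatrix of $[Y|X]$, so its inverse is nonnegative and nonsingular, whence $[M^T|I]_B^{-1} p > \mathbf{0}$ for every $B$, and combinatorial equivalence follows.

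For the reverse (d) $\Rightarrow$ (c), setting $v := X^T p$, the same identity gives $[Y^T|X^T]_B^{-1} v > \mathbf{0}$ for every $B$; provided $v > \mathbf{0}$, each square Z-matrix $[Y^T|X^T]_B$ admits $z_B := [Y^T|X^T]_B^{-1} v > \mathbf{0}$ with $[Y^T|X^T]_B z_B = v > \mathbf{0}$, and Theorem~\ref{thm:eqBlockZ}(b) upgrades it to a K-matrix, so $[Y|X]$ is block K. The main obstacle is therefore verifying that the $p$ from (d) satisfies $X^T p > \mathbf{0}$. My plan is to argue that combinatorial equivalence implies the weak complementarity condition of Theorem~\ref{thm:hiddenZDualChar}: at every vertex $u_B$ corresponding to a maximal complementary basis, either some coordinate of the block $u_B^j$ or the slack $p_j - (M^T u_B)_j$ is strictly positive, and a convex-combination argument then propagates this ``not both zero'' dichotomy from vertices to every feasible point of $\{u : M^T u \leq p,\, u \geq \mathbf{0}\}$. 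Theorem~\ref{thm:hiddenZDualChar} then forces $p$ to be of the form $X^{-T} v$ for some $v > \mathbf{0}$, so $v = X^T p > \mathbf{0}$, closing the loop.
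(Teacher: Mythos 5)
Your arguments for (a)$\Rightarrow$(b), (b)$\Rightarrow$(c), and (c)$\Rightarrow$(a) are correct and in fact more explicit than the paper, which simply cites Mohan and Neogy for (a)$\Leftrightarrow$(b)$\Leftrightarrow$(c); the factorization $[Y|X]_B = R_B X$ is a nice way to make both directions transparent. Your (c)$\Rightarrow$(d) is essentially the paper's own argument: write $p = X^{-T}v$, use $X^T[M^T|I] = [Y^T|X^T]$ to get $[M^T|I]_B^{-1}p = [Y^T|X^T]_B^{-1}v$, and invoke nonnegativity of inverses of square K-matrices.

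The genuine problem is your (d)$\Rightarrow$(c) step. You want to conclude $X^Tp > \mathbf{0}$ by showing that $p$ satisfies the dual characterization in Theorem~\ref{thm:hiddenZDualChar}, arguing that the ``not both zero'' dichotomy at the complementary vertices propagates by convex combinations to every feasible point. This propagation requires two facts you do not establish: that the polyhedron $\{u : M^Tu \leq p,\, u \geq \mathbf{0}\}$ is bounded, and that all of its vertices are of complementary type. The definition of combinatorial equivalence used in the paper only asserts that every maximal complementary $B$ \emph{is} a nondegenerate feasible basis; it does not by itself rule out additional, non-complementary bases or unbounded edges, and on such a point the dichotomy could fail. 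This is precisely why the paper avoids closing the loop through (c) and instead proves (d)$\Rightarrow$(a) directly: for two adjacent representative submatrices $C, D$ of $[M^T|I]$ differing in column $j$, Cramer's rule gives $(C^{-1}p)_j = \det C[c_j \to p]/\det C$ and $(D^{-1}p)_j = \det D[d_j \to p]/\det D$ with identical numerators; since both left-hand sides are positive, $\sgn\det C = \sgn\det D$, and sliding to $I$ forces all these determinants (and hence, by Laplace expansion along unit columns, all principal minors) to be positive, so $[M^T|I]$ is a block P-matrix. That argument uses only the positivity of $[M^T|I]_B^{-1}p$ at complementary bases and nothing about the global polyhedral structure. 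Either supply the missing boundedness/vertex characterization, or switch your final implication to the paper's (d)$\Rightarrow$(a) Cramer's-rule argument.
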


Equivalence of (a), (b), and (c) is proven in \cite{mohan1997vertical}.

\begin{proof}[Proof of Theorem \ref{thm:eqBlockHiddenZ}]
 %$(f) \Longrightarrow (b).$ By Theorem \ref{thm:eqBlockZ}, block matrix $Y^T | X^T$ has the K-property and there exists $v > 0$ such that $Xv > 0$ and $Yv >  0$.

 %$(b) \Longrightarrow (d).$ By Theorem \ref{thm:eqZ}, we observe that if $v^T(Y^T | X^T) > 0$ for some $v > 0$, then every representative submatrix of $(Y^T | X^T)$ is a K-matrix.

%$(a) \Longrightarrow (b).$ If $M$ is a P-matrix, then $[M|I]$ is a P-matrix. Since $X$ is nonsingular (see Lemma \ref{lem:Xnonsing}), the nullspaces of $[M^T | I]$ and $X^T[M^T | I]=[Y^T | X^T]$ are the same. Hence, every representative submatrix of $[Y^T | X^T]$ must be a square P-matrix.

 $(c) \Longrightarrow (d).$ Let $p:=X^{-T} v > \mathbf{0}$ for any $v \in \mathbb{R}^n > \mathbf{0}$. Note that $M^T=X^{-T}Y^T$. Since $X$ is nonsingular, the feasible regions of $[M^T|I]y=p$ and $[Y^T | X^T] y = X^Tp$ are equivalent. Every representative submatrix $C$ of $[Y^T | X^T]$ is a K-matrix. Since $C^{-1} \geq \mathbf{0}$ is also a P-matrix, it follows that $C^{-1}X^Tp = C^{-1} v > \mathbf{0}$.

 $(d) \Longrightarrow (a).$ Pick $p \in \mathbb{R}^n$ such that the feasible region of $M^T u \leq p$ with $u \geq \mathbf{0}$ is combinatorially equivalent to the grid G$(b + \mathbf{1})$. Let $C=(c_j)_{j \in [n]}$ and $D=(d_j)_{j \in [n]}$, both to be understood columnwise, be two adjacent representative submatrices of $[M^T | I]$. Let $j \in [n]$ be the unique index with $c_j \neq d_j$. According to Cramer's rule, we have $(C^{-1}p)_j = \det C[c_j \rightarrow p] \slash \det C$ and $(D^{-1}p)_j = \det D[d_j \rightarrow p] \slash \det D$. Since $C[c_j \rightarrow p]=D[d_j \rightarrow p]$ and both $(C^{-1}p)_j > 0$ and $(D^{-1}p)_j > 0$, we have $ \sgn \det C= \sgn \det D$. Hence, the determinants of any two adjacent representative submatrices have the same nonzero sign, which must be positive because $I$ is a representative submatrix. The matrix $[M^T | I]$ is a P-matrix, and thus $M^T$ is a P-matrix.
\end{proof}

Since every Z-matrix is a hidden Z-matrix, Theorem \ref{thm:eqBlockHiddenZ} proves equivalence of (a), (b), and (d) in Theorem \ref{thm:eqBlockZ}. Every K-matrix is a hidden K-matrix.
%

 %$(c) \Longrightarrow (a).$ If $A^T$ has the P-property, then $A^T | I$ has P-property. Hence, the nullspace of $A^T | I$ contains no sign preserving vector, and thus the corresponding rowspace contains a strictly positive vector, i.e., there exists $v \in \mathbb{R}^n$ with $v^T(A^T | I) > 0$.

 %$(a) \Longrightarrow (b).$ Note that $0 < v^T(A^T | I) = v^TX^{-T}(Y^T | X^T)$. The results follows if we can show that $v^TX^{-T} > 0$.

 %$(b) \Longrightarrow (f).$ Let $v > 0$ such that $v^T(Y^T | X^T) > 0$. The implication directly follows by $(a) \Longrightarrow (f)$ in Theorem \ref{thm:eqBlockZ}.

Condition (d) is actually a dual characterization of the hidden K-property, which has originally been proposed for square matrices in a different context \cite{PanCha:Linear, MorLaw:Geometric}. We used the characterization in order to generalize the hidden K-property in the setting of oriented matroids \cite{PhD-Klaus}.

\begin{thm} \label{thm:HiddKDual}
 A matrix $M \in \mathbb{R}^{m \times n}$ of type $b \in \mathbb{N}^n$ is a hidden K-matrix if and only if there exists $p \in \mathbb{R}^{n}$ such that the feasible region of $M^T u \leq p$ with $u \geq \mathbf{0}$ is combinatorially equivalent to the grid G$(b + \mathbf{1})$. The valid choices for such $p$ coincide with the choices given in Theorem \ref{thm:hiddenZDualChar}.
\end{thm}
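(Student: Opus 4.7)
The plan is to chain Theorem \ref{thm:hiddenZDualChar} with Theorem \ref{thm:eqBlockHiddenZ}. The forward direction is short: if $M$ is hidden K, i.e., simultaneously hidden Z and a block P-matrix, then Theorem \ref{thm:eqBlockHiddenZ}, implication (a)$\,\Longrightarrow\,$(d), supplies a $p > \mathbf{0}$ with the desired combinatorial equivalence, and the proof of $(c)\,\Longrightarrow\,(d)$ there exhibits every such $p$ as $X^{-T}v$ for arbitrary $v > \mathbf{0}$ and any hidden Z-witness $(X,Y,r,s)$ of $M$, matching Theorem \ref{thm:hiddenZDualChar}'s parameterization of valid $p$.

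For the backward direction, fix $p$ making $\{u \geq \mathbf{0} : M^T u \leq p\}$ combinatorially equivalent to $\mathrm{G}(b+\mathbf{1})$. Nondegeneracy at the all-slack basis $B_0 = \{(j, b_j+1)\}_{j \in [n]}$ (for which $[M^T|I]_{B_0} = I$) forces $p > \mathbf{0}$ because the basic solution is $I^{-1}p = p$. I would then re-run the Cramer's-rule argument from the proof of $(d)\,\Longrightarrow\,(a)$ in Theorem \ref{thm:eqBlockHiddenZ}: for any two adjacent representative submatrices $C, D$ of $[M^T|I]$, the equality $C[c_j \to p] = D[d_j \to p]$ together with $(C^{-1}p)_j, (D^{-1}p)_j > 0$ forces $\sgn \det C = \sgn \det D$, and connectivity of the grid with $\det I = +1$ propagates positivity to every representative submatrix of $[M^T|I]$. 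Replacing $n-k$ unused columns of any representative submatrix of $M^T$ by the matching identity columns produces a representative submatrix of $[M^T|I]$ whose determinant equals the corresponding $k \times k$ principal minor, so every principal minor of every representative submatrix of $M^T$ is positive and $M$ is a block P-matrix.

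With the P-property established, I would verify the hypothesis of Theorem \ref{thm:hiddenZDualChar} for $p$. Feasibility is immediate from $p > \mathbf{0}$. Suppose toward contradiction some feasible $u$ has $u^j = \mathbf{0}$ and $(M^T u)_j = p_j$ for some $j$; then $x := (u, p - M^T u)$ lies in the lifted polyhedron $P := \{x \geq \mathbf{0} : [M^T|I] x = p\}$ and satisfies $x^j = \mathbf{0}$. Theorem \ref{thm:altBlockP}(c) applied to a candidate nonzero $y \geq \mathbf{0}$ with $M^T y \leq \mathbf{0}$ rules out such $y$ (the sign-change alternative is impossible for $y \geq \mathbf{0}$ and the other alternative demands $(M^T y)_j > 0$), so $P$ admits no recession direction and is bounded. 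Hence $x$ decomposes as a convex combination of vertices of $P$. Combinatorial equivalence together with the newly-established P-property forces every vertex of $P$ to correspond to a nondegenerate maximal complementary basis, so each vertex has a strictly positive entry in block $j$. A convex combination of nonnegative vectors each nonzero in block $j$ cannot vanish in block $j$, contradicting $x^j = \mathbf{0}$. Theorem \ref{thm:hiddenZDualChar} then certifies that $M$ is hidden Z, so $M$ is hidden K, and its characterization of valid $p$ as $X^{-T}v$ (with $v > \mathbf{0}$) coincides with the one just obtained.

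The main obstacle is the assertion that combinatorial equivalence together with the P-property leaves $P$ with only maximal-complementary vertices; this is what makes the convex-combination argument bite. The P-property is essential at this step and dictates the order ``P first, hidden Z second.''
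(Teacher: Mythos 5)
Your route is genuinely different from the paper's, although the statement it targets is of course the same. For sufficiency, the paper establishes the hidden Z-property first (via the one-sentence remark that combinatorial equivalence yields the dual characterization from Theorem~\ref{thm:hiddenZDualChar}) and only then applies $(d)\Rightarrow(a)$ of Theorem~\ref{thm:eqBlockHiddenZ} to gain the P-property. You reverse this order: you first extract the P-property, correctly observing that the Cramer's-rule argument in $(d)\Rightarrow(a)$ never uses the hidden Z hypothesis, and you add the step that the paper leaves implicit --- replacing unused columns by identity columns to convert a $k\times k$ principal minor of a representative submatrix of $M^T$ into the determinant of a representative submatrix of $[M^T|I]$ --- so that positivity of determinants really does give the full P-property. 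With P in hand, your deduction of boundedness of the lifted polytope from Theorem~\ref{thm:altBlockP}(c) is correct: a nonzero recession direction would be $y\geq\mathbf{0}$ with $M^Ty\leq\mathbf{0}$, the second alternative in (c) fails since $y\geq\mathbf{0}$, and the first would force $(M^Ty)_j>0$.

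Where the proposal is not a complete proof is precisely the step you flag: that combinatorial equivalence together with the P-property forces every vertex of the lifted polytope to arise from a maximal complementary basis. Your own setup handles one half cleanly --- if a vertex has complementary support missing some block, extend the support to a maximal complementary $B$ and contradict $[M^T|I]_B^{-1}p>\mathbf{0}$ on the zero coordinates of $B$ outside the support. But a vertex with \emph{non}-complementary support (two or more positive coordinates in one block, none in another) has a support that cannot be extended to any maximal complementary basis, so the hypothesis does not apply directly; some further pivoting or perturbation argument is needed there, and you do not supply one. To be fair, the paper glosses over the very same issue: Theorem~\ref{thm:hiddenZDualChar} quantifies over all feasible $u$, not just the basic solutions at maximal complementary bases, and the paper's ``remark'' gives no justification either. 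So your writeup is a more explicit accounting of the same difficulty, and a more complete treatment of the P-property step, but it remains open at the vertex-support claim.
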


\begin{proof}
 Necessity is given by Theorem \ref{thm:eqBlockHiddenZ}. For sufficiency, we remark that if the feasible region of the system $M^Tu \leq p$, $u \geq \mathbf{0}$ is combinatorially equivalent to the grid G($b + \mathbf{1}$), then the dual characterization of the hidden Z-property is satisfied (cf.~Theorem \ref{thm:hiddenZDualChar}). Hence $M$ is a hidden Z-matrix, and Theorem \ref{thm:eqBlockHiddenZ} applies once more.
\end{proof}

We would like to find a witness that verifies the hidden K-property of a matrix $M$ with ease. In principle, any hidden Z-witness $(X,Y,r,s)$ of $M$ would be valid choice. One just verifies the hidden Z-property of $M$ and finds a vector $x > \mathbf{0}$ such that $[Y|X]x > \mathbf{0}$, which involves a linear feasibility problem. Then $[Y|X]$ is a K-matrix and Theorem \ref{thm:eqBlockHiddenZ} proves $M$ to be a hidden K-matrix. Next, we propose a hidden K-witness that avoids solving an LP for verification tasks.

\begin{prop} [\cite{Rus:The-P-Matrix-Linear}] \label{prop:HidKWitChar}
 A block matrix $M \in \mathbb{R}^{m \times n}$ is a hidden K-matrix if and only if there exists a square Z-matrix $X \in \mathbb{R}^{n \times n}$ and a Z-matrix $Y \in \mathbb{R}^{m \times n}$ of the same type as $M$ such that $MX=Y$ and $[Y|X]\mathbf{1} > \mathbf{0}$.
\end{prop}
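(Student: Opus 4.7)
The plan is to treat the two directions separately, using Theorem~\ref{thm:eqBlockZ} to characterize the K-property of block Z-matrices via existence of a strictly positive $x$ with $Mx>\mathbf{0}$, and Theorem~\ref{thm:eqBlockHiddenZ} to bridge the hidden Z- and hidden K-properties through the K-property of $[Y|X]$.

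For \emph{necessity}, I begin with an arbitrary hidden K-witness $(X,Y,r,s)$ of $M$. Part~(c) of Theorem~\ref{thm:eqBlockHiddenZ} gives that $[Y|X]$ is a block K-matrix, whence part~(b) of Theorem~\ref{thm:eqBlockZ} yields some $d\in\mathbb{R}^n>\mathbf{0}$ with $[Y|X]d>\mathbf{0}$. Setting $D:=\diag(d)$, I replace $X$ and $Y$ by $XD$ and $YD$. Positive column scaling preserves the Z-property of each matrix (including every representative submatrix of $Y$), the factorization $M(XD)=(MX)D=YD$ is retained, and a direct calculation shows $[YD|XD]\mathbf{1}=[Y|X]d>\mathbf{0}$, which is precisely the required condition.

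For \emph{sufficiency}, suppose Z-matrices $X$ (square) and $Y$ satisfy $MX=Y$ and $[Y|X]\mathbf{1}>\mathbf{0}$. Since $[Y|X]$ is a block Z-matrix admitting $\mathbf{1}$ as a strictly positive witness, Theorem~\ref{thm:eqBlockZ} shows that $[Y|X]$ is a block K-matrix. Selecting the last row $(j,b_j+1)$ of each block identifies $X$ itself as one of the representative submatrices of $[Y|X]$, so $X$ is a square K-matrix with $X^{-1}\geq\mathbf{0}$. The key move is then to set $r:=X^{-T}\mathbf{1}\geq\mathbf{0}$ and $s:=\mathbf{0}$, which gives $X^Tr+Y^Ts=\mathbf{1}>\mathbf{0}$; hence $(X,Y,r,s)$ is a hidden Z-witness of $M$. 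Since the associated $[Y|X]$ is already known to be a K-matrix, Theorem~\ref{thm:eqBlockHiddenZ} upgrades $M$ from a hidden Z-matrix to a block P-matrix, so $M$ is hidden K.

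The only delicate point I anticipate is the bookkeeping in the necessity direction: one has to verify that a single column-scaling by $d$ simultaneously normalizes $[Y|X]$ to have strictly positive row sums, preserves the Z-property of both $X$ and $Y$, and leaves $MX=Y$ intact. Beyond this, both directions simply assemble pieces supplied by the characterizations already established.
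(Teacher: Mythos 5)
Your proof is correct and follows essentially the same route as the paper: in the necessity direction you column-scale the witness by $D=\diag(d)$ exactly as the paper does, and in the sufficiency direction you observe that $\mathbf{1}>\mathbf{0}$ certifies the K-property of $[Y|X]$ and then invoke Theorem~\ref{thm:eqBlockHiddenZ}. The only cosmetic difference is your explicit choice $r:=X^{-T}\mathbf{1}$ (using $X^{-1}\geq\mathbf{0}$) where the paper merely invokes existence of $r>\mathbf{0}$ with $X^Tr>\mathbf{0}$; both are equally valid.
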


\begin{proof}
$\Longrightarrow$. Let ($X,Y,r,s$) be any hidden Z-witness of $M$. Since $[Y|X]$ is a K-matrix, there exists $x \in \mathbb{R}^n > \mathbf{0}$ with $[Y|X]x > \mathbf{0}$. Then $(XD,YD,r,s)$ with $D:= \diag(x)$ is an alternative hidden Z-witness of $M$ that satisfies $[YD|XD]\mathbf{1} > \mathbf{0}$.

$\Longleftarrow$. Let $X$ and $Y$ be Z-matrices such that $MX=Y$ and $[Y|X]\mathbf{1} > \mathbf{0}$. Because $X^T$ is a K-matrix, there exists $r \in \mathbb{R}^n > \mathbf{0}$ with $X^Tr > \mathbf{0}$. The tuple ($X,Y,r,\mathbf{0}$) is a hidden Z-witness of $M$. Since $[Y|X]$ is K-matrix, the matrix $M$ is also a P-matrix.
\end{proof}

%A proof of this alternative characterization is obtained using (b) in Theorem~\ref{} together with Lemma \ref{lem:WitnessPropScaling}.
A \emph{proper hidden K-witness} of a block matrix $M$ is any pair $(X, Y)$ that satisfies the conditions in Proposition \ref{prop:HidKWitChar}. Block K-matrix $Y$ may be omitted.

The following result is a generalization of Lemma~\ref{lem:NormFormComp}.

\begin{lem} \label{cor:StochFormHidK}
 Let $M \in \mathbb{R}^{m \times n}$ be a hidden K-matrix with proper witness $(X,Y)$. There exist positive diagonal matrices $L \in \mathbb{R}^{m \times m}$ and $H \in \mathbb{R}^{n \times n}$ such that $LMH$ is a hidden K-matrix with proper witness $(H^{-1}X,LY)$ and, moreover, the matrix $[LY|H^{-1}X]$ is a stochastic K-matrix.
\end{lem}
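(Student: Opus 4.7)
The plan is to apply essentially the row-scaling construction from the proof of Lemma~\ref{lem:NormFormComp} directly to the block K-matrix $K := [Y|X]$ of type $b + \mathbf{1}$, taking advantage of the proper witness hypothesis in order to suppress any column scaling on $K$. Recall that in that proof one first selects a vector $x > \mathbf{0}$ with $Kx > \mathbf{0}$ (which exists by Theorem~\ref{thm:eqBlockZ}(b)) and then scales the columns of $K$ by $\diag(x)$. Here the proper witness condition $[Y|X]\mathbf{1} > \mathbf{0}$ supplies such a vector for free, namely $x = \mathbf{1}$, so the column-scaling factor may be taken to be the identity.

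With that choice in hand, I would construct a positive diagonal matrix $L' \in \mathbb{R}^{(m+n) \times (m+n)}$ by setting its diagonal entries so that every row sum of $L'K$ equals a common constant $c$ small enough to make every representative submatrix of $L'K$ of the form $I - \gamma S$ with $\gamma \in [0,1)$ and $S \geq \mathbf{0}$ rowstochastic, exactly as in the proof of Lemma~\ref{lem:NormFormComp}. Then I would split $L'$ according to the block structure of $K$: each block of $K$ consists of the $b_j$ rows of $Y^j$ followed by the single row $x_j$ of $X$, so the diagonal of $L'$ partitions into entries that scale the rows of $Y$ and entries (one per block) that scale the rows of $X$. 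Assemble the former into a positive diagonal matrix $L \in \mathbb{R}^{m \times m}$ and the latter into a positive diagonal matrix $D \in \mathbb{R}^{n \times n}$, and set $H := D^{-1}$. Then $L'K = [LY \mid H^{-1}X]$ is a stochastic K-matrix by construction.

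What remains is to check that $(H^{-1}X, LY)$ is a proper hidden K-witness of $LMH$, after which Proposition~\ref{prop:HidKWitChar} yields the claim. Positive diagonal row scaling preserves the sign pattern of entries, so $H^{-1}X$ is a square Z-matrix and $LY$ is a block Z-matrix of type $b$; the identity $(LMH)(H^{-1}X) = LMX = LY$ follows immediately from $MX = Y$; and $[LY \mid H^{-1}X]\mathbf{1} > \mathbf{0}$ holds because each representative submatrix $I - \gamma S$ of a stochastic K-matrix satisfies $(I - \gamma S)\mathbf{1} = (1-\gamma)\mathbf{1} > \mathbf{0}$.

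There is no genuine obstacle; the content of the lemma is entirely packaged by Lemma~\ref{lem:NormFormComp} applied to $K = [Y|X]$. The one subtlety worth flagging is that the stochastic-form reduction of an arbitrary K-matrix involves both row and column scaling, whereas here the column scaling on $[Y|X]$ must be trivial so as not to spoil the relation $MX = Y$. The proper witness condition $[Y|X]\mathbf{1} > \mathbf{0}$ is precisely what allows the column factor to be taken as the identity.
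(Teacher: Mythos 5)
Your proof is correct and takes essentially the same approach as the paper: scale the rows of $[Y|X]$ by a single positive diagonal matrix so that all row sums equal a common constant, then observe that every representative submatrix takes the stochastic-K form $I-\gamma S$. Your explicit splitting of the row-scaling $L'$ into the $L$-part (acting on $Y$) and the $H^{-1}$-part (acting on $X$), and your remark that the proper-witness condition $[Y|X]\mathbf{1}>\mathbf{0}$ supplies the vector $x=\mathbf{1}$ for free so that no column scaling is needed, is precisely what the paper's proof does when it requires $X\mathbf{1}>\mathbf{0}$ and $Y\mathbf{1}>\mathbf{0}$.
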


\begin{proof}
Let $L \in \mathbb{R}^{m \times m}$ and $H \in \mathbb{R}^{n \times n}$ be positive diagonal matrices such that $[LY |H^{-1}X] \mathbf{1} =\mathbf{c}$ for any positive constant $c$. In other words, we scale the rows such that the entries in each row of $[LY |H^{-1}X]$ sum up to the same positive constant $c$. Here, we require that $X \mathbf{1} > \mathbf{0}$ and $Y \mathbf{1} > \mathbf{0}$. Let $t > 0$ be equal to the largest diagonal element in any representative submatrix of $[LY |H^{-1}X]$. Since $X$ and $Y$ are K-matrices, every representative submatrix can be represented as $tI - T$ for some $T \geq \mathbf{0}$. Recall that $(tI - T) \mathbf{1}= \mathbf{c}$, and thus $t \geq c > 0$. If we would have chosen $c \slash t$ instead of $c$, then we would get $I - \gamma S$, where $\gamma := (t-c) \slash t$ and $S:=(1 \slash (t-c)) T \geq \mathbf{0}$. Note that $\gamma \in [0,1)$ and $S \mathbf{1} = \mathbf{1}$. Consequently, the matrix $[LY|H^{-1}X]$ is a stochastic K-matrix; and since $LMH(H^{-1}X)=LMX=LY$, the matrix $LMH$ is a hidden K-matrix with proper witness $(H^{-1}X,LY)$.
\end{proof}

Deciding for a given matrix with known block type whether it has the hidden K-property can be done in polynomial time.\label{page:CompHidKWit} The conditions in Proposition \ref{prop:HidKWitChar} can be formulated as a linear feasibility problem. Hence, we can assume that there always exists a proper hidden K-witness with polynomial binary encoding length. In practice, we often like to compute a proper witness $(X,Y)$ of a given hidden K-matrix $M$ of type $b$ such that the factor of the stochastic K-matrix $[LY|H^{-1}X]$ in Lemma~\ref{cor:StochFormHidK} is as small as possible. Such a problem may be formulated as the LP~\cite{morris2012efficient}
\begin{equation} \label{eqn:minFacWit}
 \begin{aligned}
	 & \min \gamma  \\
	 & \text{ s.t. } [MX|X] \leq E(b + \mathbf{1}), \\
	 & \phantom{\text{ s.t. }} [MX|X] \mathbf{1} \geq (1 - \gamma) \mathbf{1},
	\end{aligned} %\tag{primal LP}
\end{equation}
where matrix $X$ and factor $\gamma$ are the variables. The system is obviously feasible with $\gamma > 0$ as any proper witness, where at least one is supposed to exist, can be scaled by Lemma~\ref{lem:WitnessPropScaling}. Let $(\gamma^*,X^*)$ be an optimal solution to the LP~\eqref{eqn:minFacWit}. The pair $(X^*,MX^*)$ is certainly a proper hidden K-witness of $M$. The witness may then be used for Lemma~\ref{cor:StochFormHidK}. From $[MX^*|X^*] \mathbf{1} \geq (1 - \gamma^*) \mathbf{1}$ it follows that the components of the positive diagonal matrices $L$ and $H$ in the equation $[LMX^*|HX^*] \mathbf{1} = (1 - \gamma^*) \mathbf{1}$ are at most $1$. Then $t$ will be at most $1$, and thus, the factor of the final stochastic K-matrix will be at most $(1-(1-\gamma^*))\slash 1=\gamma^*$. Morris~\cite{morris2012efficient} proposed a solving scheme for the LP~\eqref{eqn:minFacWit} that exploits the close connection to discounted MDPs. The scheme is strongly polynomial if the factor $\gamma^*$ is considered to be a constant. No other, especially no strongly polynomial algorithm in the technical sense is known for the computation of hidden K-witnesses.

\section{A duality theory for Grid-LPs and hidden K-GLCPs} \label{sec:HidKGLCPLP}

We establish a duality theory for linear programming and the GLCP, which originates from Mangasarian's seminal paper \cite{Man:Linear}. Here, we are particularly interested in the relation regarding the combinatorial abstraction of USOs.

Let a Z-matrix $M \in \mathbb{R}^{m \times n}$ with witness $(X,Y,r,s)$ and a vector $q \in \mathbb{R}^{m}$ be given, both of the same type $b \in \mathbb{N}^n$.

The hidden Z-GLCP($M,q$), which is to find a vector $z \in \mathbb{R}^n$ and a vector $w \in \mathbb{R}^m$ of type $b$ such that
\begin{equation} \label{eqn:PrimalGLCP}
	w - M z = q, \qquad w, z \geq \mathbf{0}, \qquad z^j \prod_{j=1}^{b_j} w^j_i = 0 \quad \fall j \in [n], %\tag{primal GLCP},
\end{equation}
is considered to be the primal problem. Let $p:=X^{-T}v$ for any $v \in \mathbb{R}^n > \mathbf{0}$. The LP
\begin{equation} \label{eqn:PrimalLP}
  \begin{aligned}
	 & \max  & -p^T z & \\
	 & \text{ s.t.} & -M z  & \leq q, \\
	 &              & z & \geq \mathbf{0},
	\end{aligned} %\tag{primal LP}
\end{equation}
has the same feasible region as the hidden Z-GLCP \eqref{eqn:PrimalGLCP}. The dual of LP \eqref{eqn:PrimalLP}, which is
\begin{equation} \label{eqn:DualLP}
  \begin{aligned}
	 & \min  & q^T u & \\
	 & \text{ s.t.} & M^T u  & \leq p, \\
	 &              & u & \geq \mathbf{0},
	\end{aligned} %\tag{dual LP}
 \end{equation}
is regarded as the dual problem. The dual problem is not unique---it is actually a bunch of LPs \eqref{eqn:DualLP}, each identified by right-hand side $p$.

\begin{prop} [\cite{mohan1997vertical}] \label{prop:HidZZGLCPSol}
If $z^*$ is a solution to any LP \eqref{eqn:PrimalLP}, then $(q+Mz^*,z^*)$ is a solution to the hidden Z-GLCP \eqref{eqn:PrimalGLCP} .
\end{prop}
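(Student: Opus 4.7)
The plan is to deduce the complementarity condition \eqref{lcpv2compi} for the pair $(q+Mz^*,z^*)$ from LP duality, with the dual characterization of the hidden Z-property (Theorem~\ref{thm:hiddenZDualChar}) supplying the decisive ingredient. Setting $w := q + Mz$, the constraints of \eqref{eqn:PrimalLP} become $w - Mz = q$ and $w,z \geq \mathbf{0}$, i.e.\ conditions \eqref{lcpvinvi} and \eqref{lcpvposi}, so feasibility of $(q+Mz^*,z^*)$ for the GLCP \eqref{eqn:PrimalGLCP} is automatic. Everything reduces to showing $z^*_j \prod_{i=1}^{b_j} w^j_i = 0$ for each $j \in [n]$.

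Next, I would invoke LP duality. Because $p = X^{-T} v$ with $v > \mathbf{0}$, Theorem~\ref{thm:hiddenZDualChar} guarantees that the dual \eqref{eqn:DualLP} is feasible; combined with primal optimality of $z^*$, strong duality yields a dual optimum $u^*$. Complementary slackness then produces the two implications $z^*_j > 0 \Longrightarrow (M^T u^*)_j = p_j$ and $(u^*)^j_i > 0 \Longrightarrow w^j_i = 0$.

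The remaining step is exactly where the hidden Z-structure is used. Fix any $j$ with $z^*_j > 0$. Complementary slackness gives $(M^T u^*)_j = p_j$, which rules out the strict-inequality alternative in Theorem~\ref{thm:hiddenZDualChar} applied to the dual-feasible $u^*$. Hence $(u^*)^j \neq \mathbf{0}$, so some coordinate $(u^*)^j_i$ is positive, and complementary slackness then forces $w^j_i = 0$, whence $\prod_{i} w^j_i = 0$. Indices $j$ with $z^*_j = 0$ satisfy \eqref{lcpv2compi} trivially. The only non-routine ingredient is securing a dual optimum in the first place, but this is handed to us by the hidden Z-characterization, which guarantees dual feasibility for the prescribed choice of $p$; after that, the ``strict-or-nonzero'' dichotomy in Theorem~\ref{thm:hiddenZDualChar} meshes perfectly with complementary slackness to close the argument without any further computation.
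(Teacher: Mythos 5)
Your proof is correct. Both arguments ultimately rest on LP complementary slackness between $z^*$ and a dual optimum $u^*$, but they deploy the hidden Z-structure differently. The paper rewrites the feasible region of the dual LP \eqref{eqn:DualLP} as the equivalent Z-system $[Y^T|X^T]y = X^T p$, $y \geq \mathbf{0}$; since $X^T p = v > \mathbf{0}$ and $[Y^T|X^T]$ is a block Z-matrix, every feasible (hence every optimal) basis is maximal complementary, and complementary slackness then transfers that combinatorial fact into the GLCP complementarity condition. You instead apply Theorem~\ref{thm:hiddenZDualChar} as a black box directly at the dual optimum: for each $j$ with $z^*_j > 0$, the first complementary-slackness relation forces $(M^T u^*)_j = p_j$, which by the dichotomy of Theorem~\ref{thm:hiddenZDualChar} rules out $(u^*)^j = \mathbf{0}$, and the second complementary-slackness relation then annihilates the corresponding $w^j_i$. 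Your version is more modular --- it does not need to pass through the $[Y^T|X^T]$ reformulation and avoids re-deriving the ``every feasible basis is maximal complementary'' fact, which is what Theorem~\ref{thm:hiddenZDualChar} already encapsulates --- at the mild cost of invoking strong duality explicitly to produce $u^*$.
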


\begin{proof}
The feasible region of the corresponding dual LP \eqref{eqn:DualLP} is also determined by the equation system $[Y^T|X^T]y=X^{T}p$ and $y \geq \mathbf{0}$. Since $X^{T}p > \mathbf{0}$ and $[Y^T|X^T]$ is a block Z-matrix, every feasible basis must be maximal complementary, and so is every optimal basis. Through the complementary slackness conditions for linear programming, any optimal basis to the LP \eqref{eqn:PrimalLP} must then satisfy the complementarity constraints of a GLCP
\end{proof}

It may happen that some LP \eqref{eqn:DualLP} is unbounded, then the corresponding LP \eqref{eqn:PrimalLP} is infeasible, and thus there is no solution to the hidden Z-GLCP \eqref{eqn:PrimalGLCP}.

%\begin{prop} \label{thm:GLCPandLPCompSlack}
 %Let $(w,z)$ and $u$ be feasible for a hidden Z-GLCP \eqref{eqn:PrimalGLCP} and any LP \eqref{eqn:DualLP}, respectively. The following conditions are equivalent.
 %\vspace{+2mm}
 %\begin{compactenum}[\rm(i)]\itemsep2mm
  %\item The pair $(w,z)$ and $u$ are (optimal) solutions to their respective problems.
	%\item $q^T u=-p^T w$
  %\item $w u = 0$ and $z (p - M^Tu) = 0$
 %\end{compactenum}
%\end{prop}

In the remainder, we shall restrict ourselves to the case where $M$ is a hidden K-matrix.

\begin{prop} \label{prop:LPHiddKSameSol}
 Let $M \in \mathbb{R}^{m \times n}$ be a hidden K-matrix. A set $B \subseteq N(b + \mathbf{1})$ is a solution basis to any LP \eqref{eqn:DualLP} if and only if $N:=N(b + \mathbf{1}) \backslash B$ is a solution basis to the hidden K-GLCP \eqref{eqn:PrimalGLCP}.
\end{prop}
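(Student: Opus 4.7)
The plan is to use linear programming duality, identifying LP and GLCP variables under a common indexing by $N(b+\mathbf{1})$. Write LP~\eqref{eqn:DualLP} in equality form $[M^T|I](u^T, s^T)^T = p$ with slack $s = p - M^T u$. Under the indexing of $[M^T|I]$ by $N(b+\mathbf{1})$, column $(j,i)$ for $i \in [b_j]$ corresponds to the LP variable $u^j_i$, while column $(j, b_j+1)$ corresponds to the slack $s_j$. Similarly, for the GLCP~\eqref{eqn:PrimalGLCP} written as $[I|-M](w^T, z^T)^T = q$, column $(j,i)$ for $i \in [b_j]$ corresponds to $w^j_i$ and column $(j, b_j+1)$ corresponds to $z_j$. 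The dual LP~\eqref{eqn:PrimalLP} has variables $z$ together with slacks $w = q + Mz$, sharing the same indexing as the GLCP. Under this identification, LP complementary slackness between~\eqref{eqn:DualLP} and~\eqref{eqn:PrimalLP} reads: for every $(j,i) \in N(b+\mathbf{1})$, the LP primal variable at $(j,i)$ times the LP dual variable at $(j,i)$ vanishes.

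For the forward direction, suppose $B$ is a solution basis of LP~\eqref{eqn:DualLP}. By strong LP duality, the dual LP~\eqref{eqn:PrimalLP} admits an optimal solution $z^*$ that is complementary to the primal basic solution at $B$. Proposition~\ref{prop:HidZZGLCPSol} then gives a GLCP solution $(w^*, z^*) := (q + Mz^*, z^*)$. Since the LP is presumed nondegenerate, every index in $B$ carries a strictly positive LP primal value, so complementary slackness forces $(w^*, z^*)$ to vanish at that index. Hence the support of $(w^*, z^*)$ lies in $N := N(b+\mathbf{1}) \setminus B$, and setting $x := (w^*, z^*)_N \geq \mathbf{0}$ solves $[I|-M]_N x = q$, establishing that $N$ is a GLCP solution basis.

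For the converse, suppose $N$ is a GLCP solution basis realized by a nondegenerate solution $(w^*, z^*)$ that vanishes on $B$. Theorem~\ref{thm:eqBlockHiddenZ}(d) guarantees that $B$ is a nondegenerate feasible basis for LP~\eqref{eqn:DualLP}, yielding a primal basic solution $(u^*, s^*)$ supported on $B$. The vector $z^*$ satisfies $-Mz^* \leq q$ and $z^* \geq \mathbf{0}$, so it is feasible for the dual LP~\eqref{eqn:PrimalLP}. Complementary slackness between $(u^*, s^*)$ and the dual pair $(w^*, z^*)$ holds index by index: at indices in $B$ the dual values vanish by assumption, while at indices in $N$ the primal values vanish as nonbasic. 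Primal feasibility, dual feasibility, and complementary slackness together imply that $B$ is an optimal, hence a solution basis of~\eqref{eqn:DualLP}.

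The main obstacle is the bookkeeping: aligning the single-equation GLCP complementarity $z_j \prod_i w^j_i = 0$ with the index-by-index LP complementary slackness via the shared column indexing, and invoking Theorem~\ref{thm:eqBlockHiddenZ}(d) together with nondegeneracy to ensure that every maximal complementary $B$ genuinely yields a feasible LP basis. Once this alignment is in place, the biconditional is an immediate consequence of standard LP duality.
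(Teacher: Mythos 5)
Your proof is correct and takes essentially the same LP-duality route as the paper, which compresses the argument into a citation of Proposition~\ref{prop:HidZZGLCPSol} together with the uniqueness of the hidden K-GLCP solution. You simply make the complementary-slackness bookkeeping explicit in both directions and replace the appeal to uniqueness in the converse by a direct optimality check via Theorem~\ref{thm:eqBlockHiddenZ}(d), which is a valid and slightly more self-contained rendering of the same idea.
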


\begin{proof}
This result follows from Proposition \ref{prop:HidZZGLCPSol} and the fact that every hidden K-GLCP has a unique solution.
%By Theorem \ref{thm:hiddenZDualChar}, every feasible basis of any LP \eqref{eqn:DualLP} must be maximal complementary; and thus any solution basis $B$ as well. Complement $N$ is a solution basis of the primal LP \eqref{eqn:PrimalLP}, which has the same feasible region as the hidden Z-GLCP \eqref{eqn:PrimalGLCP}. Furthermore, set $N$ satisfies the complementarity constraints.
\end{proof}

Next, we propose alternative expressions for the reduced cost vectors of an LP \eqref{eqn:DualLP}.

\begin{lem} \label{lem:RedCostVec}
 Consider an LP \eqref{eqn:DualLP} in normal form. For every basis $B \subseteq N(b + \mathbf{1})$, we have
%$$(q^T|\mathbf{0})_N - (q^T|\mathbf{0})_B ( M^T | I )_B^{-1} ( M^T | I )_N = ((I|-M)_N)^{-1}q,$$
$$c^T_N - c^T_B A_B^{-1} A_N = [I|-M]_N^{-1}q,$$
where $A:=[M^T|I]$, $c:=[q|\mathbf{0}]$, and $N:=N(b + \mathbf{1}) \backslash B$.
\end{lem}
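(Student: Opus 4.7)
The plan is to exploit two simple block-matrix identities linking $A = [M^T|I]$ and $[I|-M]$, and then unfold the standard reduced-cost formula. First I would observe the ``block-duality'' relation
\[
A \cdot [I|-M]^T \;=\; [M^T|I] \begin{pmatrix} I_m \\ -M^T \end{pmatrix} \;=\; M^T - M^T \;=\; 0.
\]
Since both $A$ and $[I|-M]$ have columns indexed by $N(b+\mathbf{1})$, partitioning into the $B$-columns and the $N$-columns gives $A_B [I|-M]_B^T + A_N [I|-M]_N^T = 0$. Assuming for the moment that $A_B$ and $[I|-M]_N$ are both nonsingular, this rearranges to
\[
A_B^{-1} A_N \;=\; -[I|-M]_B^T \, [I|-M]_N^{-T}.
\]

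Next, I would perform the analogous computation with the cost vector. Using $c = [q \mid \mathbf{0}]$, one gets
\[
c^T [I|-M]^T \;=\; [q^T \mid \mathbf{0}^T] \begin{pmatrix} I_m \\ -M^T \end{pmatrix} \;=\; q^T,
\]
which partitions as $c_B^T [I|-M]_B^T + c_N^T [I|-M]_N^T = q^T$. Plugging both identities into the reduced-cost expression produces
\[
c_N^T - c_B^T A_B^{-1} A_N \;=\; c_N^T + \bigl(q^T - c_N^T [I|-M]_N^T\bigr)\,[I|-M]_N^{-T} \;=\; q^T \, [I|-M]_N^{-T},
\]
which is exactly $[I|-M]_N^{-1} q$ when both sides are read as the block vector indexed by $N$.

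The main obstacle is the silent nonsingularity assumption on $[I|-M]_N$. I would handle it by letting $J \subseteq [n]$ be the set of blocks in which $B$ picks an index $(j,\beta_j)$ with $\beta_j \in [b_j]$ (so that the corresponding column $-m_{\cdot j}$ of $-M$ ends up in $[I|-M]_N$), and setting $R := \{(j,\beta_j) : j \in J\}$. After permuting the rows so that those indexed by $R$ come last, and the columns so that the $-M$-columns come last, the matrix $[I|-M]_N$ takes the block-triangular form
\[
\begin{pmatrix} I & \ast \\ \mathbf{0} & -M_{R,J} \end{pmatrix},
\]
so its determinant equals $\pm \det M_{R,J}$. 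But $M_{R,J}$ is a principal submatrix of the block P-matrix $M$ (recall that $M$ is hidden K, so Theorem~\ref{thm:eqBlockHiddenZ} makes it a block P-matrix), hence nonsingular, which closes the argument.
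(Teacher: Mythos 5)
Your proof is correct, and it is more self-contained than the paper's. The paper's proof is a two-sentence appeal to the classical LP primal--dual correspondence: the reduced-cost vector of the dual LP at basis $B$ \emph{is} the basic solution of the primal LP at the complementary basis $N$, verified once at the all-slack basis. You instead derive this correspondence from scratch via the orthogonality relation $A\,[I|-M]^T = 0$ (the constraint matrices of the two LPs have orthogonal row spaces) and the cost-data identity $c^T[I|-M]^T = q^T$, then solve for the reduced-cost vector. What this buys you is an explicit, verification-friendly computation; what the paper's route buys is brevity by leaning on a standard fact.

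One small remark on the nonsingularity step. Your block-triangularization of $[I|-M]_N$ and the appeal to the P-property of $M$ is correct, but it is heavier machinery than needed: once $B$ is a basis (so $A_B$ is nonsingular), nonsingularity of $[I|-M]_N$ follows already from the orthogonality relation you have set up. Indeed, if $z\in\ker[I|-M]$ is supported on $N$ (that is, $z_B = \mathbf{0}$ and $[I|-M]_N z_N = \mathbf{0}$), then $z$ lies in the row space of $A$, say $z = A^T w$, whence $A_B^T w = z_B = \mathbf{0}$ forces $w = \mathbf{0}$ and $z_N = \mathbf{0}$. So the lemma holds for any basis $B$ and does not require the block P-property; your argument is simply a concretization valid in the hidden K setting the paper is working in.
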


\begin{proof}
%We think of both the primal LP \eqref{eqn:PrimalLP} and the dual LP \eqref{eqn:DualLP} in standard form.
First, consider the basis $\left\{(j,b_j+1) : \, j \in [n] \right\}$ to the LP \eqref{eqn:DualLP}. The reduced cost vector is $q$, which at the same time equals the basic solution to the LP \eqref{eqn:PrimalLP} with respect to the complement. More generally, the reduced cost vector with respect to any basis $B \subseteq N(b + \mathbf{1})$ is $c^T_N - c^T_B A_B^{-1} A_N$, which equals the basic solution to the LP \eqref{eqn:PrimalLP} with respect to $N:=N(b + \mathbf{1}) \backslash B$.
\end{proof}

If $M$ is a hidden K-matrix, then every LP \eqref{eqn:DualLP} is a Grid-LP by Theorem \ref{thm:HiddKDual}. Recall that a hidden K-GLCP has a unique solution, but not necessarily a unique solution basis. For the solution basis to be unique, we ask for nondegeneracy. Any hidden K-GLCP \eqref{eqn:PrimalGLCP} and the corresponding LPs \eqref{eqn:DualLP} yield the same USO, assuming nondegeneracy. The following theorem also follows from R\"ust's PhD thesis \cite{Rus:The-P-Matrix-Linear}. For the square case, we made use of it in \cite{FonGarKlaSpr:CountingUSOs} without further explanations.

\begin{thm} \label{thm:LPUSOeqGLCPUSO}
 A USO of a grid is a hidden K-USO if and only if it is an LP-USO. Assuming nondegeneracy, the following holds.
 \vspace{+2mm}
 \begin{compactenum}[\rm(i)]\itemsep2mm
  \item For a hidden K-GLCP \eqref{eqn:PrimalGLCP}, every LP \eqref{eqn:DualLP} is a Grid-LP and induces the same USO.
  \item If any LP \eqref{eqn:DualLP} is a Grid-LP, then the matrix $M$ is a hidden K-matrix and the hidden K-GLCP~\eqref{eqn:PrimalGLCP} induces the same USO.
 \end{compactenum}
\end{thm}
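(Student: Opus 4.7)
The plan is to establish parts (i) and (ii) first, using the machinery developed earlier in the section, and then deduce the USO equivalence from these two parts together with Theorem~\ref{thm:HiddKDual}.

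For part (i), start with a hidden K-GLCP$(M,q)$ as in \eqref{eqn:PrimalGLCP} with proper witness $(X,Y)$ (which exists by Proposition~\ref{prop:HidKWitChar}). Set $p := X^{-T}v$ for any $v \in \mathbb{R}^n > \mathbf{0}$. By Theorem~\ref{thm:HiddKDual} (equivalently Theorem~\ref{thm:eqBlockHiddenZ}(d)), the feasible region of $M^T u \leq p$, $u \geq \mathbf{0}$ is combinatorially equivalent to G$(b+\mathbf{1})$, so the LP \eqref{eqn:DualLP} is indeed a Grid-LP. To show that both problems induce the same USO, fix any two adjacent vertices $B$ and $C = (B \setminus \{(j,i)\}) \cup \{(j,k)\}$ of G$(b+\mathbf{1})$, and let $N := N(b+\mathbf{1}) \setminus B$. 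By the definition given in the preliminaries, the edge $\{B,C\}$ is oriented $B \to C$ in the P-USO of the hidden K-GLCP iff $([I|-M]_N^{-1}q)^j_k < 0$, while in the LP-USO of \eqref{eqn:DualLP} it is oriented $B \to C$ iff $(c_N^T - c_B^T A_B^{-1} A_N)^j_k < 0$ with $A = [M^T|I]$ and $c = [q|\mathbf{0}]$. Lemma~\ref{lem:RedCostVec} identifies these two expressions, so the orientations coincide (nondegeneracy rules out zero entries).

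For part (ii), suppose that some LP \eqref{eqn:DualLP}, associated to a block matrix $M$ and right-hand side $p$, happens to be a Grid-LP. Then its feasible region is by definition combinatorially equivalent to G$(b+\mathbf{1})$, and Theorem~\ref{thm:HiddKDual} forces $M$ to be a hidden K-matrix with $p$ as a valid choice. Applying part (i) to this $M$ (with the same $p$) shows that the LP-USO agrees with the USO of the hidden K-GLCP$(M,q)$.

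The USO-level equivalence in the first sentence of the theorem follows immediately: every hidden K-USO arises from some hidden K-GLCP, hence by (i) from a Grid-LP, so it is an LP-USO; conversely, every LP-USO arises from some Grid-LP, which by (ii) coincides with the USO of a hidden K-GLCP, hence is a hidden K-USO.

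I do not expect a serious obstacle here, since the heavy lifting has been done in Theorem~\ref{thm:HiddKDual} and Lemma~\ref{lem:RedCostVec}. The only delicate point is bookkeeping: one must verify that the index $(j,k)$ appearing in the reduced cost vector $c_N^T - c_B^T A_B^{-1} A_N$ (indexed by the LP-side complement $N$) matches the index $(j,k)$ appearing in the basic solution $[I|-M]_N^{-1}q$ of the GLCP-side, so that the two sign conditions really refer to the same scalar. This is exactly the content of Lemma~\ref{lem:RedCostVec}, provided one is careful that the two problems use complementary conventions for what is the ``basis'': in the LP \eqref{eqn:DualLP} the basis is $B$, while in the GLCP \eqref{eqn:PrimalGLCP} the solution basis is $N = N(b+\mathbf{1}) \setminus B$. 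Once this correspondence is fixed, the argument is essentially a direct invocation of the earlier results.
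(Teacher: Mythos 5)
Your proposal is correct and takes essentially the same route as the paper: part~(i) follows from Theorem~\ref{thm:HiddKDual} (Grid-LP property) plus Lemma~\ref{lem:RedCostVec} (identification of reduced cost vector with the GLCP basic solution), part~(ii) follows from Theorem~\ref{thm:HiddKDual} again plus part~(i), and the USO-level equivalence is the immediate consequence. The paper's own proof is considerably terser but has the same skeleton; your extra care about the basis/complement bookkeeping between the LP and GLCP sides is welcome elaboration rather than a departure.
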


\begin{proof}
(i). By Theorem \ref{thm:HiddKDual}, every LP \eqref{eqn:DualLP} must be a Grid-LP. The orientations of the edges incident to a vertex $B \subseteq N(b + \mathbf{1})$ in the arising LP-USOs are determined by the corresponding reduced cost vector, which, by Lemma \ref{lem:RedCostVec}, equals $[I|-M]_N^{-1}q$, where $N:=N(b + \mathbf{1}) \backslash B$. These expressions also determine the USO induced by the hidden K-GLCP \eqref{eqn:PrimalGLCP}.

(ii). Since an LP \eqref{eqn:DualLP} is a Grid-LP, block matrix $M$ is a hidden K-matrix by Theorem \ref{thm:HiddKDual}. Lemma \ref{lem:RedCostVec} applies once more.
%The corresponding hidden K-USO is again determined by the reduced cost vectors of the LP \eqref{eqn:DualLP}.
\end{proof}

We conclude that the hidden K-GLCP and linear programming over grids are equally difficult with respect to simplex-type methods. With respect to an arbitrary solving method, the hidden K-GLCP is at least as difficult because the right-hand side $p$ is not part of the input in the complementarity setting. A `smart' algorithm for linear programming may make use of~$p$, unlike simplex-type methods. Karmarkar's interior-point method is such a method. This leads to the question as to whether any algorithm for linear programming (over grids) that does not really make use of the right-hand side of the constraints can be a polynomial-time solving method (in any sense).

\section{On the exact relation between discounted MDPs, Grid-LPs, and (hidden) K-GLCPs} \label{sec:DiscMDPs}

 We study \emph{single-player stochastic games} with perfect information, which are well-known to admit formulations as Grid-LPs~\cite{d1963probabilistic}. We are interested in the exact relation to the GLCP. The games are known to admit formulations as (hidden) K-GLCPs~\cite{gartner2005simple, svensson2007linear, jurdzinski2008simple}, which also follows from the duality theory discussed in the previous section.  In this section, all these relations are explained in detail through investigation of discounted Markov decision processes (MDPs) as the most general variant\footnote{K-GLCP formulations for discounted MDPs have also been provided by Sumita and Kakimura (pers.~comm.).}. We will prove below that discounted MDPs are combinatorially equivalent to Grid-LPs and hidden K-GLCPs.

A \emph{Markov decision process} (MDP) is a stochastic process with discrete steps. At each step, the process is in some state $j$ and chooses from an available action $i$. The process randomly moves into some other state. The probability $p^j_{ik} \in [0,1]$ that the process moves into state $k$ is determined by the current state $j$ and chosen action $i$. A moving step has reward $r^j_i \in \mathbb{R}$ assigned, which likewise depends on the state $j$ and action $i$. For discounted MDPs, the rewards are discounted by some factor $\gamma \in [0, 1)$. To summarize, the following notations are used:
		
		\vspace{+0.3cm}
		\begin{tabular}{p{0.0cm}p{0.5cm}p{25cm}}
	   & $n$ & number of states, \\
		 & $a_j$ & number of actions available in state $j \in [n]$, \\
		 & $r^j_i$ & reward for choosing action $i \in [a_j]$ in state $j$, \\
		 & $p^j_{ik}$ & conditional probability to arrive in state $k$ for state $j$ and action $i \in [a_j]$, \\
		 & $\gamma$ & discount factor ($\gamma \in [0,1)$).\\
	  \end{tabular}
		\vspace{+0.3cm}
		
\noindent The problem is to find an optimal policy. A \emph{policy} is a function $\pi$ that specifies for each state the action to take. For a fixed policy, the MDP becomes an ordinary Markov chain. An \emph{optimal policy} is any policy that maximizes the total reward in expectation over an infinite-time horizon, where the rewards are discounted by factor $\gamma$.

Let $v^{\pi}_j$ denote the total discounted reward in expectation for initial state $j$ when applying policy $\pi$. The \emph{Bellman equations}
\begin{equation} \label{eq:Bellmann}
 v^{\pi}_j = \max_{i \in [a_j]} \left\{ r_i^j + \gamma \sum_{k=1}^n p_{ik}^j v^{\pi}_k \right\} \qquad \fall j \in [n]
\end{equation}
describe an optimality criterion for a policy $\pi$. Shapley~\cite{shapley1953stochastic} proved that the equations have a unique solution $v^*$. The optimal policies are the policies that for each state $j \in [n]$ select an action $i \in [a_j]$ in
$$\argmax_{i \in [a_j]} \left\{ r_i^j + \gamma \sum_{k=1}^n p_{ik}^j v^*_k \right\}.$$

The Bellman equations can be written as a GLCP using the approach proposed by Svensson and Vorobyov \cite{svensson2007linear}. First, we introduce the slack block variable $w \in \mathbb{R}^m$ of type $a:=(a_j)_{j \in [n]}$. The problem is then to find $v$ and $w$ such that
\begin{equation*} %\label{eq:MDPasGLCP}
 \begin{aligned}
  v_j & = w_i^j + r_i^j + \gamma \sum_{k=1}^n p_{ik}^{j} v_k & \fall j \in [n], \fall i \in [a_j], \\
	w & \geq \mathbf{0}, & \\
	\prod_{i = 1}^{a_j} w_i^j & = 0 & \fall j \in [n].
 \end{aligned}
\end{equation*}
The complementarity problem is not yet in proper form. Note that every entry $v_j$ is lower bounded by
$$\sum_{i=0}^{\infty}\gamma^id=\frac{d}{1-\gamma} \quad \mbox{ for } d < \min \{ r_i^j : j \in [n] \text{ and } i \in [a_j]\}.$$
The GLCP can be written in proper form by substituting $z_j+d\slash(1-\gamma) $ for $v_j$ and adding the constraint $z_j \geq 0$ for every $j \in [n]$. We obtain the problem
\begin{equation*} %\label{eq:MDPasGLCP}
 \begin{aligned}
  z_j & = w_i^j + r_i^j - d + \gamma \sum_{k=1}^n p_{ik}^{j} z_k & \fall j \in [n], \fall i \in [a_j], \\
	z, w & \geq \mathbf{0}, & \\
	z_j \prod_{i = 1}^{a_j} w_i^j & = 0 & \fall j \in [n].
 \end{aligned}
\end{equation*}

Let $P:=(p^j_{ik})_{j,k \in [n], i \in [a_j]}$ and $r:=(r^j_i)_{j \in [n], i \in [a_j]}$, which both are of type $a$. The reduction results in a GLCP($M,q$) with $M:=E(a)-\gamma P$ and $q:=-r + \mathbf{d}$. The matrix $M$ is a stochastic K-matrix of type $a$ and $q$ is strictly negative.

\begin{prop} \label{prop:MDPsAsKGLCPs}
 Every discounted MDP admits a formulation as a stochastic K-GLCP.
\end{prop}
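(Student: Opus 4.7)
The plan is to follow the construction already spelled out in the discussion preceding the statement and then check that it meets the formal requirements. That is, I would first introduce a block slack variable $w$ of type $a$ to convert the max in the Bellman equations~\eqref{eq:Bellmann} into complementarity, and then perform the affine shift $v_j = z_j + d/(1-\gamma)$ with $d$ strictly less than every reward $r_i^j$ so that the nonnegativity $z \geq \mathbf{0}$ is enforced without altering the solution set of the Bellman system. This yields the system $w - Mz = q$ with $M := E(a) - \gamma P$ and $q := -r + \mathbf{d}$ explicitly, exactly as derived in the preceding display.

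Next, I would verify two essentially bookkeeping items: (a) that the resulting data really matches the GLCP template~\eqref{eq:LCP}, reading $z$ as a block vector of type $\mathbf{1}$ with entries $z_j$ and $w$ as a block vector of type $a$ with entries $w^j_i$; and (b) that the scalar complementarity condition $z_j \prod_{i=1}^{a_j} w_i^j = 0$ appearing in the derivation coincides with condition~\eqref{lcpv2compi} of the general GLCP. Both of these are direct translations of notation.

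The substantive step is then to confirm that $M = E(a) - \gamma P$ is a stochastic K-matrix in the sense of Definition~\ref{lem:stochK}. Fix any maximal complementary set $B \subseteq N(a)$. It selects, for each state $j \in [n]$, exactly one action $i_j \in [a_j]$; this selection is precisely a deterministic policy $\pi$. The corresponding representative submatrix of $P$ is then the $n \times n$ matrix $S_\pi$ whose $j$-th row is $(p^j_{i_j,1}, \ldots, p^j_{i_j,n})$, a probability distribution over the next state. Hence $S_\pi \geq \mathbf{0}$ and $S_\pi \mathbf{1} = \mathbf{1}$, and the matching representative submatrix of $M$ takes the form $I - \gamma S_\pi$ with the single discount factor $\gamma \in [0,1)$ shared across all bases. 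This matches Definition~\ref{lem:stochK} verbatim, so $M$ is a stochastic K-matrix and $(M,q)$ is a stochastic K-GLCP.

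Expected difficulty: there is no genuine obstacle. The notion of a stochastic K-matrix was essentially tailored for transition matrices of discounted MDPs, so once the identification ``maximal complementary basis of $N(a)$ equals a deterministic policy'' is made explicit, the row-stochastic structure of $P$ gives the claim immediately; the only care needed is to ensure the affine shift preserves the complementarity structure, which it does because it changes only the constant term $q$, not $M$ or the complementarity pattern.
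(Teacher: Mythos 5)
Your proposal is correct and follows exactly the route the paper takes: the paper derives $M = E(a) - \gamma P$ and $q = -r + \mathbf{d}$ via the slack variable and affine shift in the text immediately preceding the proposition, and then asserts that $M$ is a stochastic K-matrix. You have merely made explicit the small verification that each representative submatrix of $M$ has the form $I - \gamma S_\pi$ with $S_\pi$ rowstochastic, which the paper leaves implicit.
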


Our reduction provides an alternative proof for the fact that the Bellman equations \eqref{eq:Bellmann} have a unique solution~\cite{shapley1953stochastic}. Moreover, if the arising K-GLCP is nondegenerate, then the MDP has a unique optimal policy.

By duality, the Grid-LP
$$\min \, (-r^T + \mathbf{d}^T) u \text{ subject to } M^Tu \leq p \text{ and } u \geq \mathbf{0}$$
for any $p > \mathbf{0}$ solves the same problem (cf. Section \ref{sec:HidKGLCPLP}). Note that the unique solution ($w,z$) to the K-GLCP is such that $z > \mathbf{0}$. Therefore, through complementary slackness, any optimal solution $u$ to the Grid-LP satisfies $M^Tu=p$. The LP
\begin{equation} \label{eq:LPFormulationDMDPMy}
 \min \, (-r^T + \mathbf{d}^T) y \text{ subject to } M^Ty = p \text{ and } y \geq \mathbf{0},
\end{equation}
which stays a Grid-LP, solves the MDP as well. We may further simplify the problem.

\begin{lem}
 A maximal complementary $B \subseteq N(a)$ is a solution basis to the Grid-LP \eqref{eq:LPFormulationDMDPMy} if and only if $B$ is a solution basis to the Grid-LP
 \begin{equation} \label{eq:LPFormulationDMDP}
  \max r^T y \text{ subject to } M^Ty = p \text{ and } y \geq \mathbf{0}.
 \end{equation}
 Moreover, the two Grid-LPs induce the same USO of the grid $G(a)$.
\end{lem}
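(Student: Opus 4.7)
The plan is to show that the two cost vectors differ only by something that is constant on the feasible region $\{y \in \mathbb{R}^m : M^T y = p,\, y \geq \mathbf{0}\}$. Both the coincidence of solution bases and the equality of the induced USOs will then follow at once.

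First I would exhibit an explicit $\mu \in \mathbb{R}^n$ with $M \mu = \mathbf{d}$. Every row of $E(a)$ is a unit row $e_j^T$ (with $j$ the block index), so $E(a) \mathbf{1} = \mathbf{1}$; combined with the rowstochasticity $P \mathbf{1} = \mathbf{1}$ this gives $M \mathbf{1} = E(a) \mathbf{1} - \gamma P \mathbf{1} = (1-\gamma) \mathbf{1}$. Hence $\mu := (d/(1-\gamma)) \mathbf{1}$ satisfies $M\mu = \mathbf{d}$. For any feasible $y$ we therefore have $\mathbf{d}^T y = \mu^T (M^T y) = \mu^T p$, a constant $c_0$ independent of $y$.

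The cost vector of \eqref{eq:LPFormulationDMDPMy} and the cost vector of the min-form of \eqref{eq:LPFormulationDMDP} differ by exactly $\mathbf{d}$, so on the feasible region the two objectives differ only by the additive constant $c_0$; hence the optimal solutions and their bases coincide. For the USO claim I would check directly that the contribution of the extra term $\mathbf{d}$ to every reduced cost is zero: for any maximal complementary basis $B \subseteq N(a)$ with complement $N$ and any $(j,i) \in N$, using $\mathbf{d}_B = M_B \mu$ and $(M^T)_B = M_B^T$,
$$\mathbf{d}_{(j,i)} - \mathbf{d}_B^T (M_B^T)^{-1} M^T_{(j,i)} = \mu^T M^T_{(j,i)} - \mu^T M_B^T (M_B^T)^{-1} M^T_{(j,i)} = 0.$$
Thus both LPs produce the same reduced cost vector at every basis, and by the Grid-LP USO definition from Section~\ref{sec:Prelim} the two induced USOs on $G(a)$ coincide. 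The only real subtlety is keeping the row/column conventions for $M_B$ versus $(M^T)_B$ straight; beyond that, the argument collapses to the single identity $\mathbf{d} = M \mu$, which holds essentially because both $E(a)$ and $P$ send $\mathbf{1}$ to $\mathbf{1}$.
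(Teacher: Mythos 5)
Your proof is correct and rests on the same key fact as the paper's — that $M\mathbf{1} = (1-\gamma)\mathbf{1}$ (equivalently, $\mathbf{d}$ lies in the column space of $M$ with explicit preimage $\mu = (d/(1-\gamma))\mathbf{1}$) — and it verifies, exactly as the paper does, that the reduced cost vectors of the two LPs agree at every basis. The paper phrases the computation slightly differently, substituting into $-r_N^T + r_B^T M_B^{-T} M_N^T$ and using $\mathbf{d}^T M_B^{-T} = (1/(1-\gamma))\mathbf{d}^T$ and $\mathbf{d}^T M_N^T = (1-\gamma)\mathbf{d}^T$ directly, while you first extract the constant-offset argument for the optimal bases and then separately zero out the $\mathbf{d}$-contribution to the reduced costs; these are the same argument packaged in two ways.
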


\begin{proof}
 We conduct a sensitivity analysis. Let $B$ be any maximal complementary subset of $N(a)$. Set $B$ is a feasible basis to both Grid-LPs. Moreover, set $B$ is a solution basis to the Grid-LP \eqref{eq:LPFormulationDMDP} if and only if the corresponding reduced cost vector $-r^T_{N} + r_B^T M_B^{-T} M^T_{N}$, where $N:=N(a) \backslash B$, is nonnegative. Since $M$ is a stochastic K-matrix, we have $\mathbf{d}^TM_B^T=(1-\gamma)\mathbf{d}^T$; and thus $\mathbf{d}^T M_B^{-T} = (1 \slash (1 - \gamma)) \mathbf{d}^T$. In addition $\mathbf{d^T} M^T_N=(1-\gamma) \mathbf{d}^T$. Then
$$-r^T_{N} + r_B^T M_B^{-T} M^T_{N}=(-r^T + \mathbf{d}^T)_{N}- (- r^T + \mathbf{d}^T)_B M_B^{-T} M^T_{N}.$$
The right-hand side of the equation represents the reduced cost vector with respect to $B$ of the Grid-LP~\eqref{eq:LPFormulationDMDPMy}. Hence, the result follows.
\end{proof}

Note that the Grid-LP \eqref{eq:LPFormulationDMDP} corresponds to the Grid-LP formulation for discounted MDPs proposed by d'Epenoux \cite{d1963probabilistic}.

\begin{defn}
 A Grid-LP in \emph{stochastic form} is a Grid-LP of the form \eqref{eq:LPFormulationDMDP}.
\end{defn}

We may think of discounted MDPs and Grid-LPs in stochastic form as the same problems. Morris~\cite{morris2012efficient} observed that, in terms of simplex-type methods, solving a Grid-LP corresponds to solving some discounted MDP.

\begin{thm} \label{thm:LPUSOisMDPUSO}
Every LP-USO of a grid arises from some discounted MDP. More precisely, the USO arising from a Grid-LP
\begin{equation} \label{eqn:StartUSOGridLP}
 \min q^T u \text{ subject to }M^T u \leq p \text{ and } u \geq \mathbf{0}
\end{equation}
also arises from the discounted MDP
$$\max -[q^T L | \mathbf{0}] y \text{ subject to } [Y^TL | X^TH^{-1}] y = X^Tp \text{ and } y \geq \mathbf{0},$$
where $(X,Y)$ is a proper hidden K-witness of $M$ and $L$ and $H$ are selected as in Lemma~\ref{cor:StochFormHidK}.
\end{thm}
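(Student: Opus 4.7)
The plan is to exhibit the claimed MDP as equivalent to the original Grid-LP~\eqref{eqn:StartUSOGridLP} under two transformations that each preserve the sign pattern of every reduced cost vector: a positive diagonal rescaling of variables, and a nonsingular left-multiplication of the constraint system by $X^T$. Since the induced USO is determined by these signs alone, the two LPs will produce the same USO on $G(b + \mathbf{1})$.

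Write the original Grid-LP in equality form as $\min c^T x$ subject to $A x = p$ and $x \geq \mathbf{0}$, with $A := [M^T \mid I]$ horizontal of type $b + \mathbf{1}$ and $c := [q \mid \mathbf{0}]^T$; here $x$ collects $u$ together with the slacks. Set $E := \diag(L, H^{-1})$, a positive diagonal matrix of order $m + n$. Using that $(X,Y)$ is a proper hidden K-witness of $M$, we have $LMX = LY$, hence $X^T M^T L = (LY)^T = Y^T L$, and so
$$X^T [M^T \mid I]\, E = [Y^T L \mid X^T H^{-1}], \qquad c^T E = [q^T L \mid \mathbf{0}].$$
Therefore the min-form of the claimed MDP reads $\min (c^T E) y$ subject to $(X^T A E) y = X^T p$ and $y \geq \mathbf{0}$. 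Substituting $y = E^{-1} x$ converts this LP into $\min c^T x$ subject to $X^T A x = X^T p$ and $x \geq \mathbf{0}$, which---by nonsingularity of $X$ (Lemma~\ref{lem:WitnessProp})---has the same feasible region and objective as the original Grid-LP. Moreover, bases of both LPs are indexed by the same subsets $B \subseteq N(b + \mathbf{1})$; a short computation using $(X^T A_B)^{-1} X^T = A_B^{-1}$ yields
$$(c^T E)_N - (c^T E)_B (X^T A E)_B^{-1} (X^T A E)_N \;=\; (c_N^T - c_B^T A_B^{-1} A_N)\, E_N,$$
so the reduced cost vector of the MDP at $B$ differs from that of the original Grid-LP only by componentwise multiplication by the positive entries of $E_N$. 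Hence the two USOs coincide.

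It remains to recognize the claimed LP as a Grid-LP in stochastic form~\eqref{eq:LPFormulationDMDP}, which by the preceding discussion is indistinguishable from a discounted MDP. The transpose of its constraint matrix is $[LY \mid H^{-1}X]$, a stochastic K-matrix by Lemma~\ref{cor:StochFormHidK}. Moreover, validity of $p$ for the original Grid-LP means, via Theorem~\ref{thm:HiddKDual} and Theorem~\ref{thm:hiddenZDualChar}, that $p = X^{-T} v$ for some $v \in \mathbb{R}^n > \mathbf{0}$, giving $X^T p = v > \mathbf{0}$, as required. No substantive obstacle arises; the main point of caution is checking that the non-diagonal factor $X^T$ drops out cleanly inside the reduced cost formula, which it does because $X^T$ left-multiplies both $A_B$ and $A_N$ and thus cancels.
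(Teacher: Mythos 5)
Your proof is correct. The algebra checks out: with $E := \diag(L, H^{-1})$ you correctly compute $X^T[M^T \mid I]E = [Y^T L \mid X^T H^{-1}]$ from $MX = Y$, the reduced cost identity $(c^T E)_N - (c^T E)_B (X^T A E)_B^{-1}(X^T A E)_N = (c_N^T - c_B^T A_B^{-1} A_N)E_N$ is right (the $X^T$ cancels inside the basis inverse, and the diagonal factor splits off cleanly), and positivity of $E_N$ preserves all signs, hence the USO. You also correctly appeal to Lemma~\ref{cor:StochFormHidK} to recognize $[LY \mid H^{-1}X]$ as a stochastic K-matrix and to Theorems~\ref{thm:HiddKDual} and~\ref{thm:hiddenZDualChar} for $X^T p > \mathbf{0}$.

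Your route is, however, genuinely different from the paper's. The paper passes through GLCP duality twice: it invokes Theorem~\ref{thm:LPUSOeqGLCPUSO} to identify the Grid-LP's USO with that of the hidden K-GLCP$(M,q)$, then uses Lemma~\ref{lem:PGLCPScaling} to replace it by the hidden K-GLCP$(LMH, Lq)$, then writes a particular dual Grid-LP of the rescaled GLCP with constraint $[HM^T L \mid I]y = Hp$, and finally left-multiplies by $X^T H^{-1}$. You stay entirely at the LP level and verify invariance directly: a positive diagonal change of variables rescales each reduced-cost entry by a positive factor, and left-multiplying the equality constraints by a nonsingular $X^T$ leaves the reduced cost vectors unchanged. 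Your argument is more self-contained (it does not depend on Theorem~\ref{thm:LPUSOeqGLCPUSO} or Lemma~\ref{lem:PGLCPScaling}, only on elementary LP algebra plus the stochastic-form lemma); the paper's argument is shorter given that its machinery is already in place and, by routing through the GLCP, makes the combinatorial equivalence of all three formulations more visible. Both are valid.

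One small point worth flagging, though it applies equally to the paper's own proof: you assert that $p = X^{-T}v$ with $v > \mathbf{0}$ for the \emph{given} witness $(X,Y)$. Theorems~\ref{thm:HiddKDual} and~\ref{thm:hiddenZDualChar} state that valid right-hand sides arise as $X^{-T}v$ for \emph{some} witness, and the paper implicitly uses the same reading you do. Since this is an ambiguity inherited from the source rather than introduced by your argument, it is not a gap in your proof per se, but it is worth being aware that the statement silently ties $p$ to the chosen witness $X$.
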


\begin{proof}
 By Theorem \ref{thm:LPUSOeqGLCPUSO}, the matrix $M$ has the hidden K-property and, moreover, the hidden K-GLCP$(M,q)$ induces the same USO as the Grid-LP \eqref{eqn:StartUSOGridLP}. Recall that $LMH$ is a hidden K-matrix with witness $(H^{-1}X,LY)$. A dual problem of the hidden K-GLCP$(LMH,Lq)$, which by Lemma~\ref{lem:PGLCPScaling} likewise induces the same USO, is, for instance, the Grid-LP $\min [q^T L | \mathbf{0}] y$ subject to $[HM^TL | I] y = Hp$ and $y \geq \mathbf{0}$, as $p:=X^{-T}v$ for some $v > \mathbf{0}$ and therefore $Hp=HX^{-T}v=(H^{-1}X)^{-T}v$ is a valid right-hand side. Finally, we multiply $X^TH^{-1}$ from the left to the system $[HM^TL | I] y = Hp$. Such an operation does not change the LP, as both $X$ and $H$ are nonsingular. (We also have $X^Tp \geq \mathbf{0}$ as $p=X^{-T}v$.)
\end{proof}

This reduction is slightly different from Morris' reduction~\cite{morris2012efficient}. We do not introduce an artificial state when formulating a Grid-LP as a discounted MDP, but lose in return the property that Dantzig's simplex method, which is not purely combinatorial, behaves the same for the two problems.

To summarize, discounted MDPs, Grid-LPs, and hidden K-GLCPs provide different formulations of the very same underlying problem. The collection of LP-USOs of grids characterizes the combinatorial structure of these problems. \emph{Single-switch policy iteration methods} for discounted MDPs, simplex-type methods for Grid-LPs, and simple principal pivoting methods for the hidden K-GLCP have identical algorithmic complexity (for corresponding combinatorial pivot rules).

%Ye's upper bound on the number of pivot steps for Dantzig's simplex method \cite{ye2011simplex, hansen2013strategy}, therefore, is valid for every Grid-LP and hidden K-GLCP, where the discount factor is usually hidden in the input data.

Ye \textit{et al.~}\cite{ye2011simplex, hansen2013strategy} determined upper bounds on the number of pivot steps of Dantzig's simplex method for discounted MDPs. The bounds are in terms of the number of states $n$, the total number of actions $m$, and the discount factor $\gamma$. They likewise hold for Grid-LPs and hidden K-GLCPs. For a hidden K-GLCP$(M,q)$, where the $m \times n$ matrix $M$ has a proper hidden K-witness $(X,Y)$ such that $[Y|X]$ is a stochastic K-matrix with factor $\gamma$, Theorem~\ref{thm:LPUSOisMDPUSO} immediately gives the bound $\smash{O(\frac{(m+n)n}{1-\gamma} \log \frac{n}{1-\gamma})}$ on the number pivot steps of principal pivoting with Dantzig's pivot rule\footnote{The initial bound on the number of pivot steps of Dantzig's pivot rule for discounted MDPs is taken from~\cite{hansen2013strategy}.}. A valid bound for arbitrary hidden K-GLCPs is obtained through Morris' reduction~\cite{morris2012efficient}.

The \emph{value iteration method} for discounted MDPs transforms into a solving method for (hidden) K-GLCPs. The method even generalizes into a solving scheme for hidden Z-GLCPs~\cite{klaus2014submodularity}. Conversely, \emph{Lemke's method} provides a new solving method for discounted MDPs~\cite{fearnley2010linear}.

\medskip

The following illustrates a first application of the many reductions we presented in this section. For a given discounted MDP, it is desired to find an MDP with the same optimal policies whose discount factor is as small as possible.

\begin{thm}
 Consider a discounted MDP$(P,r,\gamma)$ of type $b \in \mathbb{N}^n$. Let $f:= \min_{j \in [n], i \in [b_j]} \{p^j_{ij}\}$. The
 \begin{equation} \label{eqn:DisFacRed}
  \text{discounted MDP}((\gamma \slash  \kappa) (P - f E(b)), r,\kappa \slash \lambda)
 \end{equation}
 for $\kappa:=\gamma (1 - f)$ and $\lambda:=1-\gamma f$ has the same optimal policies.
\end{thm}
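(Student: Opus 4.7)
My plan is to verify the claim policy by policy through the Bellman equations. First I would check that the reduction is well-posed: since $f = \min_{j,i} p^j_{ij} \geq 0$, the matrix $P - f E(b)$ has nonnegative entries (for $k = j$, $p^j_{ij} - f \geq 0$ by the choice of $f$, and for $k \neq j$ the entry is $p^j_{ik} \geq 0$), and its row sums equal $1 - f$, so that multiplication by $\gamma/\kappa = 1/(1-f)$ produces a rowstochastic matrix. The new discount factor $\kappa/\lambda = \gamma(1-f)/(1 - \gamma f)$ lies in $[0,1)$ because $\gamma < 1$ rearranges to $\gamma(1-f) < 1 - \gamma f$. I tacitly assume $f < 1$; the degenerate case $f = 1$ forces every action at every state to loop with probability one and the claim is vacuous.

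Next, fix any policy $\pi$ and let $v^\pi$ and $v'^\pi$ denote its value functions in the original and the reduced MDP, respectively. Starting from the Bellman equation for $v'^\pi$, substituting the new transition matrix and new discount factor, and moving the self-loop contribution $(\gamma f / \lambda)\, v'^\pi_j$ to the left, I would invoke the identity $\lambda + \gamma f = 1$ to obtain
\[
 v'^\pi_j \;=\; \lambda\, r^j_{\pi(j)} + \gamma \sum_k p^j_{\pi(j),k}\, v'^\pi_k.
\]
Invertibility of $I - \gamma P_\pi$ renders this linear system uniquely solvable, and comparison with the original per-policy Bellman equation $v^\pi = r_\pi + \gamma P_\pi v^\pi$ immediately yields $v'^\pi = \lambda v^\pi$.

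The optimal value of a discounted MDP is the componentwise supremum over policies of $v^\pi$, and since $\lambda > 0$, the set of value-maximizing policies is preserved: $v'^* = \lambda v^*$ and the same $\pi^*$ achieves the maximum in each state. As a pointwise sanity check, I would verify that
\[
 \argmax_{i \in [b_j]} \Bigl\{ r^j_i + (\gamma/\lambda) \sum_k p^j_{ik}\, v'^*_k - (\gamma f / \lambda)\, v'^*_j \Bigr\}
\]
coincides with $\argmax_{i \in [b_j]} \{ r^j_i + \gamma \sum_k p^j_{ik}\, v^*_k \}$: the term $(\gamma f / \lambda)\, v'^*_j$ is independent of $i$ and can be dropped, and multiplying the remaining expression by $\lambda > 0$ together with $v'^* = \lambda v^*$ recovers the original criterion.

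The only nontrivial step is the algebraic simplification of the new Bellman equation, which is routine once the identities $\kappa = \gamma(1-f)$, $\lambda = 1 - \gamma f$, and $\lambda + \gamma f = 1$ are exploited; I do not foresee a genuine obstacle. An equally short alternative goes through the GLCP formulation: the K-GLCP matrix of the reduced MDP equals
\[
 E(b) - (\kappa/\lambda)(\gamma/\kappa)(P - fE(b)) \;=\; (1/\lambda)(E(b) - \gamma P),
\]
a positive scalar multiple of the original K-GLCP matrix, and the combinatorial correspondence of optimal policies then follows from Lemma~\ref{lem:PGLCPScaling} together with the shift-invariance of $q = -r + \mathbf{d}$ in the constant $d$.
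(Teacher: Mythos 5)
Your proof is correct, but your primary argument takes a genuinely different route from the paper's. The paper works through the GLCP machinery built up in earlier sections: it formulates the MDP as the K-GLCP$(E(b)-\gamma P,\,-r+\mathbf{d})$, applies Lemma~\ref{lem:PGLCPScaling} with the positive diagonal scaling $L = (1/\lambda)I$, $H = I$, and then observes algebraically that $(1/\lambda)(E(b)-\gamma P) = E(b) - (\kappa/\lambda)(\gamma/\kappa)(P-fE(b))$ is exactly the stochastic K-matrix of the reduced MDP, so the two problems induce the same USO and hence the same solution basis. Your main argument works instead at the level of per-policy value functions: fixing $\pi$, you substitute into the reduced Bellman equation, move the self-loop term, and exploit $\lambda + \gamma f = 1$ to obtain $v'^\pi = \lambda v^\pi$, from which the preservation of optimal policies is immediate since $\lambda > 0$ and the dropped term is independent of the action $i$. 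This Bellman-equation approach is more elementary --- it requires no GLCP or USO machinery --- and it makes the exact scaling $v'^\pi = \lambda v^\pi$ explicit, which the paper's proof does not. It also sidesteps a small subtlety in the paper's argument: the scaling $L = (1/\lambda)I$ sends the right-hand side $-r+\mathbf{d}$ to $(1/\lambda)(-r+\mathbf{d})$, which corresponds to reward $(1/\lambda)r$, not $r$; this is harmless (positive reward scaling preserves optimal policies) but is not spelled out in the paper, whereas your per-policy derivation keeps the stated reward $r$ throughout. Your closing GLCP paragraph essentially reproduces the paper's proof and your well-posedness remarks ($P-fE(b) \geq \mathbf{0}$, row sums $1-f$, $\kappa/\lambda \in [0,1)$, the degenerate $f=1$ case) are correct and more thorough than what the paper states.
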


\begin{proof}
 Let $E:=E(b)$. The initial MDP is solved by the K-GLCP$(E-\gamma P, -r + \mathbf{d})$ for sufficiently small $d < \mathbf{0}$. By Lemma~\ref{lem:PGLCPScaling}, the K-GLCP$((1\slash \lambda) (E-\gamma P), -r + \mathbf{d})$ induces the same USO. In order to verify that the latter K-GLCP represents the discounted MDP~\eqref{eqn:DisFacRed}, we observe that
 \begin{align*}
  (1 \slash \lambda) (E - \gamma P) & = (1 \slash \lambda)  (E - \gamma (f E + (P - f E))) \\
	                        & = (1 \slash \lambda)  (\lambda E + \gamma (P - f E)) \\
													& = E - (1 \slash \lambda)  \gamma (P - f E) \\
													& = E - (\kappa \slash \lambda) ( \gamma \slash \kappa) (P-f E)
 \end{align*}
 The matrix $(\gamma \slash \kappa) (P-f E)$ is rowstochastic as $P-f E \geq \mathbf{0}$ and $(\gamma \slash \kappa) (P-f E) \mathbf{1} = \mathbf{1}$. Hence, the matrix $(1 \slash \lambda) (E - \gamma P)$ is a stochastic K-matrix with factor $\kappa \slash \lambda$, which is in $[0,\gamma)$. The discount factor shrinks by the factor $(\kappa \slash \lambda) \slash \gamma = (1-f) \slash (1-\gamma f)$. Whenever $f > 0$, then there is a strict decrease in the discount factor.
\end{proof}

\section{Hidden K-GLCPs of type $b$ are K-GLCPs of type $b + \mathbf{1}$} \label{sec:HidKKRelation}

We discuss a technique to formulate hidden K-GLCPs as K-GLCPs. The dimension is preserved whereas the size of each block increases by one. We allow ourselves to cheat a bit. The reduction requires a hidden K-witness of the input matrix, which is usually not known and eventually needs to be computed. This can be done through solving an LP (see remarks on p.~\pageref{page:CompHidKWit}). Nevertheless, the result suggests that the K-GLCP is as difficult as the hidden K-GLCP---as difficult as linear programming over grids. Such a result also follows from our previous observations (cf. Sections \ref{sec:HidKGLCPLP} and \ref{sec:DiscMDPs}). At first glance, this seems surprising because every simple principal pivoting method solves ordinary K-LCPs of order $n$ in at most $2n$ pivot steps \cite{FonFukGar:Pivoting}.

\begin{thm} \label{thm:RelHiddKK}
Let any hidden K-GLCP$(M,q)$ of type $b \in \mathbb{N}^n$ be given. Suppose that $(X,Y)$ is a hidden K-witness of $M$. An $N \subseteq N(b + \mathbf{1})$ is a solution basis to the hidden K-GLCP$(M,q)$ if and only if $N \cup \{(j,b_j + 2) : \, j \in [n]\}$ is a solution basis to the
$$\text{K-GLCP}([Y|X],[q - M f|-f])$$
for any $f \in \mathbb{R}^n > \mathbf{0}$. Moreover, the USO of the grid G$(b + \mathbf{1})$ arising from the hidden K-GLCP is fully contained in the USO of the grid G$(b + \mathbf{2})$ arising from the K-GLCP.
\end{thm}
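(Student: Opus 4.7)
The plan is to exhibit an explicit bijection between the feasible pairs of the hidden K-GLCP$(M,q)$ and the K-GLCP$([Y|X], [q-Mf|-f])$, given by
\[
z' := X^{-1}(z + f), \qquad w'^j_i := w^j_i \text{ for } i \in [b_j], \qquad w'^j_{b_j+1} := z_j.
\]
Using $MX=Y$, a direct computation shows that $w - Mz = q$ is equivalent to $w' - [Y|X] z' = [q - Mf|-f]$: for $i \in [b_j]$ one has $w'^j_i - (Yz')^j_i = w^j_i - (M(z+f))^j_i = q^j_i - (Mf)^j_i$, and in the last row $w'^j_{b_j+1} - x_j^T z' = z_j - (z+f)_j = -f_j$. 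By Lemma~\ref{lem:WitnessProp} the matrix $X$ is nonsingular, so the substitution is invertible with inverse $z_j = w'^j_{b_j+1}$, $w^j_i = w'^j_i$.

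For the forward direction, starting from the nonnegative basic solution $(w,z)$ of the hidden K-GLCP with support in $N$, I would verify that the substituted $(w', z')$ witnesses $N'$ as a solution basis of the K-GLCP. Nonnegativity of $w'$ is immediate from $w, z \geq \mathbf{0}$. Because $(X,Y)$ is a proper hidden K-witness, $[Y|X]$ is a K-matrix (Proposition~\ref{prop:HidKWitChar}), and $X$---being a representative submatrix of $[Y|X]$---is itself a K-matrix; hence $X^{-1} \geq \mathbf{0}$ and is nonsingular (Theorem~\ref{thm:eqBlockZ}). Since $z + f > \mathbf{0}$, this yields $z' > \mathbf{0}$. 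The componentwise nature of the substitution shows that the support of $(w', z')$ is contained in $N \cup \{(j, b_j+2) : j \in [n]\} = N'$, and the complementary set $B = N(b+\mathbf{1}) \setminus N$ remains maximal complementary in $N(b+\mathbf{2})$ because it contains no index of the form $(j, b_j+2)$. The reverse direction runs the inverse substitution: from a nonnegative $(w', z')$ with support in $N'$ we recover $z_j = w'^j_{b_j+1} \geq 0$ and $w^j_i = w'^j_i \geq 0$, and the resulting $(w,z)$ has support in $N$.

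For the USO containment, I would identify $G(b+\mathbf{1})$ with the subgrid of $G(b+\mathbf{2})$ induced by $N(b+\mathbf{1}) \subseteq N(b+\mathbf{2})$: maximal complementary subsets of $N(b+\mathbf{1})$ are exactly the vertices of this subgrid, and adjacency is preserved. For any edge $\{B,C\}$ with $C = (B \setminus \{(j,i)\}) \cup \{(j,k)\}$ and $i,k \in [b_j+1]$, the orientation under each USO is the sign of the $(j,k)$-component of the corresponding (possibly sign-indefinite) basic solution. Since the substitution is a linear bijection of the underlying equation systems, it applies to the basic solutions directly, giving $w'^j_k = w^j_k$ when $k \leq b_j$ and $w'^j_{b_j+1} = z_j$ when $k = b_j+1$; the signs therefore coincide and the two USOs agree on the edge. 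The main conceptual obstacle is uncovering the right substitution, and in particular choosing the strictly positive shift $f$---this is precisely what forces $z' > \mathbf{0}$ in the forward direction, so that every $(j, b_j+2)$ is pushed into the K-GLCP's solution basis and the hidden K-GLCP on $G(b+\mathbf{1})$ embeds as a subgrid of the K-GLCP on $G(b+\mathbf{2})$.
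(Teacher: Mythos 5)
Your proof is correct and follows essentially the same route as the paper. The paper's argument parameterizes the feasible vectors of the K-GLCP system by $v\in\mathbb{R}^n$ via $x(v):=[[Yv+q-Mf\,|\,Xv-f]\,|\,v]$ and observes that its restriction to $N(b+\mathbf{1})$ parameterizes the feasible vectors of the hidden K-GLCP; your substitution $z'=X^{-1}(z+f)$, $w'^j_i=w^j_i$, $w'^j_{b_j+1}=z_j$ is exactly the inverse of that parameterization ($z=Xv-f$, $v=z'$), so the two accounts are the same bijection in different clothes. The only difference is ordering and emphasis: the paper first argues USO containment (assuming nondegeneracy) and then, from the block-$j$ constraint $w^j_{b_j+1}-x^jz=-f_j$ and the Z-sign pattern of $x^j$ with $f_j>0$, deduces $z_j>0$ and hence the shape of every solution basis; you deduce $z'>\mathbf{0}$ from $X^{-1}\geq\mathbf{0}$ nonsingular applied to $z+f>\mathbf{0}$, which is an equivalent route to the same conclusion.
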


\begin{proof}
We first prove that the USO $\phi$ of the grid G$(b + \mathbf{2})$ arising from the K-GLCP contains the USO $\varphi$ of the grid G$(b + \mathbf{1})$ arising from the hidden K-GLCP. In doing so, we assume nondegeneracy; otherwise we cannot speak of USOs. The feasible block vectors $x$ of type $b + \mathbf{2}$ to the system
	$$\left[ I | -[Y | X] \right] x = [q - M f|-f],$$
	where $I$ is supposed to be the horizontal block identity matrix of type $b +\mathbf{1}$, are determined by
	$$x(v):=[[Yv + q-M f | Xv - f] | v ]$$
  for $v \in \mathbb{R}^n$. For any maximal complementary $B \subseteq N(b + \mathbf{1})$, the orientation of the edges in $\phi$ that are incident to vertex $B$ are obtained through picking $u \in \mathbb{R}^n$ such that $x(u)_B=\mathbf{0}$. Note that the vector $u$ is uniquely determined because $[I|-[Y|X]]_{N}$ for $N:=N(b + \mathbf{2}) \backslash B$ is nonsingular. At the same time the subvectors
	$$x(v)_{N(b + \mathbf{1})}=[Yv + q-M f | Xv - f]$$
	for $v \in \mathbb{R}^n$ are the feasible vectors of type $b + \mathbf{1}$ to the system
	$$\left[ I | -M \right] x = q,$$
	where $I$ is supposed to be the horizontal block identity matrix of type $b$. The orientations of the edges in $\varphi$ that are incident to vertex $B$ are determined by the block vector $x(u)_{N(b + \mathbf{1})}$. Hence, the orientations of the edges in directions $N(b + \mathbf{1})$ coincide.
	
	Secondly, the last constraint in each block $j \in [n]$ of the K-GLCP under consideration is of the form
$$w^j_{b_j+1} - x^j z = -f_j,$$
where $x^j$ denotes the $j$th row in $X$. Since $X$ is a K-matrix, we have $\smash{x^j_j > 0}$ and $x^j_i \leq 0$ for all $i \neq j$. To satisfy the constraint with nonnegative $\smash{w^j_{b_j+1}}$ and $z$, we require that $z_j > 0$. Hence, every solution basis is of the form $N \cup \{(j,b_j + 2) : \, j \in [n]\}$ for some $N \subseteq N(b + \mathbf{1})$. Any solution vertex is contained in the subgrid USO $\varphi$ arising from the hidden K-GLCP.
\end{proof}

Note that for the first part of the proof it is not required that the vector $f$ is strictly positive---the assumption can be relaxed.

Next, we discuss the opposite direction with regard to degenerate instances.

\begin{lem} \label{lem:HidKmatrix}
 Let $M \in \mathbb{R}^{m \times n}$ be a K-matrix of type $b \in \mathbb{N}^n$. Any $M_{\overline{C}}M_C^{-1}$ for maximal complementary $C \subseteq N(b)$ is a hidden K-matrix of type $b - \mathbf{1}$.
\end{lem}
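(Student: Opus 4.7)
The plan is to exhibit an explicit proper hidden K-witness for $N := M_{\overline{C}} M_C^{-1}$ and then invoke Proposition \ref{prop:HidKWitChar}. The witness essentially reads itself off from $M$ once we use the P-matrix positivity certificate of Theorem \ref{thm:eqBlockZ}.

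First I would apply Theorem \ref{thm:eqBlockZ}(b) to the block K-matrix $M$ to obtain an $x \in \mathbb{R}^n > \mathbf{0}$ with $Mx > \mathbf{0}$, set $D := \diag(x)$, and propose the candidate witness
\[
  X := M_C \, D, \qquad Y := M_{\overline{C}} \, D.
\]
Then I would verify the three requirements of Proposition \ref{prop:HidKWitChar}. For the Z-property: $M_C$ is a representative submatrix of the block Z-matrix $M$, hence a square Z-matrix; every representative submatrix of $M_{\overline{C}}$ is likewise a representative submatrix of $M$ and therefore a square Z-matrix, so $M_{\overline{C}}$ is a block Z-matrix of type $b - \mathbf{1}$. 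Right-multiplication by the positive diagonal matrix $D$ scales columns by positive factors, which preserves the nonpositivity of off-diagonal entries and the positivity of diagonal entries; so $X$ and $Y$ remain Z-matrices of the correct shapes and type. For the product relation: $N X = M_{\overline{C}} M_C^{-1} M_C D = M_{\overline{C}} D = Y$ by construction.

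The only point that needs a careful look is $[Y | X] \mathbf{1} > \mathbf{0}$. Here the notation $[Y|X]$ appends the $j$-th row of $X$ as the new last row of the $j$-th block of $Y$, giving a block matrix of type $b$. Since the rows of $M_C$ and $M_{\overline{C}}$, taken together, are just the rows of $M$ rearranged inside each block, $[Y|X]$ coincides with $M D$ up to a permutation of the rows inside each block. Consequently $[Y|X]\mathbf{1}$ is a within-block row permutation of $M D \mathbf{1} = Mx$, which is strictly positive by the choice of $x$. By Proposition \ref{prop:HidKWitChar}, $(X,Y)$ is a proper hidden K-witness of $N$, and so $N$ is a hidden K-matrix of type $b - \mathbf{1}$.

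I do not anticipate a real obstacle: the content of the lemma is that the Schur-type quotient $M_{\overline{C}} M_C^{-1}$ inherits the hidden K-property directly from a factorization hiding inside $M$ itself. The only thing one has to be careful about is bookkeeping — that the Z-property survives the right scaling, and that the block-extension convention in the definition of $[\,\cdot\,|\,\cdot\,]$ matches the concatenation of $M_{\overline{C}}$ and $M_C$ up to an intra-block row permutation.
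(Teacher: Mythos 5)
Your proposal is correct and takes essentially the same approach as the paper: the paper's one-line proof simply asserts that $(M_C, M_{\overline{C}})$ is a hidden K-witness of $M_{\overline{C}}M_C^{-1}$, and your version spells this out, adding the harmless positive column rescaling by $D=\diag(x)$ so that the pair $(M_C D, M_{\overline{C}} D)$ visibly satisfies the $[Y|X]\mathbf{1}>\mathbf{0}$ condition of Proposition~\ref{prop:HidKWitChar}. The key observation in both is the same: $[M_{\overline{C}}|M_C]$ is $M$ up to an intra-block row permutation, hence a block K-matrix.
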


\begin{proof}
 The tuple $(M_C,M_{\overline{C}})$ is a hidden K-witness of $M_{\overline{C}}M_C^{-1}$.
\end{proof}

\begin{thm} \label{thm:KasHidK}
 Let any K-GLCP$(M,q)$ of type $b \in \mathbb{N}^n$ with $q_C \leq \mathbf{0}$ for $C = \{(j,b_j) : \, j \in [n]\}$ be given. If $N \subseteq N(b)$ is a solution basis to the
 $$\text{hidden K-GLCP}(M_{\overline{C}}M_C^{-1}, q_{\overline{C}}-M_{\overline{C}}M_C^{-1} q_C),$$
then $N \cup \{(j,b_j+1) : \, j \in [n] \}$ is a solution basis to the K-GLCP$(M,q)$.
\end{thm}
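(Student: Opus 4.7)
The plan is to reverse-engineer the substitution used in the proof of Theorem~\ref{thm:RelHiddKK}. Starting from a solution basis $N$ of the hidden K-GLCP, I would let $B := N(b) \setminus N$ be the complementary index set, so $B$ is maximal complementary in $N(b)$. Feasibility of the system defining the solution basis yields a pair $(w',z')$ with $w' - M_{\overline{C}}M_C^{-1}z' = q_{\overline{C}} - M_{\overline{C}}M_C^{-1}q_C$, $w',z' \geq \mathbf{0}$, and since $B$ is maximal complementary, also $z'_j \prod_{i=1}^{b_j-1}(w')^j_i = 0$ for each $j$.

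The key move is the substitution
\begin{equation*}
 w_{\overline{C}} := w', \quad w_C := z', \quad z := M_C^{-1}(z' - q_C),
\end{equation*}
which mirrors the one used in Theorem~\ref{thm:RelHiddKK}. I would then check that $(w,z)$ solves the K-GLCP$(M,q)$ with basis $N_{\text{new}} := N \cup \{(j, b_j+1) : j \in [n]\}$. Feasibility of $w - Mz = q$ splits blockwise: on $C$ one gets $z' - M_C M_C^{-1}(z' - q_C) = q_C$, and on $\overline{C}$ one gets $w' - M_{\overline{C}}M_C^{-1}(z' - q_C) = q_{\overline{C}}$ using the hidden K-GLCP equation. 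Nonnegativity of $w$ is immediate; for $z \geq \mathbf{0}$, I would invoke that $M_C$ is a representative submatrix of the K-matrix $M$, hence $M_C^{-1} \geq \mathbf{0}$ by Theorem~\ref{thm:eqBlockZ}(c), and combine this with $z' - q_C \geq \mathbf{0}$, which is exactly where the hypothesis $q_C \leq \mathbf{0}$ enters. Complementarity is clean: for each $j$,
\begin{equation*}
 z_j \prod_{i=1}^{b_j} w^j_i \;=\; z_j \cdot z'_j \cdot \prod_{i=1}^{b_j-1} (w')^j_i \;=\; 0,
\end{equation*}
since the trailing product vanishes by the complementarity of $(w',z')$.

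Finally, I would verify the basis correspondence. The complement $B_{\text{new}} := N(b+\mathbf{1}) \setminus N_{\text{new}}$ equals $B$ viewed inside $N(b+\mathbf{1})$, because adjoining the indices $(j, b_j+1)$ to $N$ absorbs exactly the $z$-positions into $N_{\text{new}}$ in each block. Hence $B_{\text{new}}$ is maximal complementary in $N(b+\mathbf{1})$, and the constructed $(w,z)$ vanishes on precisely these positions: $w^j_i = (w')^j_i = 0$ when $(j,i) \in B$ with $i < b_j$, and $w^j_{b_j} = z'_j = 0$ when $(j, b_j) \in B$. This confirms that $N_{\text{new}}$ is a solution basis of the K-GLCP$(M,q)$. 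The main, and essentially only, subtle point is the role of $q_C \leq \mathbf{0}$ in securing $z \geq \mathbf{0}$; the rest is a routine unpacking of the inverse substitution.
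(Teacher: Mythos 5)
Your proof is correct, but it takes a genuinely different route from the paper's. The paper's proof is a two-line reduction to Theorem~\ref{thm:RelHiddKK}: it observes that $(M_C,M_{\overline{C}})$ is a proper hidden K-witness of $M_{\overline{C}}M_C^{-1}$ (Lemma~\ref{lem:HidKmatrix}), sets $f=-q_C$, and notes that Theorem~\ref{thm:RelHiddKK} applies only when $f>\mathbf{0}$, whereas here $f\geq\mathbf{0}$. It therefore perturbs to $q(\epsilon):=[q_{\overline{C}}\,|\,q_C-\boldsymbol{\epsilon}]$, applies Theorem~\ref{thm:RelHiddKK} for each $\epsilon>0$, and appeals to stability of solution bases of P-GLCPs as $\epsilon\to 0$. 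You instead invert the change of variables from the proof of Theorem~\ref{thm:RelHiddKK} explicitly --- $w_{\overline{C}}:=w'$, $w_C:=z'$, $z:=M_C^{-1}(z'-q_C)$ --- and verify feasibility, nonnegativity, complementarity, and the basis correspondence by hand, using $M_C^{-1}\geq\mathbf{0}$ (Theorem~\ref{thm:eqBlockZ}(c)) and $q_C\leq\mathbf{0}$ precisely to secure $z\geq\mathbf{0}$. What each buys: the paper's argument is shorter and modular, reusing the earlier theorem, but it leans on an unproven (though standard) stability claim for P-GLCP solution bases; your argument is slightly longer but entirely self-contained, handles the degenerate boundary $q_C\not<\mathbf{0}$ directly without a limiting argument, and makes the exact role of the hypothesis $q_C\leq\mathbf{0}$ transparent. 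Both are valid; yours is arguably the cleaner proof to read.
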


\begin{proof}
 By Lemma \ref{lem:HidKmatrix}, the tuple $(M_C,M_{\overline{C}})$ is a hidden K-witness of $M_{\overline{C}}M_C^{-1}$, and thus the GLCP of type $b - \mathbf{1}$ under consideration is a hidden K-GLCP. According to Theorem \ref{thm:RelHiddKK}, the statement is true for perturbed right-hand sides $q(\epsilon):=[q_{\overline{C}}|q_C-\boldsymbol{\epsilon}]$ for $\epsilon > 0$. Since the solution bases of a P-GLCP are stable for $\epsilon$ sufficiently close to $0$, the result follows.
\end{proof}

Actually, any K-GLCP$(M,q)$ of type $b \in \mathbb{N}^n$ with $q_C \leq \mathbf{0}$ for some maximal complementary $C \subset N(b)$ can be reduced to a hidden K-GLCP---a reordering of rows in the same block is not a problem.

\medskip

For discounted MDPs and policy iteration methods, the intermediate total discounted rewards in expectation are statewise monotonically nondecreasing. The property generalizes to the setting of linear complementarity. For K-GLCPs and simple principal pivot methods as well as Lemke's method, the intermediate $z_j$ for $j \in [n]$ are monotonically nondecreasing. This can also be seen as a generalization of the \emph{local uniformity} property satisfied by the K-USOs of the $n$-cube~\cite{FonFukGar:Pivoting}.

For a P-GLCP$(M,q)$ of type $b \in \mathbb{N}^n$ and any vertex $B$ of the grid G$(b + \mathbf{1})$, let $(w^B,z^B)$  denote the corresponding basic solution. Note that $[w^B|z^B]_{\overline{B}} = [I|-M]_{\overline{B}}^{-1}q$ and $[w^B|z^B]_B = \mathbf{0}$.

\begin{prop} \label{prop:MonIncZKGLCP}
 For a K-GLCP$(M,q)$ of type $b \in \mathbb{N}^n$, let $B$ and $C:=(B \backslash \{ (j,i)\}) \cup \{(j,k)\}$ be adjacent vertices of the grid G$(b + \mathbf{1})$. If $([I|-M]_{\overline{B}}^{-1}q)^j_k < 0$, then $z^C \geq z^B$ while $z^C \neq z^B$.
\end{prop}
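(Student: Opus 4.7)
The plan is to describe each basic solution through a principal submatrix of a representative submatrix of $M$---which is a K-matrix by hypothesis---and then exploit that the inverse of a K-matrix is nonnegative. All the bookkeeping amounts to tracking which rows and columns survive the pivot and which do not.

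First I would fix notation. For any basis $D$ of the grid G$(b+\mathbf{1})$, let $A(D) := \{\alpha \in [n] : (\alpha, b_\alpha+1) \notin D\}$ denote the set of coordinates where $z_\alpha$ is basic and, for $\alpha \in A(D)$, let $s_\alpha^D \in [b_\alpha]$ be the unique index with $(\alpha, s_\alpha^D) \in D$. Feasibility of $(w^D, z^D)$ together with complementarity forces the entry of $w^D$ in row $(\alpha, s_\alpha^D)$ to vanish for every $\alpha \in A(D)$, and $z^D_\beta = 0$ for $\beta \notin A(D)$. Substituting in $w = Mz + q$ yields a linear system $K_D\, z^D_{A(D)} = -\tilde q_D$, where $K_D$ is the principal submatrix on index set $A(D)$ of any representative submatrix of $M$ whose row in block $\alpha \in A(D)$ is $(\alpha, s_\alpha^D)$. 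Since principal submatrices of K-matrices are K-matrices, $K_D^{-1} \geq \mathbf{0}$.

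Next I would split into three sub-cases according to whether $z_j$ switches basic status across the pivot: (A) $i, k \leq b_j$, so $A(C) = A(B)$; (B) $i = b_j+1$ and $k \leq b_j$, so $A(C) = A(B) \cup \{j\}$; or (C) $i \leq b_j$ and $k = b_j+1$, so $A(C) = A(B) \setminus \{j\}$. In each sub-case, I would subtract the $B$-system from the $C$-system. Because the rows from blocks $\alpha \neq j$ are identical in both systems, only block $j$ contributes something new, and the outcome is an equation $K\,(\Delta z)_A = r$ with $A := A(B) \cup A(C)$, $\Delta z := z^C - z^B$, and $K$ the K-matrix obtained by taking the representative submatrix of $M$ that uses row $(j, k)$ in block $j$ (``mixing'' the $B$- and $C$-structures when $i \neq b_j+1$) and restricting to the $A \times A$ principal part. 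The hypothesis $([I|-M]^{-1}_{\overline{B}} q)^j_k < 0$ reads $(w^B)^j_k < 0$ in sub-cases (A) and (B), and $z^B_j < 0$ in sub-case (C); combined with the Z-structure (off-diagonal entries of representative submatrices are nonpositive), this forces $r \geq \mathbf{0}$, and thus $(\Delta z)_A \geq \mathbf{0}$ via $K^{-1} \geq \mathbf{0}$. Coordinates of $\Delta z$ outside $A$ vanish automatically.

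Finally, to argue $\Delta z \neq 0$ I would exhibit a strictly positive coordinate, most naturally $\Delta z_j$. In sub-case (A), $\Delta z_j = (K^{-1})_{jj}\,\bigl(-(w^B)^j_k\bigr) > 0$ because the diagonal entries of the inverse of a P-matrix are positive. In sub-case (B), a Schur-complement calculation expresses $\Delta z_j = z^C_j$ as $-(w^B)^j_k$ divided by the Schur complement of $K_B$ in the enlarged K-matrix $K_C$, which is positive because the Schur complement of a nonsingular principal submatrix of a P-matrix is itself a P-matrix (a scalar, here). In sub-case (C), directly $\Delta z_j = -z^B_j > 0$. The main obstacle will be the sub-case bookkeeping: in each of the three configurations one has to correctly identify the representative submatrix of $M$ whose restriction to the appropriate index set yields the K-matrix $K$ of the subtraction step. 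Once this identification is made, the argument reduces to two standard properties of K-matrices---the inverse is nonnegative, and Schur complements of nonsingular principal submatrices of P-matrices remain P-matrices.
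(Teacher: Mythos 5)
Your approach is essentially the paper's: both write the difference $z^C-z^B$ as the image, under the nonnegative inverse of a K-matrix built from a representative submatrix of $M$, of a vector shown to be nonnegative via the Z-structure of the off-diagonal entries and the sign hypothesis on $([I|-M]_{\overline{B}}^{-1}q)^j_k$. Where you diverge is in the bookkeeping, and this is exactly where your write-up has a recoverable but genuine slip. The paper works uniformly with the index set $\overline{V}=\{\alpha:(\alpha,b_\alpha+1)\in\overline{C}\}$ (your $A(C)$) and the matrix $M_{U\overline{V}}$ with $U:=C\cap N(b)$, which automatically encodes the choice of block-$j$ row (or its absence) in all three configurations and requires no case split. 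You instead fix $A:=A(B)\cup A(C)$, which in sub-case (C) equals $A(B)\supsetneq A(C)$ and in particular contains $j$. There is then no ``row $(j,k)$'' to choose in block $j$ (since $k=b_j+1$ is the slack index, not a row of $M$), so your prescription for $K$ is ill-defined in that case; and if you instead fall back on the $B$-row $(j,i)$, the $j$-component of the resulting right-hand side $r$ becomes $(w^C)^j_i$, whose nonnegativity is not immediate from the stated hypotheses and certainly does not follow from the ``Z-structure'' argument you invoke.

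The fix, which recovers what the paper does, is to take the index set in sub-case (C) to be $A(C)=A(B)\setminus\{j\}$, i.e.\ the principal part $M_{U(C)A(C)}$ with no row at all from block $j$. Then subtracting the two systems over $\alpha\in A(C)$ yields $r_\alpha=m^\alpha_{s_\alpha j}z^B_j\ge0$ by the Z-structure together with $z^B_j<0$, and $\Delta z_{A(C)}\ge\mathbf{0}$ follows from $K^{-1}\ge\mathbf{0}$ as you intend; the coordinate $\Delta z_j=-z^B_j>0$ is handled separately, exactly as you already do. With that correction the argument goes through and matches the paper's, though the paper's unified formulation via $U$ and $\overline{V}$ is cleaner and dispenses with the three-way split entirely. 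Your Schur-complement computation in sub-case (B) is correct but heavier than needed: the positivity of $(K^{-1})_{jj}$, which you already use in sub-case (A), covers it.
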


\begin{proof}
Let $U:=C \cap N(b)$, $V:=\{j : \, (j, b_j+1) \in C \}$, and $\overline{V}:=\{j : \, (j, b_j+1) \in \overline{C} \}$. Note that $\smash{z^C_{\overline{V}}=-M_{U\overline{V}}^{-1}q_U}$. We have
$$z^C_{\overline{V}} - z^B_{\overline{V}} = -M_{U\overline{V}}^{-1}(q_{U}+M_{U\overline{V}}z^B_{\overline{V}}),$$
where we claim that the right-hand side is nonnegative. First, $-M_{U\overline{V}} ^{-1} \leq \mathbf{0}$ because $M_{U\overline{V}}$ is a square K-matrix. Secondly, it holds that
\begin{align*}
q_U+M_{U\overline{V}}z^B_{\overline{V}} & \leq q_{U}+ M_{U\overline{V}}z^B_{\overline{V}} + M_{UV}z^B_{V} \\
 & = q_{U}+ M_{U}z^B \leq \mathbf{0}.
\end{align*}
The first inequality holds because $M_{UV} \leq \mathbf{0}$ and $z^B_{V} \leq \mathbf{0}$. More precisely, if by any chance $k = b_j +1$ for some $j \in [n]$, then $j \in V$ and $\smash{z^B_j = ([I|-M]_{\overline{B}}^{-1}q)^j_k < 0}$ is the only negative entry in $\smash{z^B_{V}}$,  otherwise $\smash{z^B_{V}=\mathbf{0}}$. For the second inequality to hold, it is crucial that $\smash{([I|-M]_{\overline{B}}^{-1}q)^j_k < 0}$ if $k \in [b_j]$. Hence $\smash{z^C \geq z^B}$. Obviously $\smash{z_V^C \neq z_V^B}$ if $k = b_j +1$. Otherwise $\smash{z_{\overline{V}}^C \neq z_{\overline{V}}^B}$ because $\smash{M_{U\overline{V}} ^{-1}}$ is a P-matrix and thus has strictly positive diagonal elements.
\end{proof}

\section{A strongly polynomial reduction from the P-GLCP to the P-LCP} \label{sec:PRed}

Nohan, Neogy, and Sridhar \cite{mohan1996generalized} presented a reduction from the GLCP to the ordinary LCP. The reduction is such that P-GLCPs map to LCPs with singular Q-matrices. We would like to propose a specialized reduction that preserves P-property. In doing so, the computational complexity is guaranteed to stay in $\text{NP} \cap \text{coNP}$.

%Friedmann \textit{et al.}~provided subexponential lower bounds for the \textsc{Random Facet} and \textsc{Zadeh}'s pivot rule for linear programming \cite{friedmann2011subexponential, friedmann2011subexponentialZadeh}. The LPs they come up with arise from discounted MDPs, and thus the rules are inefficient for LP-USOs of grids. It is an open question whether the rules are also inefficient for LP-USOs of cubes.
%Hence, by Theorem \ref{thm:LPUSOeqGLCPUSO}, these pivot rules are also inefficient for hidden K-GLCPs.
%We propose a polynomial reduction from the from the P-GLCP to the P-LCP that also preserves the hidden K-property. In a second step, we may interpret this reduction in the setting USOs and investigate the behavior of the mentioned pivot rules on LP-USOs of cubes.

%In order to identify the ``key problem'' of linear programming, it would be interesting to know whether the rules are as well inefficient for the subset of LPs over cubes. Up to now, it is widely believed that if any pivoting algorithm is efficient for LPs over cubes, then it is efficient for any LP. If this conjecture is actually true, then linear programming over cubes is the ``key problem''. To approach this question, we first give a polynomial-time reduction from the P-GLCP to the P-LCP that also preserves the hidden K-property. In a second step, we may interpret this reduction in the setting USOs and investigate the behavior of concrete pivot rules.

Consider a P-matrix $M \in \mathbb{R}^{m \times n}$ and a vector $q \in \mathbb{R}^{m}$, both of the same type $b \in \mathbb{N}^n$. We assume without loss of generality that $m^j_{\cdot j}=\mathbf{1}$ for every $j \in [n]$ (cf. Lemma \ref{lem:PGLCPScaling}).

The P-GLCP($M,q$) is to find a vector $z \in \mathbb{R}^n$ and a block vector $w \in \mathbb{R}^m$ of type $b$ such that
\begin{equation} \label{eq:origGLCP}
 \begin{aligned}
  z_j & = w^j_i - \sum_{k \neq j} m^j_{ik} z_k - q^j_i& \fall j \in [n], \fall i \in [b_j], \\
  w,z & \geq \mathbf{0}, & \\
  z_j \prod_{i = 1}^{b_j} w^j_i & = 0 & \fall j \in [n].
 \end{aligned}
\end{equation}

The reduction proceeds by means of iterations. In every iteration, the size of some block decreases by one. This is attained by creating two new blocks, each of size one.

Consider the following GLCP, where we assume that $b_n \geq 2$. Find a vector $v \in \mathbb{R}^{n+2}$ and block vector $u \in \mathbb{R}^{m+1}$ of type $(b_1, \ldots, b_{n-1}, b_n-1, 1, 1)$ such that

\begin{subequations} \label{eq:reducedGLCP}
 \begin{equation}
  \begin{aligned}
   v_j & = u^j_i - \sum_{k \neq j} m^j_{ik} v_k - m^j_{in} v_{n+2} - q^j_i & \fall j \in [n-1], \fall i \in [b_j], \\
	 v_n & = u^n_i - \sum_{k \neq n} m^n_{ik} v_k - q^n_i & \fall i \in [b_n-1], \\
	 v_{n+1} & = u^{n+1}_{1} - \sum_{k \neq n} m^n_{b_{n}k} v_k - q^n_{b_n}, & \\
	 v_{n+2} & = u^{n+2}_1 + v_{n+1} - v_n, &
	\end{aligned}
 \end{equation}
both $u$ and $v$ are nonnegative, and the complementarity constraints are satisfied, i.e., we have
 \begin{equation}
  \begin{aligned}
	 v_j \prod_{i = 1}^{b_j} u^j_i = 0 \quad \fall j \in [n-1], \quad v_n \prod_{i = 1}^{b_n-1} u^n_i = 0, \quad v_{n+1} u^{n+1}_1 = 0, \quad v_{n+2} u^{n+2}_1  = 0.
  \end{aligned}
 \end{equation}
\end{subequations}
Suppose that $(w, z)$ is the solution to the P-GLCP \eqref{eq:origGLCP}. The reader is asked to verify that a solution $(u,v)$ to the GLCP \eqref{eq:reducedGLCP} is obtained as follows:
\begin{equation*}
 \begin{aligned}
  v_j & :=z_j & \fall j \in [n-1], \\
	u^j & :=w^j & \fall j \in [n-1], \\
	v_n & := z_n - \min \left\{z_n, w^n_1, \ldots, w^n_{b_n-1} \right\}, \\
	u^n_i & := w^n_i - \min \left\{z_n, w^n_1, \ldots, w^n_{b_n-1} \right\} & \fall i \in [b_n-1], \\
	v_{n+1} & := z_n - \min \{z_n, w^n_{b_n}\}, \\
	u^{n+1}_1 & := w^n_{b_n} - \min \{z_n, w^n_{b_n}\}, \\
	v_{n+2} & := \begin{cases} 0 &\mbox{if } v_{n} >  v_{n+1}, \\ v_{n+1} - v_{n} & \mbox{otherwise, } \end{cases} \\
	u^{n+2}_1 & := \begin{cases} v_{n} - v_{n+1} &\mbox{if } v_{n} >  v_{n+1}, \\ 0 & \mbox{otherwise. } \end{cases}
 \end{aligned}
\end{equation*}
The other direction is crucial for the reduction to work.

\begin{lem} \label{lem:solRed}
For a solution $(u,v)$ to the GLCP \eqref{eq:reducedGLCP}, the solution $(w,z)$ to the P-GLCP \eqref{eq:origGLCP} is obtained as follows:
\begin{equation*}
 \begin{aligned}
  z_j & :=v_j & \fall j \in [n-1], \\
	w^j & :=u^j & \fall j \in [n-1], \\
	z_n & := \max \{v_n, v_{n+1} \}, \\
	w^n_i & := \begin{cases} u^n_i &\mbox{if } v_{n} >  v_{n+1}, \\ u^n_i + v_{n+2} & \mbox{otherwise, } \end{cases} & \fall i \in [b_n-1], \\
	w^n_{b_n} & := \begin{cases} u^{n+1}_1 + u_1^{n+2} &\mbox{if } v_{n} >  v_{n+1}, \\ u^{n+1}_1 & \mbox{otherwise. } \end{cases}
 \end{aligned}
\end{equation*}
\end{lem}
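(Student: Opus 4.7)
My plan is to verify that the pair $(w,z)$ defined in the lemma satisfies all three conditions of the P-GLCP \eqref{eq:origGLCP}: nonnegativity, the linear system, and the complementarity constraints. The pivot of the whole argument is a short case analysis driven by complementarity on the new singleton block $n+2$.

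The last equation of \eqref{eq:reducedGLCP} reads $v_{n+2} - u^{n+2}_1 = v_{n+1} - v_n$, and complementarity $v_{n+2} u^{n+2}_1 = 0$ combined with $v_{n+2}, u^{n+2}_1 \geq 0$ forces either $v_{n+2} = 0$ with $u^{n+2}_1 = v_n - v_{n+1} \geq 0$ (so $v_n \geq v_{n+1}$), or $u^{n+2}_1 = 0$ with $v_{n+2} = v_{n+1} - v_n \geq 0$ (so $v_n \leq v_{n+1}$). Either branch yields the central identity
\begin{equation*}
 z_n \;=\; \max\{v_n, v_{n+1}\} \;=\; v_n + v_{n+2}.
\end{equation*}
The case split in the lemma's definitions of $w^n_i$ and $w^n_{b_n}$ is precisely this dichotomy, so the uniform formulas $w^n_i = u^n_i + v_{n+2}$ for $i \in [b_n - 1]$ and $w^n_{b_n} = u^{n+1}_1 + u^{n+2}_1$ hold in both branches. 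Nonnegativity of $(w,z)$ is then immediate.

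For the linear system I plan three short verifications. For $j \in [n-1]$, I substitute $z_n = v_n + v_{n+2}$ into the reduced equation for $v_j$: the term $m^j_{in} v_n$ from the sum merges with the separate term $m^j_{in} v_{n+2}$ into $m^j_{in} z_n$, giving the required equation. For $j = n$ and $i < b_n$, adding $v_{n+2}$ to both sides of the reduced equation for $v_n$ turns the left-hand side into $z_n$ and the right-hand side constant $u^n_i$ into $u^n_i + v_{n+2} = w^n_i$. For $j = n$ and $i = b_n$, adding the synchronization equation for $v_{n+2}$ to the equation for $v_{n+1}$ converts the left-hand side to $v_n + v_{n+2} = z_n$ and the right-hand side to $u^{n+1}_1 + u^{n+2}_1 - \sum_{k \neq n} m^n_{b_n k} v_k - q^n_{b_n} = w^n_{b_n} - \sum_{k \neq n} m^n_{b_n k} z_k - q^n_{b_n}$.

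For complementarity, the cases $j \in [n-1]$ are trivial since $(z_j, w^j) = (v_j, u^j)$. For $j = n$ I argue that $z_n > 0$ forces some entry of $w^n$ to vanish. When $v_n > v_{n+1}$, we have $z_n = v_n > 0$ and the reduced condition $v_n \prod_{i=1}^{b_n - 1} u^n_i = 0$ supplies an $i \in [b_n-1]$ with $u^n_i = w^n_i = 0$. When $v_n \leq v_{n+1}$ and $z_n = v_{n+1} > 0$, the reduced condition $v_{n+1} u^{n+1}_1 = 0$ gives $u^{n+1}_1 = w^n_{b_n} = 0$ (note $v_{n+2} = 0$ is not needed here, since $w^n_{b_n} = u^{n+1}_1$ in this branch). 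No essential obstacle is expected; the only care required is bookkeeping the case split and verifying that the extra slack variable $v_{n+2}$ introduced by the reduction is absorbed correctly into each equation of the original block $n$.
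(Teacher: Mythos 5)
Your proof is correct and follows essentially the same approach as the paper's: direct verification that the defined pair $(w,z)$ satisfies the linear system, nonnegativity, and complementarity, driven by the dichotomy that the block-$(n+2)$ complementarity $v_{n+2}u^{n+2}_1 = 0$ imposes on the sign of $v_n - v_{n+1}$. The one organizational difference, which is a genuine improvement, is that you first extract the uniform identity $z_n = v_n + v_{n+2}$ together with the case-free formulas $w^n_i = u^n_i + v_{n+2}$ and $w^n_{b_n} = u^{n+1}_1 + u^{n+2}_1$, so the verification of the linear equations is done once rather than duplicated across the two cases as in the paper; the case split is then confined to the complementarity check, where it is genuinely needed.
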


\begin{proof}
 \textit{Case} $v_n > v_{n+1}$. Then $z_n=v_n$. Since $v_{n+2}=0$, the first $n-1$ blocks of P-GLCP \eqref{eq:origGLCP} and the corresponding complementarity constraints are obviously satisfied. For rows $i \in [b_n-1]$ of block $n$, we verify that
\begin{align*}
 z_n & = v_n \\
     & = u^n_i - \sum_{k \neq n} m^n_{ik} v_k - q^n_i \\
		 & = w^n_{i} - \sum_{k \neq n} m^n_{ik} z_k - q^n_{i}.
\end{align*}
Similarly, for row $b_n$, we have
\begin{align*}
 z_n & =v_{n+1} + u_1^{n+2} \\
     & = u^{n+1}_{1} - \sum_{k \neq n} m^n_{b_{n}k} v_k - q^n_{b_n} + u_1^{n+2} \\
		 & = w^n_{b_n} - \sum_{k \neq n} m^n_{b_{n}k} z_k - q^n_{b_n}.
\end{align*}
The complementarity constraint holds because
 $$z_n \prod_{i = 1}^{b_n-1} w^n_i =  v_n \prod_{i = 1}^{b_n-1} u^n_i = 0.$$

 \textit{Case} $v_{n+1} \geq v_n$. Then $z_n=v_{n+1}$. Since $v_{n+2}=v_{n+1}-v_n$, the first $n-1$ blocks of GLCP \eqref{eq:origGLCP} and the corresponding complementarity constraints are satisfied. For rows $i \in [b_n-1]$ of block $n$, we verify that
 \begin{align*}
 z_n & = v_n + v_{n+2} \\
     & = u^n_i - \sum_{k \neq n} m^n_{ik} v_k - q^n_i + v_{n+2} \\
		 & = w^n_{i} - \sum_{k \neq n} m^n_{ik} z_k - q^n_{i}.
\end{align*}
Similarly, for row $b_n$, we have
 \begin{align*}
  z_n & = v_{n+1} \\
			& = u^{n+1}_{1} - \sum_{k \neq n} m^n_{b_{n}k} v_k - q^n_{b_n}  \\
			& = w^n_{b_n} - \sum_{k \neq n} m^n_{b_{n}k} z_k - q^n_{b_n}.
 \end{align*}
The complementarity constraint holds because either $z_n=v_{n+1}=0$ or $w^n_{b_n}=u^{n+1}_1=0$.
\end{proof}

As seen, there is a one-to-one correspondence between the solutions to the two GLCPs. From this it may already follow that \eqref{eq:reducedGLCP} is also a P-GLCP---because of arbitrary right-hand side $q$. Recall also (f) in Theorem \ref{thm:altBlockP}. We give a formal proof of this fact for the sake of completeness.

The reduction yields a GLCP($M',q'$) of type $b':=(b_1, \ldots, b_{n-1}, b_n-1, 1, 1)$, where
$$M'=\begin{bmatrix}
M^1_{\cdot [n-1]} & m^1_{\cdot n} & \mathbf{0} & m^1_{\cdot n} \\
M^2_{\cdot [n-1]} & m^2_{\cdot n} & \mathbf{0} & m^2_{\cdot n} \\
\vdots & \vdots & \vdots & \vdots \\
M^{n-1}_{\cdot [n-1]} & m^{n-1}_{\cdot n} & \mathbf{0} & m^{n-1}_{\cdot n} \\
M^n_{[b_n-1][n-1]} & \mathbf{1} & \mathbf{0} & 0 \\
m^n_{b_n [n-1]} & 0 & 1 & 0 \\
\mathbf{0}^T & 1 & -1 & 1
\end{bmatrix}
\text{ and } q'=
\begin{bmatrix} q^1 \\ q^2 \\ \vdots \\ q^{n-1} \\ q^n_{[b_n-1]} \\ q^n_{b_n} \\ 0 \end{bmatrix}.$$
The dimension of the problem is $n+2$ and the total number of block sizes equals $m+1$.

\begin{lem} \label{lem:isP}
 The matrix $M'$ is a P-matrix of type $b'$.
\end{lem}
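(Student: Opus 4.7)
The plan is to verify the definition directly: every representative submatrix of $M'$ is a square P-matrix, i.e., has all principal minors positive. An equivalent formulation I will use is: for every representative submatrix $A$ of $M'$ and every $S \subseteq [n+2]$, the principal submatrix $A_{SS}$ has positive determinant. This reformulation is the right one because a principal minor of a representative submatrix of $M'$ coincides with the determinant of a representative submatrix of the block principal submatrix of $M'$ indexed by $S$, and block principal submatrices of block P-matrices are again block P-matrices, so the reduction to $M$ will apply at every level.

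I will fix such an $A$ (specified by row choices $i_j \in [b'_j]$) and $S$ and show $\det A_{SS} > 0$. The shape of $A_{SS}$ depends only on $S \cap \{n, n+1, n+2\}$, yielding eight cases. The common feature enabling a uniform treatment is that the rows of $A$ contributed by blocks $n$, $n+1$, $n+2$---namely $(m^n_{i_n,[n-1]}, 1, 0, 0)$, $(m^n_{b_n,[n-1]}, 0, 1, 0)$, and $(\mathbf{0}^T, 1, -1, 1)$---are very sparse in the last three columns. In each case I will perform a short sequence of determinant-preserving column operations---for instance, subtracting column $n$ of $A_{SS}$ from column $n+2$ when both indices lie in $S$, or adding column $n+2$ to column $n+1$ when $\{n+1, n+2\} \subseteq S$ but $n \notin S$---to produce a column with a single nonzero entry, and then cofactor-expand along that column. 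This strips off the rows coming from blocks $n+1$ and $n+2$ and, when applicable, $n$.

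After these reductions the resulting matrix is, in every case, a representative submatrix of a block principal submatrix of the original $M$, whose determinant is positive by the block P-property of $M$. The most involved case is $\{n, n+1, n+2\} \subseteq S$: after the column operation and two cofactor expansions, $\det A_{SS}$ appears as $\det C + \det \widetilde{B}$, where $C$ is the representative submatrix of the block principal submatrix of $M$ indexed by $S \cap [n-1]$ determined by the row choices $i_j$, and $\widetilde{B}$ coincides with the representative submatrix $R$ of the block principal submatrix of $M$ indexed by $(S \cap [n-1]) \cup \{n\}$ using row $b_n$ of block $n$, except that its $(n,n)$-entry is $0$ rather than $m^n_{b_n,n} = 1$. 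A one-line Schur-complement identity applied to $\widetilde{B} = R - e_n e_n^T$ then collapses $\det C + \det \widetilde{B}$ into $\det R$, which is positive. Interestingly, the choice of $i_n$ in block $n$ of $M'$ drops out entirely, in harmony with the fact that the reduction contracts block $n$ by one row.

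The main obstacle I anticipate is purely bookkeeping: carefully tracking signs across the eight cases and ensuring the correct sign emerges from each chain of cofactor expansions. Conceptually, however, once the column-elimination strategy above is set up, each case reduces routinely to the block P-property of a block principal submatrix of $M$, so no genuinely new idea beyond that strategy will be needed.
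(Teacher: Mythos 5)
Your proposal is correct and amounts to the same underlying idea as the paper's proof: reduce every principal minor of a representative submatrix of $M'$, via elimination / cofactor expansion / a rank-one determinant update, to a principal minor of a representative submatrix of $M$. The paper treats the full $(n+2)\times(n+2)$ determinant with a single Schur-complement step (clearing the last row against the $(n+2,n+2)$ entry) followed by a cofactor expansion along the $(n+1)$st column, then dispatches all smaller principal minors with the phrase ``by similar arguments''; you instead carry out the same reduction explicitly in all eight cases indexed by $S\cap\{n,n+1,n+2\}$, using column operations where the paper uses a rank-one subtraction, and in the hardest case ($\{n,n+1,n+2\}\subseteq S$) recovering $\det R$ from $\det C + \det\widetilde B$ by the matrix determinant lemma. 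Your extra observation that block principal submatrices of block P-matrices are again block P-matrices is exactly what makes the casework close cleanly and is correct. The only thing worth trimming in a writeup is that, once you notice your column operations kill the dependence on $i_n$, several of the eight cases coincide and the argument compresses to something closer to the paper's length.
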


\begin{proof}
 Let
 $$B':=\{(1,i), (2, j), \dots, (n-1,k), (n,l), (n+1, 1), (n+2,1)\}$$
 be any maximal complementary subset of $N(b')$. The corresponding representative submatrix $M'_{B'}$ is of the form
 $$\begin{bmatrix}
  A & c \\ d^T & 1
 \end{bmatrix}$$
 for $A:=(M'_{B'})_{[n+1] [n+1]}$, $c:=(m^1_{i n}, m^2_{j n}, \ldots, m^{n-1}_{k n} , 0, 0)^T$, and $d^T:=(\mathbf{0}^T, 1, -1)$. %Square matrix $A$ is a P-matrix through the following observations:
 %\vspace{+2mm}
 %\begin{compactenum}[\rm(i)]\itemsep2mm
  %\item A matrix obtained from $M$ by deleting any row $i \in [b_j]$ from a block $j \in [n]$ with $b_j \geq 2$ is a block P-matrix of type ($b_1, \ldots, b_{j-1}, b_j-1, b_{j+1}, \ldots, b_n$).
  %\item A matrix
   %$$\begin{bmatrix} M & 0 \\ M^{n+1} & \mathbf{1} \end{bmatrix}$$
   %for any $M^{n+1} \in \mathbb{R}^{k \times n}$ is a block P-matrix of type ($b_1, b_2, \ldots, b_n, k$).
 %\end{compactenum}
 %\vspace{+2mm}
Note that $\det  M'_{B'} = \det A - cd^T$, where
 $$A - cd^T=
 \begin{bmatrix}
  m^1_{i [n-1]} & 0 & m^1_{i n} \\
  m^2_{j [n-1]} & 0 & m^2_{j n} \\
  \vdots & \vdots & \vdots \\
  m^{n-1}_{k [n-1]} & 0 & m^{n-1}_{k n} \\
  m^n_{l[n-1]} & 1 & 0  \\
  m^n_{b_n [n-1]} & 0 & 1
 \end{bmatrix}.
 $$
 Hence $\det M'_{B'} = \det M_B > 0$ for maximal complementary subset
 $$B:=(B' \backslash \{(n,l), (n+1,1), (n+2,1)\}) \cup \{(n,b_n)\}$$
 of $N(b)$. By similar arguments, every principal minor of $M'_{B'}$ is equal to some principal minor of $M_B$.
\end{proof}

\begin{thm} \label{lem:KGLCPRed}
 The P-GLCP reduces to the P-LCP in strongly polynomial time.
\end{thm}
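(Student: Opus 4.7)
The plan is to apply the single-step reduction constructed above iteratively until every block has size one, at which point the resulting P-GLCP is, by the definition given in the preliminaries, a P-LCP. First, I would invoke Lemma~\ref{lem:PGLCPScaling} once to rescale columns so that $m^j_{\cdot j} = \mathbf{1}$ for every $j \in [n]$, which is the normalization presupposed by the reduction step. Then I would repeatedly pick any block with $b_j \geq 2$, permute it into last position, and apply the reduction of~\eqref{eq:reducedGLCP} to replace that block by one of size $b_j - 1$ together with two new blocks of size one. Lemmas~\ref{lem:solRed} and~\ref{lem:isP} guarantee, respectively, that solutions translate bijectively in strongly polynomial time and that the P-property is preserved, so the invariant ``P-GLCP'' is maintained throughout the loop.

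To bound the number of iterations, I would use the potential $\Phi(b) := \sum_{j}(b_j - 1) = m - n$. Each iteration decreases $\Phi$ by exactly one, since the processed block shrinks by one while the two newly created blocks, both of size one, contribute zero to $\Phi$. As $\Phi$ starts at $m - n$ and the loop terminates exactly when $\Phi = 0$, precisely $m - n$ iterations are required. After $k$ iterations the GLCP has dimension $n + 2k$ and $m + k$ rows, so the terminal P-LCP has dimension and order $2m - n = O(m)$; in particular the matrix size never blows up, and each individual iteration performs only $O((m+n)^2)$ arithmetic operations on rational numbers that arise as copies of entries of the input.

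A small but necessary bookkeeping check is that the column normalization $m^j_{\cdot j} = \mathbf{1}$ survives each iteration automatically. Inspecting the block matrix $M'$ displayed just before Lemma~\ref{lem:isP}, one sees that for $j \in [n-1]$ the $j$-th column inside block $j$ is unchanged, that the truncated block $n$ keeps the all-ones pattern in its $n$-th column, and that the two new size-one blocks $n+1$ and $n+2$ each carry a single $1$ in the corresponding diagonal position. Hence no further application of Lemma~\ref{lem:PGLCPScaling} is needed between iterations.

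Combining these observations, the total work is $O((m-n)\cdot(m+n)^2)$ arithmetic operations, a bound that depends only on the number of variables and constraints of the input and not on its binary encoding length, which is the definition of a strongly polynomial reduction. The main, and really only, obstacle is verifying that the three inductive invariants --- P-property, normalization, and linear growth of dimension --- persist simultaneously across iterations; each is dispatched by a lemma already proved in this section, so no new technical ingredient is required beyond a careful induction.
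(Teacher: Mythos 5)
Your proposal is correct and follows the same route as the paper: apply the single-block reduction of~\eqref{eq:reducedGLCP} iteratively $m-n$ times, invoking Lemma~\ref{lem:isP} for preservation of the P-property and Lemma~\ref{lem:solRed} for recovery of the solution. Your explicit potential argument and the check that the diagonal normalization $m^j_{\cdot j}=\mathbf{1}$ survives each step are correct details that the paper leaves implicit.
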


\begin{proof}
 Consider any P-GLCP of type $b \in \mathbb{N}^n$. We apply the reduction presented above $m-n$ times until each block is of size one. By Lemma \ref{lem:isP}, we end up with an ordinary P-LCP of dimension $2m-n$. Given the unique solution to the final P-LCP, the unique solution to the initial P-GLCP is obtained trough retracing the reduction steps backwards using Lemma~\ref{lem:solRed}.
\end{proof}

The reduction is strongly polynomial. It certainly does not preserve the K-property. We keep investigating whether some modification preserves the hidden K-property.

This result illustrates that the dimension $n$ of a P-GLCP can be sacrificed in order to get an ordinary P-LCP. It would be interesting to investigate whether there is a connection to a result obtained by G\"artner, Morris, and R\"ust \cite{GarMorRus:Unique}, who observed that certain randomized path-following algorithms for general USOs of $n$-grids run in an expected linear number in $m$ of pivot steps for fixed dimension $n$.

 %Through the results in Section \ref{sec:HidKKRelation}, it then follows that hidden K-GLCPs can be reduced to ordinary hidden K-LCPs. Unfortunately, these reductions are only strongly polynomial if one can compute hidden K-witnesses in strongly polynomial time. See Section \ref{sec:HidKReduction} for details. An interesting open question is whether the hidden K-GLCP can be reduced to the hidden K-LCP in strongly polynomial time without usage of any hidden K-witness. If so, strong polynomiality of linear programming over cubes implies strong polynomiality of linear programming over grids. This is an important question with respect to simplex methods.

\section{A reduction from the K-GLCP to the K-GLCP with blocks of size at most two} \label{sec:KReduction}

Discounted MDPs can be formulated as K-GLCPs (cf.~Proposition \ref{prop:MDPsAsKGLCPs}). Hence, existence of a strongly polynomial reduction from the K-GLCP to the ordinary K-LCP, which admits efficient pivoting methods \cite{FonFukGar:Pivoting}, would be a big surprise. We, nevertheless, propose a (strongly) polynomial reduction from the K-GLCP with blocks of arbitrary size to the K-GLCP with blocks of size at most two. This is probably best what we can expect. The reduction is slightly different from the one presented in the previous section and operates on stochastic K-GLCPs.

\begin{lem} \label{lem:NormFormComp2}
 Let $M \in \mathbb{R}^{m \times n}$ be a K-matrix. If either a vector $x \in \mathbb{R}^n > \mathbf{0}$ with $Mx>\mathbf{0}$ or a proper hidden K-witness of $M$ is known, then a stochastic form of $M$ can be computed in strongly polynomial time.
\end{lem}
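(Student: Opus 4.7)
The plan is to turn the existential construction in the proof of Lemma~\ref{lem:NormFormComp} into an explicit procedure and to bound the number of arithmetic operations and the bit length of every intermediate quantity.

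First, I would reduce the ``proper witness'' hypothesis to the ``positive $x$'' hypothesis. Given a proper hidden K-witness $(X,Y)$ of $M$, Proposition~\ref{prop:HidKWitChar} yields $[Y|X]\mathbf{1} > \mathbf{0}$, and in particular $X\mathbf{1} > \mathbf{0}$ and $Y\mathbf{1} > \mathbf{0}$. Since $MX = Y$, setting $x := X\mathbf{1}$ gives a vector with $x > \mathbf{0}$ and $Mx = Y\mathbf{1} > \mathbf{0}$ after one $n \times n$ matrix--vector product, i.e., $O(n^2)$ arithmetic operations.

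Given $x$, I would follow the existential proof of Lemma~\ref{lem:NormFormComp} verbatim. Compute $H := \diag(x)$ and the rowsum vector $s := M H \mathbf{1}$ in $O(mn)$ operations; the vector $s$ is strictly positive because $Mx > \mathbf{0}$. Set $L := \diag(1/s)$, so that every row of $LMH$ sums to $1$. Compute $t := \max \{(LMH)^j_{ij} : j \in [n], \, i \in [b_j]\}$ in $O(m)$ comparisons; this is the largest diagonal entry that can appear in any representative submatrix. Output the matrix $(L/t)\, M H$: the calculation at the end of the proof of Lemma~\ref{lem:NormFormComp}, specialised to the constant $c = 1$, shows that every representative submatrix of this matrix has the form $I - \gamma S$ with $\gamma := (t-1)/t \in [0,1)$, $S \geq \mathbf{0}$, and $S\mathbf{1} = \mathbf{1}$, which is exactly a stochastic form of $M$.

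The total arithmetic cost is $O(mn)$, and every intermediate value is a sum, product, or quotient of $O(mn)$ input entries, so its binary encoding length remains polynomial in that of the input; hence the procedure is strongly polynomial. I do not expect any real obstacle, only the bookkeeping to confirm that each step of the existential proof is constructive; the two points that need checking are that $s > \mathbf{0}$, so the inversion defining $L$ is safe, and that $t \geq 1$, which follows from $t - \sum_k T_{ik} = 1$ for any representative submatrix $tI - T$ of $LMH$ with $T \geq \mathbf{0}$, so that the final division by $t$ is safe as well.
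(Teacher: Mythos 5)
Your proof is correct and follows the same route as the paper: take $x$ directly or set $x := X\mathbf{1}$ (so that $Mx = MX\mathbf{1} = Y\mathbf{1} > \mathbf{0}$ by the proper-witness property), then run the construction from the proof of Lemma~\ref{lem:NormFormComp}. The paper states this in one line; you simply unpack the construction into explicit scaling steps ($H$, $L$, $t$) and verify the arithmetic cost, which is the right way to make the strong-polynomiality claim explicit.
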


\begin{proof}
 We apply the construction scheme outlined in the proof of Lemma \ref{lem:NormFormComp}. We pick $D:=\diag(x)$ or $D:=\diag(X \mathbf{1})$, where $(X,Y)$ is any proper hidden K-witness of $M$. It is crucial that $MD\mathbf{1}=MX\mathbf{1}=Y\mathbf{1} > \mathbf{0}$.
\end{proof}

Let $M \in \mathbb{R}^{m \times n}$ be a stochastic K-matrix of type $b \in \mathbb{N}^n$ and $q \in \mathbb{R}^m$. The matrix $M$ is of the form $E(b)-\gamma P$ for some rowstochastic $P \in \mathbb{R}^{m \times n}$ and $\gamma \in [0,1)$.

The stochastic K-GLCP($M,q$) is the problem to find a vector $z \in \mathbb{R}^n$ and a block vector $w \in \mathbb{R}^m$ of type $b$ such that

\begin{equation} \label{eq:origKGLCP}
 \begin{aligned}
  z_j & = w_i^j + \gamma \sum_{k=1}^n p_{ik}^j z_k - q_i^j & \fall j \in [n], \fall i \in [b_j], \\
	w,z & \geq \mathbf{0}, & \\
	z_j \prod_{i=1}^{b_j} w_i^j & = 0 & \fall j \in [n],
 \end{aligned}
\end{equation}

The reduction proceeds by means of iterations. In every iteration, the size of some block decreases by one, which is attained by creating a new block of size two.

Consider the following GLCP, where we assume that $b_n \geq 3$. Find a vector $v \in \mathbb{R}^{n+1}$ and block vector $u \in \mathbb{R}^{m+1}$ of type
$(b_1, \ldots, b_{n-1}, b_n-1, 2)$  such that

\begin{subequations} \label{eq:reducedKGLCP}
 \begin{equation}
  \begin{aligned}
   v_j & = u_i^j + \gamma \sum_{k=1}^{n-1} p_{ik}^j v_k + \gamma p_{in}^j v_{n+1} - q_i^j & \fall j \in [n-1], \fall i \in [b_j], \\
	 v_n & = u_i^n + \gamma \sum_{k=1}^{n-1} p_{ik}^n v_k + \gamma p_{in}^n v_{n+1}- q_i^n &  \fall i \in [b_n-1], \\
	 v_{n+1} & = u_1^{n+1} + \gamma \sum_{k=1}^{n-1} p_{b_nk}^n v_k + \gamma p_{b_n n}^n v_{n+1} - q_{b_n}^n, &  \\
	 v_{n+1} & = u_2^{n+1} + v_n,
	\end{aligned}
 \end{equation}
 both $u$ and $v$ are nonnegative, and the complementarity constraints are satisfied, i.e., we have
 \begin{equation}
  \begin{aligned}
	 v_j \prod_{i=1}^{b_j} u_i^j = 0 \quad \fall j \in [n-1], \quad v_n \prod_{i=1}^{b_n-1} u_i^n = 0, \quad v_{n+1} \prod_{i=1}^2 u_i^{n+1} = 0.
  \end{aligned}
 \end{equation}
\end{subequations}

Suppose that $(w, z)$ is a solution to the K-GLCP \eqref{eq:origKGLCP}. The reader is asked to verify that a solution $(u,v)$ to the GLCP \eqref{eq:reducedKGLCP} is obtained as follows:
\begin{equation*}
 \begin{aligned}
  v_j & :=z_j & \fall j \in [n-1], \\
	u^j & :=w^j & \fall j \in [n-1], \\
	v_n & := z_n - \min \left\{z_n, w^n_1, \ldots, w^n_{b_n-1} \right\}, &  \\
	u^n_i & := w^n_i - \min \left\{z_n, w^n_1, \ldots, w^n_{b_n-1} \right\} & \fall i \in [b_n-1], \\
	v_{n+1} & := z_n, \\ %- \min \left\{z_n, w^n_{b_n}\right\}
	u^{n+1}_1 & := w^n_{b_n}, & \\ %- \min \left\{z_n, w^n_{b_n}\right\}
	u^{n+1}_2 & := \min \left\{z_n, w^n_1, \ldots, w^n_{b_n-1} \right\}. &
 \end{aligned}
\end{equation*}

The crucial direction is proven formally.

\begin{lem} \label{lem:SolKRed}
  For a solution $(u,v)$ to the GLCP \eqref{eq:reducedKGLCP}, the solution $(w,z)$ to the K-GLCP \eqref{eq:origKGLCP} is obtained as follows:
	\begin{equation*}
   \begin{aligned}
    z_j & :=v_j & \fall j \in [n-1], \\
	  w^j & :=u^j & \fall j \in [n-1], \\
	  z_n & := v_{n+1}, \\
	  w^n_i & :=  u^n_i + u^{n+1}_2 & \fall i \in [b_j-1], \\
	  w^n_{b_n} & := u^{n+1}_1. &
   \end{aligned}
  \end{equation*}
\end{lem}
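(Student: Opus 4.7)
The argument is a direct verification: given the assignment, I would check that the constructed $(w,z)$ satisfies the three conditions defining the stochastic K-GLCP~\eqref{eq:origKGLCP}, namely the linear equations, nonnegativity, and complementarity.

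For the linear equations, blocks $j \in [n-1]$ transfer immediately: substituting $z_j = v_j$ for $j \in [n-1]$ and $z_n = v_{n+1}$ into the original equation
\[ z_j = w_i^j + \gamma \sum_{k=1}^n p_{ik}^j z_k - q_i^j, \]
and setting $w^j = u^j$, yields exactly the corresponding equation in~\eqref{eq:reducedKGLCP} (the term $\gamma p_{in}^j z_n$ becomes $\gamma p_{in}^j v_{n+1}$). For row $b_n$ of block $n$, the original equation coincides verbatim with the third equation of~\eqref{eq:reducedKGLCP} once $w^n_{b_n} = u^{n+1}_1$ is used. The slightly less automatic case is row $i \in [b_n - 1]$ of block $n$: here I would use the auxiliary identity $v_n = v_{n+1} - u_2^{n+1}$ (from the last equation of the reduced GLCP) to eliminate $v_n$ from the reduced GLCP's equation, after which the resulting identity is exactly the original equation with $w_i^n = u_i^n + u_2^{n+1}$.

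Nonnegativity of $w$ and $z$ is immediate from $u,v \geq \mathbf{0}$ and the purely additive form of the assignment.

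The one genuinely non-mechanical step is verifying the complementarity condition for block $n$, which amounts to proving
\[ v_{n+1} \, u_1^{n+1} \prod_{i=1}^{b_n-1} \bigl( u_i^n + u_2^{n+1} \bigr) = 0. \]
I would argue by case analysis. If $v_{n+1} = 0$, we are done. Otherwise, the complementarity $v_{n+1} \prod_{i=1}^{2} u_i^{n+1} = 0$ of the new block in the reduced GLCP forces $u_1^{n+1} u_2^{n+1} = 0$. If $u_1^{n+1} = 0$, we are done. The remaining subcase $u_2^{n+1} = 0$ with $u_1^{n+1} > 0$ is the crux: the product then collapses to $v_{n+1} \, u_1^{n+1} \prod_{i=1}^{b_n - 1} u_i^n$, and the identity $v_{n+1} = u_2^{n+1} + v_n$ gives $v_n = v_{n+1} > 0$, so the complementarity $v_n \prod_{i=1}^{b_n - 1} u_i^n = 0$ in~\eqref{eq:reducedKGLCP} closes the argument. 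Complementarity for blocks $j \in [n-1]$ transfers trivially. This case split is the only part requiring care; the rest is bookkeeping.
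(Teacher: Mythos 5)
Your proposal is correct and follows essentially the same route as the paper: the linear equations are verified by direct substitution, using $v_{n+1} = u_2^{n+1} + v_n$ to eliminate $v_n$ in rows $i \in [b_n-1]$, and the complementarity for block $n$ is handled by the same three-way case split (either $v_{n+1}=0$, or $u_1^{n+1}=0$, or else $u_2^{n+1}=0$ and the complementarity of the reduced block $n$ finishes the argument). The paper states the same case analysis a bit more tersely, but the logical content is identical.
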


\begin{proof}
 Note that $z_n:=v_{n+1} \geq v_n$. The first $n-1$ blocks of K-GLCP \eqref{eq:origKGLCP} and the corresponding complementarity constraints are obviously satisfied. For rows $i \in [b_n-1]$ of block $n$, we remark that
 \begin{align*}
  z_n & = v_{n+1} \\
	    & = u^{n+1}_2 + v_n  \\
			& = u^{n+1}_2 + u^n_i + \gamma \sum_{k=1}^{n-1} p_{ik}^n v_k + \gamma p_{in}^n v_{n+1}- q_i^n  \\
			& = w^n_i + \gamma \sum_{k=1}^n p_{ik}^n z_k - q_i^n.
 \end{align*}
 Similarly, for row $b_n$, we have
 \begin{align*}
  z_n & = v_{n+1} \\
	    & = u_1^{n+1} + \gamma \sum_{k=1}^{n-1} p_{b_nk}^n v_k + \gamma p_{b_n n}^n v_{n+1} - q_{b_n}^n, \\
			& = w^n_{b_n} + \gamma \sum_{k=1}^{n} p_{ik}^n z_k - q_{b_n}^n.
 \end{align*}
The complementary condition is satisfied as well. If $v_{n+1}=0$ or $u^{n+1}_1=0$, then $z_n = 0$ or $w^n_{b_n}=0$, respectively. Otherwise, we have $u^{n+1}_2=0$. Since $z_n=v_{n+1}=v_n$ and $w_i^n=u_i^n$ for all $i \in [b_j-1]$, it follows that
$$z_n \prod_{i=1}^{b_n-1} w_i^n = v_n \prod_{i=1}^{b_n-1} u_i^n = 0.$$
\end{proof}

The reduction yields a GLCP($M',q'$) of type $b':=(b_1, \ldots, b_{n-1}, b_n-1, 2)$, where
$$M'=\begin{bmatrix}
M^1_{\cdot [n-1]} & \mathbf{0} & m^1_{\cdot n} \\
M^2_{\cdot [n-1]} & \mathbf{0} & m^2_{\cdot n} \\
\vdots & \vdots & \vdots & \vdots \\
M^{n-1}_{\cdot [n-1]} & \mathbf{0} & m^{n-1}_{\cdot n} \\
M^n_{[b_n-1][n-1]} & \mathbf{1} & - \gamma p^n_{\cdot n}  \\
m^n_{b_n [n-1]} & 0 & m^n_{b_n n} \\
\mathbf{0}^T & -1 & 1
\end{bmatrix}
\text{ and } q'=
\begin{bmatrix} q^1 \\ q^2 \\ \vdots \\ q^{n-1} \\ q^n_{[b_n-1]} \\ q^n_{b_n} \\ 0 \end{bmatrix}.$$
The dimension of the problem is $n+1$ and the total number of block sizes equals $m+1$.

\begin{lem}
 The matrix $M'$ is a K-matrix of type $b'$.
\end{lem}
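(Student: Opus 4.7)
The plan is to invoke the equivalence $(a) \Longleftrightarrow (b)$ of Theorem~\ref{thm:eqBlockZ}: it suffices to show that $M'$ is a block Z-matrix and then to exhibit an $x \in \mathbb{R}^{n+1} > \mathbf{0}$ with $M'x > \mathbf{0}$. Recall that, since $M = E(b) - \gamma P$ is a stochastic K-matrix with $P \geq \mathbf{0}$ rowstochastic and $\gamma \in [0,1)$, every entry $m^j_{ik}$ equals $-\gamma p^j_{ik} \leq 0$ for $k \neq j$ and $1 - \gamma p^j_{ii} > 0$ for $k = j$.

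First, I would verify the Z-property by picking any representative submatrix of $M'$ and checking its off-diagonal signs. For a row $i^j$ selected from block $j \in [n-1]$, the diagonal entry (in column $j$) is the strictly positive $m^j_{i^j j}$; the new $v_n$-column contributes $0$ and every other entry is $m^j_{i^j k} = -\gamma p^j_{i^j k} \leq 0$. For a row $i^n \in [b_n-1]$ selected from block $n$, the new diagonal entry at column $n$ is $1 > 0$ while the remaining entries are $\leq 0$. For block $n+1$, picking row~$1$ puts $m^n_{b_n n} = 1 - \gamma p^n_{b_n n} > 0$ on the diagonal (column $n+1$) and only $\leq 0$ entries elsewhere, whereas picking row~$2$ puts $1$ on the diagonal with off-diagonal entries $\mathbf{0}$ and $-1$. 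Hence every representative submatrix of $M'$ is a Z-matrix.

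For (b), the natural candidate $x = \mathbf{1}$ works for every row of $M'$ except row~$2$ of block $n+1$, whose row sum evaluates to $0 + (-1) + 1 = 0$; this is no accident, since that row was designed precisely to encode the coupling $v_{n+1} - v_n = u^{n+1}_2$. I would therefore perturb and take $x := (1, \ldots, 1, 1+\epsilon)^T$ with $\epsilon \in (0, 1-\gamma)$. A direct computation gives
$$(M'x)^j_i = (1-\gamma) - \epsilon \gamma p^j_{in} \text{ for } j \in [n-1] \text{ or } j = n, \quad (M'x)^{n+1}_1 = (1-\gamma) + \epsilon(1 - \gamma p^n_{b_n n}), \quad (M'x)^{n+1}_2 = \epsilon,$$
all of which are strictly positive by the choice of $\epsilon$. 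Thus (b) is satisfied and Theorem~\ref{thm:eqBlockZ} delivers the K-property.

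The main obstacle is thus only the auxiliary row $(\mathbf{0}^T, -1, 1)$, whose row sum vanishes at $x = \mathbf{1}$; the perturbation $x_{n+1} = 1 + \epsilon > x_n$ is precisely the device that breaks this tie in the direction required by the Z-pattern. Everything else in the argument is a routine sign check, made transparent by the stochastic form of $M$.
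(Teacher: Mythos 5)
Your proof is correct and takes a genuinely different route from the paper. The paper argues the P-property of $M'$ through determinant bookkeeping: it relates every principal minor of a representative submatrix $M'_{B'}$ to a principal minor of a representative submatrix $M_B$ of the original K-matrix $M$, using direct column expansion when $(n+1,1)\in B'$ and the rank-one block-determinant identity $\det\begin{bmatrix}A & c\\ d^T & 1\end{bmatrix}=\det(A-cd^T)$ when $(n+1,2)\in B'$. You instead verify the block Z-property by a row-wise sign check and then exhibit the Perron-type certificate $x=(1,\dots,1,1+\epsilon)^T$ with $M'x>\mathbf{0}$, invoking the equivalence $(a)\Leftrightarrow(b)$ of Theorem~\ref{thm:eqBlockZ}. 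Both work. Your approach is shorter and cleanly exploits the stochastic form of $M$ (which is guaranteed at this point of Section~\ref{sec:KReduction}); it needs only a single global certificate rather than an argument for every principal minor of every representative submatrix. The paper's minor-matching argument is more laborious, but it yields the extra information that the principal minors of $M'$ are literally equal to those of $M$, which is in the same spirit as the analogous lemma for the P-GLCP reduction (Lemma~\ref{lem:isP}), where no Z-structure is available and a certificate argument would not apply. Your computation of the perturbed row sums is accurate: with $\epsilon\in(0,1-\gamma)$ one has $(1-\gamma)-\epsilon\gamma p^j_{in}>0$ because $\epsilon\gamma p^j_{in}\le\epsilon\gamma<(1-\gamma)$, and the auxiliary row yields exactly $\epsilon>0$, so all entries of $M'x$ are strictly positive as claimed.
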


\begin{proof}
 The matrix $M'$ is obviously a Z-matrix. It is left to prove it satisfies the P-property. First, consider any maximal complementary subset $B'$
 %$$B':=\{(1,i), (2, j), \dots, (n-1,k), (n,l), (n+1, 1)\}$$
 of $N(b')$ that contains $(n+1,1)$. Let $l \in [b_n]$ be such that $(n,l) \in B'$. Then $\det M'_{B'} = \det M_B > 0$ for maximal complementary subset
 $$B:=(B' \backslash \{(n,l), (n+1,1)\}) \cup \{(n,b_n)\}$$
 of $N(b)$. Similar arguments apply for the principal minors of $M'_{B'}$.

 Secondly, consider any such $B'$ that contains $(n+1,2)$. We think of the corresponding representative submatrix $M'_{B'}$ as being of the form
 $$\begin{bmatrix}
  A & c \\ d^T & 1
 \end{bmatrix},$$
 for some well-defined $n \times 1$ vectors $c$ and $d$ and $n \times n$ matrix $A$. Recall that $\det  M'_{B'} = \det A - cd^T$. By verifying that $A - cd^T=M_B$ for $B:=B' \backslash \{(n+1,2)\}$, it follows that $\det M'_{B'} = \det M_B > 0$. By similar arguments, the principal minors of $M'_{B'}$ are positive, too.
\end{proof}

In order to iteratively execute reduction steps, a stochastic form of $M'$ has to be computed.

\begin{lem} \label{lem:PropFormAgain}
 The matrix $LM'H$ for
 \begin{align*}
	 L & :=\diag(1 \slash 2, \ldots, 1 \slash 2, 1, 1 \slash 2, 1) \in \mathbb{R}^{(m+1) \times (m+1)} \text{ and } \\
	 H & :=\diag(1, \ldots, 1, (1 + \gamma) \slash 2, 1) \in \mathbb{R}^{(n+1) \times (n+1)}
 \end{align*}
 is a stochastic K-matrix of type $b'$ with factor $(1+\gamma) \slash 2$.
\end{lem}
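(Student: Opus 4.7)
My plan is to reduce the claim to three checks on every representative submatrix $C$ of $LM'H$: (i) the off-diagonal entries are nonpositive (Z-property), (ii) the diagonal entries lie in $[0,1]$, and (iii) the row sum equals $(1-\gamma)/2$. If all three hold, then each such $C$ has the form $I-\gamma' S$ with $\gamma':=(1+\gamma)/2$ and $S\geq \mathbf{0}$ rowstochastic, which is precisely the definition of a stochastic K-matrix with factor $\gamma'$. The K-property part of the conclusion is automatic from the previous lemma combined with the fact that positive row and column scalings preserve the K-property.

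The row-sum computation is the heart of the argument, and I would perform it in two stages. First, using $M\mathbf{1}=(1-\gamma)\mathbf{1}$, a quick inspection of each row type of $M'$ gives row sum $1-\gamma$ for every row inherited from $M$ (for blocks $1,\dots,n-1$ the old column $n$ is simply relocated to the new column $n+1$; for block $n$ of $M'$ the old column-$n$ entry is split off into the new column $n$ with value $1$, and the remainder $-\gamma p^n_{in}$ sits in column $n+1$; for the first row of block $n+1$ the situation is the same as for blocks $j\neq n$), while the new row $(\mathbf{0}^T,-1,1)$ sums to $0$. Second, applying $H$ rescales column $n$ by $\gamma'$, which affects exactly the rows with a nonzero column-$n$ entry: the $b_n-1$ rows of block $n$ lose $1-\gamma'=(1-\gamma)/2$ from their sum, landing at $(1-\gamma)/2$; and the last row of block $n+1$ gains $(1-\gamma)/2$, also landing at $(1-\gamma)/2$. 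The remaining rows (blocks $1,\dots,n-1$ and the first row of block $n+1$) still sum to $1-\gamma$. Applying $L$ multiplies those rows by $1/2$ and leaves the already-corrected rows alone, yielding a uniform row sum of $(1-\gamma)/2$, as required.

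For (i) and (ii), I would compute entries block-by-block. Off-diagonal entries inherited from $M$ are of the form $-\gamma p^j_{ik}\leq 0$ and remain so after positive scaling; the artificial column-$n$ entries are $0$ or $\pm 1$, and the only negative one, at $(n+1,2)$, scales to $-(1+\gamma)/2\leq 0$. For the diagonal entries (picking column $j$ in block $j$, column $n$ in block $n$, column $n+1$ in block $n+1$), explicit calculation gives $(1-\gamma p^j_{ij})/2\in[(1-\gamma)/2,1/2]$ for $j\in[n-1]$, $(1+\gamma)/2$ for block $n$, $(1-\gamma p^n_{b_n n})/2$ for row $1$ of block $n+1$, and $1$ for row $2$ of block $n+1$; all lie in $[0,1]$. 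Combined with the uniform row sum, each representative submatrix is of the form $I-\gamma' S$, and rowstochasticity of $S$ is forced.

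I do not anticipate a conceptual obstacle; the work is entirely bookkeeping. The main pitfall to avoid is conflating the two different roles of ``column $n$'' in $M'$: the new artificial column, whose entries are $0$, $1$, or $-1$, versus the old $n$-th column of $M$, which has been relocated to position $n+1$. The design principle behind $H$ and $L$ is transparent once this is sorted out: $H$ rescales the one column whose raw contribution is anomalous, turning the structural $1$'s and $-1$ into the entries $\pm\gamma'$ demanded by the stochastic form, while $L$ halves exactly those rows whose row sum is still $1-\gamma$ rather than $(1-\gamma)/2$, equalising everything at $1-\gamma'$.
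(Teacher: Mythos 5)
Your proof is correct and follows essentially the same approach as the paper's: both argue that every representative submatrix $C$ of $LM'H$ has row sums $(1-\gamma)/2$, nonpositive off-diagonals (here inherited from the K-property of $M'$ under positive scaling), and diagonals at most $1$, so that $C=I-\gamma'S$ with $\gamma'=(1+\gamma)/2$ and $S\geq\mathbf{0}$ rowstochastic. The paper's version is terser — it computes $LM'H\mathbf{1}=((1-\gamma)/2)\mathbf{1}$ directly and then factors — whereas you spell out the row-sum bookkeeping column-by-column, but the content is identical.
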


\begin{proof}
 Note that $M \mathbf{1} = (1- \gamma) \mathbf{1}$. Thus $LM'H \mathbf{1} = ((1 - \gamma) \slash 2) \mathbf{1}$. The largest diagonal element in any representative submatrix of $LM'H$ equals $1$. Therefore, each representative submatrix of $LM'H$ can be represented as $I - T$ for some $T \geq \mathbf{0}$. Since $(I - T) \mathbf{1} = \mathbf{1} - T\mathbf{1} = ((1 - \gamma) \slash 2) \mathbf{1}$, it follows that $T \mathbf{1} = ((1 + \gamma) \slash 2) \mathbf{1}$. Let $S:=(2 \slash (1 + \gamma)) T$. Then obviously $S \geq \mathbf{0}$ with $S \mathbf{1}=\mathbf{1}$. The representative submatrix is represented as $I - ((1 + \gamma) \slash 2) S$. Hence $LM'H$ is a stochastic K-matrix with factor $(1 + \gamma) \slash 2 \in [0,1)$.
\end{proof}

Note that the discount factor increases by $(1- \gamma) \slash 2$. It would be interesting to investigate whether the reduction can be modified such that the increase will be less. The discount factor is an indicator for the difficulty of stochastic K-GLCPs. Intuitively, for small $\gamma$ the representative submatrices are almost identity matrices---pivoting methods converge fast. For $\gamma$ close to $1$, on the other hand, the problem can be more difficult.
\begin{thm} \label{thm:RedKGLCP2KGLCP}
 The K-GLCP reduces to the K-GLCP with blocks of size at most two in polynomial time. The reduction is strongly polynomial for stochastic K-GLCPs.
\end{thm}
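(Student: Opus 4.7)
The plan is to prove the theorem by iterating the single-step reduction \eqref{eq:origKGLCP}--\eqref{eq:reducedKGLCP}. I would first establish that each application takes a (stochastic) K-GLCP of type $(b_1,\dots,b_n)$ with some $b_n \geq 3$ to a K-GLCP of type $(b_1,\dots,b_{n-1},b_n-1,2)$ whose solutions are in explicit bijection with those of the original problem. The forward correspondence is the explicit construction displayed immediately before Lemma~\ref{lem:SolKRed}; the converse is Lemma~\ref{lem:SolKRed} itself; the K-property of $M'$ is the lemma preceding Lemma~\ref{lem:PropFormAgain}. By reapplying this step to any block of size at least three, I would eventually produce a GLCP of the desired form.

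For termination and the iteration count, I would use the potential $\mu(b) := \sum_{j\,:\,b_j \geq 3}(b_j-2)$. A single step on block $n$ replaces the contribution $b_n-2$ by $b_n-3$ (or $0$ if $b_n=3$) and adjoins a new block of size $2$ which contributes $0$, so $\mu$ strictly decreases by one. Since $\mu(b) \leq m$ initially, the process terminates after at most $m-2n$ iterations, yielding a K-GLCP in which every block has size at most two. The inverse mapping of solutions is the iterated composition of the formulas of Lemma~\ref{lem:SolKRed}, which can be written down in time polynomial in the input size.

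For the strongly polynomial statement on stochastic K-GLCPs, I would argue that after each reduction step the matrix $M'$ is K but no longer stochastic, so Lemma~\ref{lem:PropFormAgain} is invoked between iterations to produce an explicit stochastic form $LM'H$ in $O(m+n)$ arithmetic operations; the diagonal entries of $L$ and $H$ are the constants $1$, $\tfrac12$, and $(1+\gamma)/2$, hence no subroutine whose running time depends on the bit lengths of the input is called. By Lemma~\ref{lem:PGLCPScaling} this rescaling preserves the induced USO and the set of solution bases, so the bijection of solutions persists. Combined with the linear bound on the iteration count, the total work is strongly polynomial.

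Finally, for the general (non-stochastic) K-GLCP, I would first convert $M$ into stochastic form by Lemma~\ref{lem:NormFormComp2}. This requires a vector $x>\mathbf{0}$ with $Mx>\mathbf{0}$, produced by solving a linear feasibility problem; the step is polynomial but not strongly polynomial. Once stochastic form has been reached, the argument above applies. The main obstacle in tightening the general statement to strongly polynomial would be avoiding this LP in the initial conversion, which is precisely the open issue flagged in the remarks following Lemma~\ref{cor:StochFormHidK}; hence only polynomial-time reduction is claimed in general, while the stronger complexity holds when stochastic form is given a priori.
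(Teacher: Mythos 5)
Your proposal follows essentially the same route as the paper: iterate the single-step reduction \eqref{eq:origKGLCP}$\to$\eqref{eq:reducedKGLCP}, use the displayed formulas and Lemma~\ref{lem:SolKRed} for the solution correspondence, restore stochastic form between iterations via Lemma~\ref{lem:PropFormAgain} and Lemma~\ref{lem:PGLCPScaling}, and for non-stochastic input first apply Lemma~\ref{lem:NormFormComp2} (hence only polynomial, not strongly polynomial, in general). One small slip in the termination count: your potential $\mu(b)=\sum_{j:\,b_j\ge 3}(b_j-2)$ does strictly decrease by one per step, but the bound ``at most $m-2n$ iterations'' is wrong when some $b_j\le 1$ (e.g.\ $b=(1,5)$ needs $\mu(b)=3>m-2n=2$ steps); the correct bound is $\mu(b)\le m$, which still gives linear iteration count and does not affect the complexity claim.
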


\begin{proof}
 If the K-GLCP instance is in stochastic form, we repeatedly apply the reduction presented above until each block is of size at most two. Otherwise, we first compute a stochastic form using Lemma \ref{lem:NormFormComp2} together with any given or computed proper hidden K-witness. In each reduction step, the current K-GLCP can be converted into stochastic K-GLCP through applying Lemma~\ref{lem:PropFormAgain} together with Lemma~\ref{lem:PGLCPScaling}. Finally, the solution to the initial problem is obtained through retracing the reduction steps backwards using Lemma~\ref{lem:SolKRed}.
\end{proof}

The following will become useful later.

\begin{lem} \label{lem:RHSNonPos}
 For K-GLCPs($M,q$) with $q_C \leq \mathbf{0}$ for some maximal complementary $C$, the reduction can be executed in such a way that for the final K-GLCP($M',q')$, we have $q'_{C'} \leq \mathbf{0}$ for some maximal complementary $C'$.
\end{lem}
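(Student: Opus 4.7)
The plan is to exploit the freedom in choosing which row of a block to split off in each reduction step. Let $C = \{(j, c_j) : j \in [n]\}$ be a maximal complementary set with $q_C \leq \mathbf{0}$, and regard the final reduced K-GLCP as a composition of single-step reductions, each of which takes a block of size $\geq 3$ and moves one of its rows into a newly created block of size two.

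The key observation is that in a single reduction step applied to a block $j$ with $b_j \geq 3$, we may, by permuting rows within block $j$ and relabeling blocks (both of which leave the GLCP invariant), choose \emph{any} row $r \in [b_j]$ to play the role of the row $b_n$ that gets split off in the presentation of the reduction. The strategy is to always pick $r \in [b_j] \setminus \{c_j\}$, which is possible because $b_j \geq 3 > 1$. After the step, the distinguished row $(j, c_j)$ survives in the shrunken block $j$ at some re-indexed position $c_j'$, and the newly created block of size two has right-hand side $(q^j_r,\, 0)^T$. I update the candidate complementary set by replacing $(j,c_j)$ with $(j,c_j')$ and adjoining the index corresponding to the second entry of the new block, whose $q'$-value is $0$. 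By construction, the restriction of the updated right-hand side to the updated complementary set is componentwise nonpositive.

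Iterating this selection rule through the $\sum_j \max(b_j - 2, 0)$ reduction steps, every block ends up of size at most two, each original block $j$ still carries the image of $(j,c_j)$ and thus contributes a nonpositive entry, and each newly created size-two block contributes a zero entry. The union $C'$ of these indices is maximal complementary in the final index set and satisfies $q'_{C'} \leq \mathbf{0}$.

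The one subtlety worth mentioning is the interleaved stochastic-form conversion (Lemma~\ref{lem:PropFormAgain} combined with Lemma~\ref{lem:PGLCPScaling}) that is applied between reduction steps. It multiplies the right-hand side by a positive diagonal matrix, preserving the sign of every component. Hence the property is stable across stochastic-form conversions, and the inductive bookkeeping goes through without further issue. No real mathematical obstacle arises here; the essence is simply to avoid discarding the distinguished rows during reduction, which is always feasible since each block retains size at least two.
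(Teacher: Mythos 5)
Your argument is correct and takes essentially the same approach as the paper: keep a nonpositive-right-hand-side row alive in each original block by choosing some other row to split off, note that every newly created block of size two has a zero in its second component, and observe that the positive diagonal rescaling applied between steps preserves all signs. The paper states the selection rule slightly differently (split apart constraints with positive right-hand side first, rather than explicitly refusing to split the distinguished row $c_j$), but both variants guarantee that a nonpositive entry survives in each shrunken block, and your explicit note that the stochastic-form conversion is sign-preserving is a small but welcome addition.
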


\begin{proof}
 In every newly created block of size two, the right-hand side of the second constraint equals $0$. By making sure that to first split apart constraints with positive right-hand side, the result follows.
\end{proof}

%Does the K-GLCP with block sizes of at most $2$ can be polynomially reduced to the K-GLCP with block sizes equal to $2$? The answer is probably ``yes''.

This reduction scheme can be transformed into a scheme to reduce discounted MDPs to discounted MDPs whose each state has at most two actions assigned. See also Section~\ref{sec:HidKReduction} and Section~\ref{sec:Red2PG}, where the reduction from above is generalized to the setting of two adversary players.

\section{A reduction from Grid-LPs to Cube-LPs} \label{sec:HidKReduction}

We first discuss a reduction scheme for the hidden K-GLCP, the dual problem of linear programming over grids.

\begin{thm} \label{prop:HidKRed}
 Every hidden K-GLCP$(M,q)$ reduces to an ordinary hidden K-LCP in polynomial time. The reduction is strongly polynomial if a proper hidden K-witness of $M$ is known.
\end{thm}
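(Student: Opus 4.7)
The plan is to chain together three reduction steps developed in the paper, turning a hidden K-GLCP first into a K-GLCP, then into one with small blocks, and finally back into a hidden K-LCP. Given the hidden K-GLCP$(M,q)$ of type $b \in \mathbb{N}^n$ together with a proper hidden K-witness $(X,Y)$ of $M$, the first step is to invoke Theorem~\ref{thm:RelHiddKK} with some $f \in \mathbb{R}^n > \mathbf{0}$ of our choice. This produces a K-GLCP$([Y|X],[q-Mf|-f])$ of type $b+\mathbf{1}$ whose USO fully contains the original one, so that every solution basis of the hidden K-GLCP can be read off from a solution basis of the K-GLCP. Crucially, the right-hand sides in the last row of every block are $-f_j < 0$, i.e., the maximal complementary set $C:=\{(j,b_j+1):j\in[n]\}$ satisfies $[q-Mf|-f]_C < \mathbf{0}$.

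The second step is to reduce this K-GLCP to one whose blocks have size at most two. Since the type is $b+\mathbf{1} \geq \mathbf{2}$, after reduction every block can be arranged to have size exactly two. Theorem~\ref{thm:RedKGLCP2KGLCP} applies; it is strongly polynomial here because the input K-matrix has a proper hidden K-witness at hand, which by Lemma~\ref{lem:NormFormComp2} allows computing a stochastic form in strongly polynomial time (and the subsequent reduction steps stay stochastic by Lemma~\ref{lem:PropFormAgain}). Moreover, by Lemma~\ref{lem:RHSNonPos}, the reduction can be organized so that in the resulting K-GLCP$(M',q')$ of type $\mathbf{2}$, the maximal complementary set $C':=\{(j,2):j\in[n]\}$ still satisfies $q'_{C'} \leq \mathbf{0}$; this works because we seed the reduction by always splitting off constraints whose right-hand side is positive first, leaving the original negative entries at the bottom of each block.

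The final step is then to turn this K-GLCP of type $\mathbf{2}$ back into a hidden K-LCP using Theorem~\ref{thm:KasHidK}. Since $q'_{C'} \leq \mathbf{0}$, the theorem (after possibly a harmless reordering of rows within each block to place the nonpositive entry at position $(j,b_j)$) produces a hidden K-GLCP of type $\mathbf{2}-\mathbf{1}=\mathbf{1}$, i.e., an ordinary hidden K-LCP, from which we recover the solution basis to the K-GLCP and therefore, tracing back through step one, the solution basis to the original hidden K-GLCP. Degeneracies at this last step are handled exactly as in the proof of Theorem~\ref{thm:KasHidK}, via a sufficiently small perturbation of the right-hand side and stability of P-GLCP solution bases.

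The only genuine obstacle is the requirement of a proper hidden K-witness: step one needs $(X,Y)$ explicitly, and step two exploits it to get strong polynomiality. Without such a witness, a proper one can still be computed in polynomial time by solving the linear feasibility problem underlying Proposition~\ref{prop:HidKWitChar} (see the remarks on page~\pageref{page:CompHidKWit}), which yields the general polynomial-time claim; with one provided, every individual reduction step above is strongly polynomial, and the composition of a constant number of strongly polynomial reductions remains so, establishing the second half of the theorem.
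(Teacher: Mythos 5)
Your proposal is correct and follows essentially the same chain of reductions as the paper's own proof: Theorem~\ref{thm:RelHiddKK} to obtain a K-GLCP$([Y|X],[q-Mf|-f])$ with nonpositive right-hand side on the $C$-rows (the paper simply fixes $f=\mathbf{1}$), then Theorem~\ref{thm:RedKGLCP2KGLCP} (made strongly polynomial via Lemma~\ref{lem:NormFormComp2} since $(I,[Y|X])$ is a proper witness of $[Y|X]$) with Lemma~\ref{lem:RHSNonPos} to preserve nonpositivity, and finally Theorem~\ref{thm:KasHidK} to land on an ordinary hidden K-LCP. The remarks about computing a witness when none is given match the paper's treatment as well.
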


\begin{proof}
  By Theorem \ref{thm:RelHiddKK}, every hidden K-GLCP$(M,q)$ admits a formulation as a K-GLCP$([Y|X],[q-M \mathbf{1}|-\mathbf{1}])$, where $(X,Y)$ is any hidden K-witness of $M$. Here, we actually ask for a proper witness, which eventually has to be computed. For convenience, we assume that a proper witness $(X,Y)$ is known whose binary encoding length is polynomial. Since $[Y|X] \mathbf{1} > \mathbf{0}$, the tuple $(I,[Y|X])$ is obviously a proper hidden K-witness of $[Y|X]$, and, by the construction scheme outlined in Lemma \ref{lem:NormFormComp2}, a combinatorially equivalent stochastic K-GLCP is immediately obtained. (No column scaling is required.) By Theorem \ref{thm:RedKGLCP2KGLCP}, the problem reduces to a K-GLCP with blocks of size at most two. Considering Lemma \ref{lem:RHSNonPos}, the reduction can be executed in such a way that some representative subvector of the right-hand side in the final K-GLCP will be nonpositive. Finally, Theorem \ref{thm:KasHidK} yields a hidden K-LCP.
\end{proof}

\begin{thm} \label{thm:GridLPCubeLPRed}
 Grid-LPs reduce to Cube-LPs in polynomial time. The reduction is strongly polynomial for Grid-LPs in stochastic form.
\end{thm}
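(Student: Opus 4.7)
The plan is to chain together the duality theory of Section~\ref{sec:HidKGLCPLP} with the reduction from hidden K-GLCPs to hidden K-LCPs established in Theorem~\ref{prop:HidKRed}. Starting from a Grid-LP $\min q^T u$ subject to $M^T u \leq p$ and $u \geq \mathbf{0}$, Theorem~\ref{thm:HiddKDual} guarantees that $M$ is a hidden K-matrix, and Theorem~\ref{thm:LPUSOeqGLCPUSO} identifies the dual hidden K-GLCP$(M,q)$ as a combinatorially equivalent problem whose solution bases correspond to the optimal bases of the Grid-LP. The idea is then to push the complementarity problem through the reduction machinery of the previous sections and to invoke duality once more at the end.

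Concretely, I would first apply Theorem~\ref{prop:HidKRed} to the dual problem, obtaining an ordinary hidden K-LCP$(\tilde{M},\tilde{q})$. By Theorem~\ref{thm:HiddKDual}, any $\tilde{p}:=\tilde{X}^{-T}\tilde{v}$ with $\tilde{v}>\mathbf{0}$ and $(\tilde{X},\tilde{Y})$ a proper hidden K-witness of $\tilde{M}$ yields a right-hand side for which $\tilde{M}^T \tilde{u} \leq \tilde{p}$, $\tilde{u} \geq \mathbf{0}$ is combinatorially equivalent to an $n'$-cube. The Cube-LP
\begin{equation*}
  \min \tilde{q}^T \tilde{u} \text{ subject to } \tilde{M}^T \tilde{u} \leq \tilde{p} \text{ and } \tilde{u} \geq \mathbf{0}
\end{equation*}
then induces the same USO as the hidden K-LCP by Theorem~\ref{thm:LPUSOeqGLCPUSO}, and its optimal basis corresponds through the bijection maintained by the reduction scheme back to an optimal basis of the original Grid-LP. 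Solution recovery proceeds by retracing the chain: Cube-LP $\to$ hidden K-LCP via complementary slackness (Proposition~\ref{prop:LPHiddKSameSol}), hidden K-LCP $\to$ hidden K-GLCP through the inverse of the reduction in the proof of Theorem~\ref{prop:HidKRed} (i.e.\ undoing the K-GLCP reduction of Section~\ref{sec:KReduction} via Lemma~\ref{lem:SolKRed} and undoing the embedding of Theorem~\ref{thm:RelHiddKK}), and hidden K-GLCP $\to$ Grid-LP again via Proposition~\ref{prop:LPHiddKSameSol}.

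For the strongly polynomial claim, observe that a Grid-LP in stochastic form corresponds via the MDP viewpoint of Section~\ref{sec:DiscMDPs} to a stochastic K-GLCP whose coefficient matrix is itself already in the form $E(b)-\gamma P$. The corresponding dual matrix $M$ admits the trivial proper hidden K-witness $(I,M)$ (or more generally can be read off the stochastic data directly), so the witness that Theorem~\ref{prop:HidKRed} relies on is available for free and no auxiliary LP has to be solved. Since Theorem~\ref{thm:RedKGLCP2KGLCP} is strongly polynomial on stochastic K-GLCPs, and the remaining steps, Theorem~\ref{thm:RelHiddKK} and Theorem~\ref{thm:KasHidK}, are purely algebraic manipulations of block structure and right-hand sides, the whole chain runs in strongly polynomial time.

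The main obstacle, beyond the bookkeeping of the chained reductions, is to ensure that the witness $\tilde{X}$ obtained at the end of the reduction of Theorem~\ref{prop:HidKRed} (and hence $\tilde{p}=\tilde{X}^{-T}\tilde{v}$) has polynomially bounded encoding length, so that the output Cube-LP is genuinely a polynomial-time computable instance. This is addressed by keeping $\tilde{v}$ rational with a simple denominator and tracking that each intermediate hidden K-witness, starting from the trivial witness produced by the stochastic K-GLCP embedding of Theorem~\ref{thm:RelHiddKK}, arises from scalings and elementary algebraic operations whose encoding length is controlled explicitly by Lemma~\ref{cor:StochFormHidK} and Lemma~\ref{lem:PropFormAgain}. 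Once this size control is in place, the chain of reductions gives a polynomial-time reduction in general and a strongly polynomial one on stochastic form inputs.
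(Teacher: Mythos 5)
Your proposal follows the same route as the paper: pass from the Grid-LP to its dual hidden K-GLCP via Theorem~\ref{thm:LPUSOeqGLCPUSO}, push through the chain of Theorem~\ref{prop:HidKRed} (embedding via Theorem~\ref{thm:RelHiddKK}, block reduction via Theorem~\ref{thm:RedKGLCP2KGLCP} and Lemma~\ref{lem:RHSNonPos}, collapse to an ordinary hidden K-LCP via Theorem~\ref{thm:KasHidK}), and read off a Cube-LP right-hand side from the proper witness $(\tilde X,\tilde Y)$ that Lemma~\ref{lem:HidKmatrix} makes available at the end. The paper's own proof differs only cosmetically in that it first states the conversion to stochastic form explicitly as an invocation of Theorem~\ref{thm:LPUSOisMDPUSO} (making the MDP intermediary visible and noting $q<\mathbf{0}$ there), whereas you fold that step into the hypothesis of Theorem~\ref{prop:HidKRed}; and your remarks on tracking encoding lengths and on the trivial witness $(I,M)$ of a stochastic K-matrix match the paper's reasoning for the strongly polynomial claim. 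In short, this is the intended argument, just phrased with the MDP detour compressed.
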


\begin{proof}
 Theorem \ref{thm:LPUSOisMDPUSO} states that every Grid-LP has a stochastic form, where we eventually require to compute a proper hidden K-witness. Every Grid-LP in stochastic form represents a discounted MDP, which in turn can be formulated as a stochastic K-GLCP$(M,q)$ with $q < \mathbf{0}$. As described in the proof of Theorem \ref{prop:HidKRed}, the problem reduces to an ordinary hidden K-LCP, which is dual to some Cube-LP. A valid right-hand side $p$ for the Cube-LP is directly obtained because a (proper) witness of the square hidden K-matrix will be known through the kind of the reduction scheme.
\end{proof}

\begin{cor}
 Discounted MDPs reduce to their binary counterparts in strongly polynomial time.
\end{cor}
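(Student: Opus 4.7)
The plan is to compose strongly polynomial reductions that have already been established in the paper. Starting from a discounted MDP, the first step is to convert it to a Grid-LP in stochastic form using the formulation \eqref{eq:LPFormulationDMDP}, which is just reading off the data $P$, $r$, and $\gamma$ and writing down the matrix $M = E(a) - \gamma P$ together with the right-hand side and objective. This transformation is obviously strongly polynomial.

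The second step is to invoke Theorem \ref{thm:GridLPCubeLPRed} on this Grid-LP. The theorem guarantees strong polynomiality precisely when the input Grid-LP is in stochastic form, which is exactly the regime we are in. Notice also that the proof of Theorem \ref{thm:GridLPCubeLPRed} emphasizes that the reduction delivers, alongside the Cube-LP, a proper hidden K-witness of the square hidden K-matrix dual to the Cube-LP. I would retain this witness explicitly, since it is essential for the third step.

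The third step is to reinterpret the resulting Cube-LP as a binary discounted MDP. Because the Cube-LP is a Grid-LP of type $\mathbf{1}$, the grid on which its USO lives is the $n$-cube, and the dual hidden K-GLCP has all blocks of size $2$. Applying Theorem \ref{thm:LPUSOisMDPUSO} to this Cube-LP gives a discounted MDP that induces the same USO on the $n$-cube; by the correspondence made explicit in Section \ref{sec:DiscMDPs}, such an MDP has $n$ states with exactly two actions per state, i.e.\ is binary. The transformation in Theorem \ref{thm:LPUSOisMDPUSO} is strongly polynomial once a proper hidden K-witness of the Cube-LP's matrix is in hand, and such a witness is already available as the output of step two, so no LP needs to be solved here.

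The one thing I expect to have to verify carefully, rather than something I expect to be genuinely hard, is the bookkeeping on the \emph{policy} level: the composed reduction must translate an optimal policy of the binary MDP back to an optimal policy of the original MDP, not merely translate solution bases of the intermediate GLCPs. This reduces to checking that the explicit solution recovery formulas in Lemma \ref{lem:solRed}, Lemma \ref{lem:SolKRed}, Theorem \ref{thm:RelHiddKK}, and Theorem \ref{thm:KasHidK}, which are all component-wise and evaluable in strongly polynomial time, compose through the duality of Theorem \ref{thm:LPUSOisMDPUSO} to a policy-to-policy translation. Since each piece already supplies such an explicit translation, the composition is routine but must be stated once to confirm the final complexity claim.
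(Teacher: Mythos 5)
Your proof is correct and takes essentially the same route as the paper's: discounted MDP $\to$ Grid-LP in stochastic form $\to$ Cube-LP via Theorem~\ref{thm:GridLPCubeLPRed} $\to$ binary discounted MDP. The only cosmetic difference is that the paper observes the proper witness $(X,Y)$ emerging from the reduction already has $[Y|X]$ a stochastic K-matrix, so the Cube-LP is in stochastic form immediately, whereas you re-invoke Theorem~\ref{thm:LPUSOisMDPUSO} (with the scaling of Lemma~\ref{cor:StochFormHidK}) to reach the same endpoint; both are strongly polynomial.
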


\begin{proof}
 Discounted MDPs are Grid-LPs in stochastic form and thus, by Theorem~\ref{thm:GridLPCubeLPRed}, reduce to \textsc{Cube-LPs} in strongly polynomial time. A proper witness $(X,Y)$ of the final square hidden K-matrix is known. Moreover, the matrix $[Y|X]$ is a stochastic K-matrix. Hence, a stochastic form of the Cube-LP is obtained with ease.
\end{proof}

%\begin{proof}
 %LP formulations \eqref{eq:LPFormulationDMDP} for discounted MDPs are Grid-LPs in stochastic form. Hence Theorem \ref{thm:GridLPCubeLPRed} applies.
%\end{proof}

Hence, every Grid-LP reduces to a Cube-LP in strongly polynomial time, at least in terms of combinatorics. It is an open question whether the reduction is strongly polynomial for Grid-LPs in arbitrary representation. Such a question boils down to the question as to whether there exists a strongly polynomial algorithm for the computation of proper hidden K-witnesses. It might be possible that Grid-LPs in stochastic form contain valuable information that can be exploited by some `smart' algorithm. The additional information is certainly of an algebraic kind---the representation does not influence the behavior of combinatorial simplex-type methods.

It would also be interesting to prove statements of the kind: ``if some specific pivot rule is inefficient for general Grid-LPs, then the rule is also inefficient for Cube-LPs". To approach such questions, we first have to interpret the proposed reductions in the context of USOs.

\section{A characterization of two-player stochastic games in terms of unique-sink orientations} \label{sec:TwoPlayerGames}

 We study \emph{two-player stochastic games} with perfect information and their relation to the GLCP. This family of games comprises several variants, which are  in the literature known as \emph{stochastic parity games}, \emph{stochastic mean-payoff games}, \emph{discounted stochastic games}, and \emph{simple stochastic games}. These variants are polynomially equivalent to each other~\cite{andersson2009complexity}. The games are in NP $\cap$ coNP, but their computational complexity is still open, even in the case of deterministic games. If the discount factor is supposed to be a constant, then the \emph{strategy iteration} algorithm is strongly polynomial~\cite{hansen2013strategy}. No polynomial-time solving method in the technical sense is known.

In the following, we shall discuss formulations as GLCPs. It has been known that these two-player stochastic games admit formulations as P-GLCPs~\cite{gartner2005simple, svensson2007linear, jurdzinski2008simple}. Here, we identify the subclass of P-GLCPs whose members represent games and also provide a characterization in terms of the combinatorial model of USOs.

A two-player stochastic game with perfect information is a stochastic process with discrete time steps. At each step, the process is in some state $j$, which is under control of exactly one of the two players. The player in charge then decides on an available action $i$. The process randomly moves to another state. The probability $\smash{p^j_{ik}} \in [0,1]$ that the process moves into state $k$ depends on the state $j$ and chosen action $i$. A moving step has reward $r^j_i \in \mathbb{R}$ assigned, which is likewise determined by the current state $j$ and action $i$.  We consider games over an infinite-time horizon, where the rewards are discounted by some factor $\gamma \in [0,1)$. To summarize, the following notations are used:
		
		\vspace{+0.3cm}
		\begin{tabular}{p{0.0cm}p{1cm}p{24cm}}
	   & $n$ & number of states, \\
		 & $S_{\textsc{max}}$ & states controlled by the \textsc{max} player ($S_{\textsc{max}} \subseteq [n]$), \\
		 & $S_{\textsc{min}}$ & states controlled by the \textsc{\textsc{min}} player ($S_{\textsc{min}} = [n] \backslash S_{\textsc{max}}$), \\
		 & $a_j$ & number of actions available in state $j \in [n]$, \\
		 & $r^j_i$ & reward for taking action $i \in [a_j]$ in state $j \in [n]$, \\
		 & $p^j_{ik}$ & conditional probability to arrive in state $k$ for state $j$ and action $i \in [a_j]$, \\
		 & $\gamma$ & discount factor ($\gamma \in [0,1)$).\\
	  \end{tabular}
		\vspace{+0.3cm}
		
The \textsc{max} player's aim is to maximize the total discounted reward in expectation, whereas the \textsc{min} player takes the role of an adversary player who wants to minimize the overall reward. A~\emph{policy} is a function $\pi$ that specifies for each state an action to take. The problem is then to find an \emph{optimal policy}, which is a policy such that none of the players is willing to switch to another action for any state he controls.

Let $v^{\pi}_j$ denote the total discounted reward in expectation for initial state $j$ when applying policy $\pi$. The equations
\begin{subequations} \label{eq:2PlayerGame}
 \begin{align}
  v^{\pi}_j & = \max_{i \in [a_j]} \left\{ r_i^j + \gamma \sum_{k=1}^n p_{ik}^j v^{\pi}_k \right\} \qquad \fall j \in S_{\textsc{max}}, \label{eq:2PG1}\\
  v^{\pi}_j & = \min_{i \in [a_j]} \left\{ r_i^j + \gamma \sum_{k=1}^n p_{ik}^j v^{\pi}_k \right\} \qquad \fall j \in S_{\textsc{min}}, \label{eq:2PG2}
 \end{align}
\end{subequations}
describe an optimality criterion for a policy $\pi$. The system \eqref{eq:2PlayerGame} has a unique solution~\cite{shapley1953stochastic}.

In order to formulate the game as a GLCP, we first transform the optimality criterion into a complementarity problem. The problem is to find a vector $v \in \mathbb{R}^n$ and a slack variable vector $u \in \mathbb{R}^m$ of type $a:=(a_j)_{j \in [n]}$ such that
\begin{equation} \label{eq:GCLPTwoPlayer}
 \begin{aligned}
  v_j & = u_i^j + r_i^j + \gamma \sum_{k=1}^n p_{ik}^{j} v_k & \fall j \in S_{\textsc{max}}, \fall i \in [a_j], \\
	v_j & = - u_i^j + r_i^j + \gamma \sum_{k=1}^n p_{ik}^{j} v_k & \fall j \in S_{\textsc{min}}, \fall i \in [a_j], \\
	u & \geq \mathbf{0}, & \\
	\prod_{i = 1}^{a_j} u_i^j & = 0 & \fall j \in [n].
 \end{aligned}
\end{equation}
The complementarity problem is not yet in proper form. We apply the procedure proposed by Jurdzi\'nski and Savani, who gave a P-LCP formulation for binary discounted games~\cite{jurdzinski2008simple}. We basically split apart the last row in each block.

Let $P:=(p^j_{ik})_{j,k \in [n], i \in [a_j]}$ and $r:=(r^j_i)_{j \in [n], i \in [a_j]}$, which are both of type $a$. Let $C:=\{(j,a_j) : \, j \in [n]\}$. Moreover, let $S$ be the $n \times n$ signature matrix with $s_{jj}=1$ for $j \in S_{\textsc{max}}$ and $s_{jj}=-1$ otherwise. Accordingly, let $\mathbf{S}$ denote a block diagonal matrix with $n$ blocks whose $j$th block is equal to the matrix $s_{jj} I$. The dimension of the blocks depends on the context. Replace variable vector $u$ with $[w|z]$, where $w$ is of type $a - \mathbf{1}$ and $z$ has $n$ entries. Let $E:=E(a-\mathbf{1})$. The problem \eqref{eq:GCLPTwoPlayer} can be written as
\begin{equation*}
 \begin{aligned}
  \mathbf{S}[E- \gamma P_{\overline{C}} | I - \gamma P_C] v & = [w | z] + \mathbf{S}[r_{\overline{C}}|r_C], \\
	w,z & \geq \mathbf{0}, & \\
	z_j \prod_{i = 1}^{a_j-1} w_i^j & = 0 & \fall j \in [n].
 \end{aligned}
\end{equation*}

Next, we would like to eliminate $v$. Since $I - \gamma P_C$ is a K-matrix and thus nonsingular, it follows from $S(I - \gamma P_C)v = z + S r_C$ that $v = (I - \gamma P_C)^{-1}S(z + Sr_C)$. By replacement, the equation system is of the form
\begin{align*}
 \mathbf{S}(E - \gamma P_{\overline{C}})v & = \mathbf{S}(E - \gamma P_{\overline{C}})(I - \gamma P_C)^{-1}S(z + Sr_C) \\
                  & = w + \mathbf{S}r_{\overline{C}}.
\end{align*}
Through a basic reordering of terms, we observe that
$$w - \mathbf{S}(E - \gamma P_{\overline{C}})(I - \gamma P_C)^{-1}Sz = \mathbf{S}(E - \gamma P_{\overline{C}})(I - \gamma P_C)^{-1}r_C - \mathbf{S} r_{\overline{C}}.$$

To summarize, we end up with the GLCP$(\mathbf{S}MS,\mathbf{S}q)$, where $M:=(E - \gamma P_{\overline{C}})(I - \gamma P_C)^{-1}$ and $q:=Mr_C - r_{\overline{C}}$. Note that $M$ is a hidden K-matrix of type $a - \mathbf{1}$. The matrix $\mathbf{S}MS$ is a P-matrix.

\begin{thm} Every two-player stochastic game of type $a \in \mathbb{N}^n$ admits a formulation as a \linebreak P-GLCP$(\mathbf{S}MS,\mathbf{S}q)$, where $M \in \mathbb{R}^{(m-n) \times n}$ is a hidden K-matrix of type $a - \mathbf{1}$, matrix $S$ is an $n \times n$ signature matrix, and $q$ is some vector in $\mathbb{R}^{m-n}$.
\end{thm}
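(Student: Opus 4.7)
The plan is to follow the reduction chain laid out in the discussion preceding the statement and verify that each link preserves the claimed structure. First I would rewrite the Bellman-type optimality conditions \eqref{eq:2PG1}--\eqref{eq:2PG2} as the mixed-sign complementarity system \eqref{eq:GCLPTwoPlayer}: for a MAX-state $j$ the equation $v_j = \max_i\{\cdots\}$ is equivalent to demanding $v_j \ge r_i^j + \gamma\sum_k p_{ik}^j v_k$ for every $i$ with equality in at least one coordinate, which is captured by introducing $u_i^j \ge 0$ with $v_j = u_i^j + r_i^j + \gamma\sum_k p_{ik}^j v_k$ together with $\prod_i u_i^j = 0$; the MIN case is the same with the slack on the opposite side, which is recorded uniformly by the signature matrix $\mathbf{S}$. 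This step is straightforward but needs to be spelled out carefully because it fixes the sign conventions for everything that follows.

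Next, to bring the system to proper GLCP form, I would follow Jurdzi\'nski and Savani and single out one row per block, indexed by $C=\{(j,a_j):j\in[n]\}$, calling its slack $z_j$ while keeping $w$ for the remaining slacks. Writing the system in matrix form and using that $I-\gamma P_C$ is a stochastic K-matrix (its rows come from a rowstochastic matrix discounted by $\gamma\in[0,1)$) so nonsingular with nonnegative inverse, I would solve the $z$-block for $v = (I-\gamma P_C)^{-1}S(z+Sr_C)$ and substitute into the $w$-block, obtaining exactly
\[
 w - \mathbf{S}(E-\gamma P_{\overline{C}})(I-\gamma P_C)^{-1}S\,z \;=\; \mathbf{S}(E-\gamma P_{\overline{C}})(I-\gamma P_C)^{-1}r_C - \mathbf{S}r_{\overline{C}},
\]
which is the announced GLCP$(\mathbf{S}MS,\mathbf{S}q)$ with $M:=(E-\gamma P_{\overline{C}})(I-\gamma P_C)^{-1}$ and $q:=Mr_C-r_{\overline{C}}$. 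A routine check confirms that the block-structure of $w$ is of type $a-\mathbf{1}$ and the complementarity pattern of \eqref{eq:GCLPTwoPlayer} is preserved.

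The real content is the structural claims on $M$ and $\mathbf{S}MS$. For the hidden K-property of $M$, I would produce an explicit proper witness $(X,Y)$ with $X:=I-\gamma P_C$ and $Y:=E-\gamma P_{\overline{C}}$. Both are Z-matrices of the required dimensions, $MX = Y$ by definition, and since $P$ is rowstochastic, $X\mathbf{1}=(1-\gamma)\mathbf{1}>\mathbf{0}$ and $Y\mathbf{1}=(1-\gamma)\mathbf{1}>\mathbf{0}$, so $[Y|X]\mathbf{1}>\mathbf{0}$; Proposition~\ref{prop:HidKWitChar} then gives the hidden K-property of $M$. Finally, $\mathbf{S}MS$ inherits the P-property from $M$ because signature scalings of the rows (grouped blockwise via $\mathbf{S}$) and columns preserve it, as remarked after Lemma~\ref{lem:PGLCPScaling}. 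The main obstacle I anticipate is bookkeeping: making sure that after the row split indexed by $C$, the block type indeed becomes $a-\mathbf{1}$ with the correct correspondence of the sign pattern imposed by $\mathbf{S}$ in both the equations and the complementarity conditions, and in particular that the witness $(I-\gamma P_C,\,E-\gamma P_{\overline{C}})$ has its rows and columns aligned with the same $C$ used to eliminate $v$.
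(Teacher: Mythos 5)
Your proposal is correct and follows the same route as the paper: the paper's ``proof'' is precisely the inline derivation preceding the theorem statement (complementarity reformulation of~\eqref{eq:2PlayerGame}, Jurdzi\'nski--Savani row split along $C$, elimination of $v$ via the nonsingular stochastic K-matrix $I-\gamma P_C$), ending with the bare assertion that $M$ is hidden K and $\mathbf{S}MS$ is a P-matrix. Your explicit witness $(X,Y)=(I-\gamma P_C,\,E-\gamma P_{\overline{C}})$ together with Proposition~\ref{prop:HidKWitChar} is exactly what Lemma~\ref{lem:HidKmatrix} packages (applied to the stochastic K-matrix $E(a)-\gamma P$), so you have supplied the details the paper elides rather than taken a different path.
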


Next, we discuss the opposite direction.

\begin{prop}
 Every P-GLCP$(\mathbf{S}MS,\mathbf{S}q)$ with a hidden K-matrix $M \in \mathbb{R}^{m \times n}$ of type $a \in \mathbb{N}^n$, an $n \times n$ signature matrix $S$, and a vector $q \in \mathbb{R}^m$ describes in terms of combinatorics a two-player stochastic game of type $a + \mathbf{1}$.
\end{prop}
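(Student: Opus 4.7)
The approach is to invert the reduction presented just above the proposition. Given the P-GLCP$(\mathbf{S}MS, \mathbf{S}q)$ of type $a$, I want to exhibit an explicit two-player stochastic game of type $a + \mathbf{1}$ whose own GLCP formulation reproduces the input P-GLCP up to the combinatorial equivalence provided by Lemma \ref{lem:PGLCPScaling}. The signature $S$ already dictates the player partition $S_{\textsc{max}} := \{j : s_{jj} = 1\}$ and $S_{\textsc{min}} := \{j : s_{jj} = -1\}$, and each state $j$ will receive $a_j + 1$ actions; the one ``extra'' action per state plays the role of the distinguished index set $C := \{(j, a_j+1) : j \in [n]\}$ whose rows form the square block $X = I - \gamma P_C$ in the forward reduction.

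First I would put $M$ into stochastic form via Lemma~\ref{cor:StochFormHidK}: there exist positive diagonal $L$ and $H$ such that $LMH$ has a proper hidden K-witness $(H^{-1}X, LY)$ with $[LY | H^{-1}X]$ a stochastic K-matrix of some factor $\gamma \in [0,1)$, say $[LY | H^{-1}X] = E(a+\mathbf{1}) - \gamma P$ for a rowstochastic block matrix $P$ of type $a + \mathbf{1}$. Because $L$, $H$, $\mathbf{S}$, and $S$ are all (block-)diagonal and hence pairwise commute, one has $L(\mathbf{S}MS)H = \mathbf{S}(LMH)S$, so Lemma~\ref{lem:PGLCPScaling} guarantees that P-GLCP$(\mathbf{S}MS, \mathbf{S}q)$ and P-GLCP$(\mathbf{S}(LMH)S, \mathbf{S}Lq)$ induce the same USO and share the same solution. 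I may therefore assume outright that $[Y|X]$ itself is of stochastic K form, so that $X = I - \gamma P_C$ and $Y = E(a) - \gamma P_{\overline{C}}$.

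It remains to choose the rewards and verify the construction. Set $r_C := \mathbf{0}$ and $r_{\overline{C}} := -q$ (absorbing $L$ into $q$ beforehand), and define the game with $n$ states, player partition read from $S$, action counts $a_j + 1$, transition probabilities $P$, rewards $r$, and discount factor $\gamma$. Rerunning the derivation of the GLCP from a game, exactly as carried out above the proposition, yields a P-GLCP$(\mathbf{S}M'S, \mathbf{S}q')$ with $M' = (E(a) - \gamma P_{\overline{C}})(I - \gamma P_C)^{-1} = Y X^{-1} = M$ and $q' = M r_C - r_{\overline{C}} = q$, matching the input on the nose. The main delicate point is not any individual computation but the normalization step: one has to confirm that $L$ and $H$ can be pulled through both signature matrices without disturbing the sign structure, and that the combinatorial equivalence of Lemma~\ref{lem:PGLCPScaling} is the intended notion behind ``describes in terms of combinatorics''---both hold because all scalings involved are diagonal and because Lemma~\ref{lem:PGLCPScaling} is insensitive to the sign pattern of the matrix it is applied to.
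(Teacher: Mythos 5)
Your proof is correct and follows essentially the same route as the paper: normalize to stochastic form via Lemma~\ref{cor:StochFormHidK}, invoke Lemma~\ref{lem:PGLCPScaling} (using that $L,\mathbf{S}$ and $H,S$ commute) to preserve the USO, read off $P$ from $[LY|H^{-1}X]=E(a+\mathbf{1})-\gamma P$, and choose $r$ to match $q$. The only cosmetic difference is that you fix the particular solution $r_C=\mathbf{0}$, $r_{\overline{C}}=-Lq$, whereas the paper allows any $r$ satisfying the displayed equation.
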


\begin{proof}
 Let $(X,Y)$ be any proper hidden K-witness of $M$. We consider the P-GLCP$(\mathbf{S}LMHS,\mathbf{S}Lq)$ instead, where $L$ and $H$ are selected as in Lemma \ref{cor:StochFormHidK}. The induced USO stays the same. Let $P$ be rowstochastic such that $E(a + \mathbf{1}) -\gamma P=[LY|H^{-1}X]$ for some $\gamma \in [0,1)$. Let $C:=\{(j,a_j + 1) : \, j \in [n]\}$. Then
\begin{align*}
 \mathbf{S}LMDS & = \mathbf{S}(LY)(H^{-1}X)^{-1}S \\
       & = \mathbf{S}(E(a)-\gamma P_{\overline{C}})(I - \gamma P_C)^{-1} S.
\end{align*}
Pick any $r \in \mathbb{R}^{m+n}$ such that
$$Lq = (E(a)-\gamma P_{\overline{C}})(I - \gamma P_C)^{-1} r_C - r_{\overline{C}}.$$
The matrix $P$ and vector $r$ together encode the transition probabilities and rewards, respectively, of a two-player stochastic game of type $a + \mathbf{1}$.
\end{proof}

We conclude that there is a correspondence between two-player stochastic games and this specific subclass P-GLCPs arising from hidden K-matrices.

For single-player games, the signature matrix $S$ is either $I$ or $-I$. As previously reported, the single-player variants are equivalent to the hidden K-GLCPs, which in turn are characterized by the collection of LP-USOs of grids. A characterization of the two-player games in terms of USOs follows directly.

For a USO $\phi$ of an $n$-grid, let $\phi^F$ for $F \subseteq [n]$ denote the USO obtained from $\phi$ by reversing all edges in directions $j \in F$.

\begin{prop}
 The two-player stochastic games with $n$ states are characterized by the collection of USOs $\phi^F$, where $\phi$ is an LP-USO of an $n$-grid and $F \subseteq [n]$.
\end{prop}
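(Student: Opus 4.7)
The plan is to bridge the preceding correspondence (two-player games $\leftrightarrow$ P-GLCPs of the form $(\mathbf{S}MS,\mathbf{S}q)$ with $M$ a hidden K-matrix) with Theorem~\ref{thm:LPUSOeqGLCPUSO} (hidden K-USOs are exactly LP-USOs) through a single key lemma: signature conjugation of a P-GLCP has the effect of reversing all edges in a prescribed set of directions of the induced USO. Concretely, for any P-GLCP$(M,q)$ of type $b \in \mathbb{N}^n$ and any $n \times n$ signature matrix $S$, the USO induced by P-GLCP$(\mathbf{S}MS,\mathbf{S}q)$ equals $\phi^F$, where $\phi$ is the USO induced by P-GLCP$(M,q)$ and $F := \{ j \in [n] : s_{jj} = -1\}$.

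To establish the lemma, I would perform the linear change of variables $\tilde{w} := \mathbf{S}w$, $\tilde{z} := Sz$ in the system $[I|-\mathbf{S}MS]x = \mathbf{S}q$ and observe, using $S^2=I$ and $\mathbf{S}^2=I$, that this transforms the system into $[I|-M]\tilde{x}=q$. Since $\mathbf{S}$ and $S$ are diagonal sign matrices, setting $x_B = \mathbf{0}$ is equivalent to setting $\tilde{x}_B = \mathbf{0}$, so basic solutions are in one-to-one correspondence: for every basis $B$ and every $(j,k) \in N:=N(b+\mathbf{1})\setminus B$, we have $([I|-\mathbf{S}MS]_N^{-1}\mathbf{S}q)^j_k = s_{jj}\cdot ([I|-M]_N^{-1}q)^j_k$. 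Consequently the sign that determines the orientation of the edge $\{B,C\}$ with $C=(B\setminus\{(j,i)\})\cup\{(j,k)\}$ is preserved when $s_{jj}=1$ and flipped when $s_{jj}=-1$. This proves the lemma and, in passing, shows that nondegeneracy of the two problems coincides, so USOs are simultaneously well-defined.

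Both inclusions of the proposition now follow easily. For the forward direction, given a two-player stochastic game with $n$ states, the preceding theorem provides a formulation as P-GLCP$(\mathbf{S}MS,\mathbf{S}q)$ with $M$ hidden K; Theorem~\ref{thm:LPUSOeqGLCPUSO} shows that the USO $\phi$ induced by the hidden K-GLCP$(M,q)$ is an LP-USO of the $n$-grid, and the lemma then gives the game's USO as $\phi^F$ with $F = \{j : s_{jj}=-1\}$. For the backward direction, given an LP-USO $\phi$ of an $n$-grid and $F \subseteq [n]$, Theorem~\ref{thm:LPUSOeqGLCPUSO} furnishes a hidden K-GLCP$(M,q)$ realizing $\phi$; define $S$ by $s_{jj}=-1$ iff $j \in F$. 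Then $(\mathbf{S}MS,\mathbf{S}q)$ is of the shape considered in the preceding proposition, whence it corresponds combinatorially to a two-player stochastic game, and by the lemma that game's USO is exactly $\phi^F$. The only step that is not purely mechanical is the change-of-variables bookkeeping in the key lemma, so the main care is simply in tracking signs across the two blocks $w$ and $z$ within each direction $j$.
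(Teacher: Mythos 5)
Your proof is correct and follows the same route the paper intends; the paper states the proposition without a formal proof, asserting it ``follows directly'' from the preceding theorem and proposition, and your change-of-variables lemma (that conjugating a P-GLCP by the signature matrix $S$ reverses exactly the edges in directions $\{j : s_{jj}=-1\}$) is precisely the step the paper leaves implicit. Making that lemma explicit, including the observation that nondegeneracy is preserved, is a clean and complete filling-in of the intended argument.
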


Every P-USO of the $3$-cube arises from some two-player binary stochastic game, which follows from the enumeration of the P-USOs~\cite{StiWat:Digraph-models} and the LP-USOs~\cite{bernd1998abstract} of the $3$-cube, respectively. Hence, some USOs arising from stochastic games contain directed cycles---strategy improvement algorithms are finite though. In general, P-USOs that do not arise from any game may exist. At first glance, such a conjecture is supported by the fact that two-player stochastic games can be solved in expected subexponential time~\cite{halman2007simple,svensson2007linear}---no such an algorithm is known for general P-GLCPs. However, these algorithms exploit the fact that $F$ (or signature matrix $S$) is known. Hence, such a reasoning is not exactly supportive. The P-matrices arising from games are \emph{hidden row diagonally dominant}~\cite{tsatso:generating}; and thus, they build a proper subclass of general P-matrices~\cite{MorNam:sandwiches}. Below we provide an alternative complementarity formulation that makes such an observation self-evident. 

The characterization of stochastic games in terms of USOs reveals another interesting fact. Simplex-type methods for discounted MDPs compute an optimal policy through obtaining solutions to related two-player stochastic games. We restrict the discussion to the binary case.

\begin{prop}
 Consider a binary discounted MDP$(P,r,\gamma)$ with $n$ states, and denote the arising LP-USO of the $n$-cube by $\phi$. For every $F \subseteq [n]$, determining the unique vertex in $\phi$ with outgoing edges in directions $F$ is polynomially equivalent to solving the two-player stochastic game $( S_{\textsc{max}}:=[n] \backslash F, S_{\textsc{min}}:=F,P,r,\gamma)$.
\end{prop}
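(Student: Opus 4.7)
The plan is to leverage the preceding proposition that characterizes two-player stochastic games by the collection of USOs $\phi^F$, where $\phi$ is an LP-USO of an $n$-cube. Specifically, I would argue that the two-player stochastic game $(S_{\textsc{max}} := [n] \setminus F, S_{\textsc{min}} := F, P, r, \gamma)$ induces exactly the USO $\phi^F$ obtained from the MDP's LP-USO $\phi$ by reversing all edges in directions $j \in F$. This is essentially the content of the previous proposition; the underlying intuition is that the \textsc{min} player inverts the improvement/worsening criterion along the coordinates he controls, which corresponds precisely to reversing the edge orientations in those directions.

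Next, I would identify solving the two-player game with finding the global sink of $\phi^F$: the optimal policy is the unique vertex at which no state-wise switch is improving for the responsible player, and in the USO picture this is exactly the sink. The crucial bookkeeping is then to check that the unique sink of $\phi^F$ coincides with the unique vertex of $\phi$ whose outgoing edges are precisely those in directions $F$. Unfolding the definitions, an edge at $v$ in direction $j$ is outgoing in $\phi^F$ iff either $j \in F$ and the edge is incoming at $v$ in $\phi$, or $j \notin F$ and the edge is outgoing at $v$ in $\phi$. Hence $v$ is a sink of $\phi^F$ iff the edge at $v$ in direction $j$ is outgoing in $\phi$ for every $j \in F$ and incoming in $\phi$ for every $j \notin F$, i.e., iff $v$ has outgoing edges in $\phi$ exactly in directions $F$.

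Polynomial equivalence then follows for free: the data $(P,r,\gamma)$ of the MDP together with the partition induced by $F$ is essentially the data of the two-player game, and the combinatorial transformation between the two instances is an identity map on the ground set, with only a relabeling of which directions are \textsc{max}- vs.\ \textsc{min}-controlled. So a solver for one problem yields a solver for the other with only polynomial overhead.

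The main obstacle, or rather the subtle point, is the direction-reversal bookkeeping in the first paragraph: one must verify that the game-USO orientation rule, when specialized to $S_{\textsc{min}} = F$, actually coincides with the rule of flipping the MDP-USO $\phi$ in all directions in $F$. This requires tracing through the reduced-cost characterization of the LP-USO edge orientations and checking that the \textsc{min} player's preference negates the sign along the relevant coordinates. Since this is what the previous proposition already encodes via the signature matrix $S$ in the passage from $\mathbf{S} M S$ to $M$, I would invoke that proposition rather than re-derive the orientation rule from scratch.
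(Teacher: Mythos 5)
Your proof is correct and follows essentially the same route as the paper: both arguments identify the vertex of $\phi$ with outgoing edges exactly in directions $F$ as the unique sink of the flipped orientation $\phi^F$, and both invoke the signature-matrix correspondence from the preceding propositions to recognize $\phi^F$ as the USO of the two-player game with $S_{\textsc{min}} = F$. The only cosmetic difference is that the paper phrases the link to the game explicitly via the P-LCP$(SMS,Sq)$ with $M$ a hidden K-matrix combinatorially equivalent to the MDP, whereas you invoke the USO-level characterization (the $\phi^F$ proposition) directly; these are two ways of citing the same Section 10 material.
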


\begin{proof}
 Due to combinatorial equivalence, the USO $\phi$ also arises from some hidden K-LCP$(M,q)$. In every pivot step, we arrive at some vertex of the $n$-cube. Let $F \subseteq [n]$ denote the directions of the vertex's outgoing edges. The vertex is the unique sink of the USO $\phi^F$ and therefore the solution to the two-player binary stochastic game described by the P-LCP$(SMS,Sq)$, where $S$ is the $n \times n$ signature matrix with $S_{jj}=1$ for $j \notin F$ and $S_{jj}=-1$ otherwise.
\end{proof}

For hidden K-LCPs$(M,q)$ a proper hidden K-witness of $M$ is usually not known. Thus, determining some specific vertex in the arising LP-USO is even more difficult but is still polynomially equivalent to obtaining the solution to some related two-player binary stochastic game.

%\begin{cor}
 %If for every hidden K-LCP$(M,q)$ of order $n$ and each $F \subseteq [n]$ the unique vertex with outgoing edges in directions $F$ in the arising LP-USO of the $n$-cube can be computed in polynomial time, then binary two-player stochastic games can be solved in polynomial time. 
%\end{cor}

%\begin{proof}
 %Every two-player binary stochastic game admits a formulation as a P-LCP$(SMS,Sq)$, where $M$ is a square hidden K-matrix and $S$ is an $n \times n$ signature matrix. Let $F:=\{j \, : \, S_{jj}=-1\}$. The solution to the P-LCP$(SMS,Sq)$ is obtained through finding the sink in $\phi^F$, where $\phi$ is the USO of the $n$-cube arising from the hidden K-LCP$(M,q)$.
%\end{proof}

\subsubsection*{An alternative complementarity formulation}
We may apply the reduction scheme proposed by Svensson and Vorobyov~\cite{svensson2007linear} instead.

Consider again the problem \eqref{eq:GCLPTwoPlayer}. In order to lower and upper bound the entries of the solution $v^*$, we let
$$h:=\sum_{i=0}^{\infty}\gamma^id=\frac{d}{1-\gamma} \quad \mbox{ for } d > \max \{ |r_i^j| : j \in [n] \text{ and } i \in [a_j]\}.$$
For a pair $(u,v)$ satisfying~\eqref{eq:GCLPTwoPlayer}, we must have $-h \leq v_j \leq h$ for each $j \in [n]$. Hence, the complementarity problem can be written in proper form by substituting $z_j - h$ for $v_j$ if $j \in S_{\textsc{max}}$ and $-z_j + h$ for $v_j$ otherwise, for each $j \in [n]$. We can then add the constraint $z \geq \mathbf{0}$. For convenience, we also substitute $w$ for $u$. Then, the problem is to find a vector $z \in \mathbb{R}^n$ and block vector $w \in \mathbb{R}^m$ of type $a$ such that
\begin{equation*} %\label{eq:MDPasGLCP}
 %\small
 \begin{aligned}
  w_i^j - z_j  + \gamma \left( \smashoperator[r]{\sum_{k \in S_{\textsc{max}}}} p_{ik}^{j} z_k - \smashoperator[r]{\sum_{k \in S_{\textsc{min}}}} p_{ik}^{j} z_k  \right) & = -r_i^j + \gamma \left( \smashoperator[r]{\sum_{k \in S_{\textsc{max}}}} p_{ik}^{j} - \smashoperator[r]{\sum_{k \in S_{\textsc{min}}}} p_{ik}^{j} \right) h - h  & \fall j \in S_{\textsc{max}}, \fall i \in [a_j], \\
  w_i^j - z_j - \gamma \left( \smashoperator[r]{\sum_{k \in S_{\textsc{max}}}} p_{ik}^{j} z_k - \smashoperator[r]{\sum_{k \in S_{\textsc{min}}}} p_{ik}^{j} z_k\right) & = \phantom{-}r_i^j - \gamma \left( \smashoperator[r]{\sum_{k \in S_{\textsc{max}}}} p_{ik}^{j} - \smashoperator[r]{\sum_{k \in S_{\textsc{min}}}} p_{ik}^{j}  \right) h - h  & \fall j \in S_{\textsc{min}}, \fall i \in [a_j],
 \end{aligned}
\end{equation*}
both $z$ and $w$ are nonnegative, and $z_j \prod_{i = 1}^{a_j} w_i^j = 0$ for every $j \in [n]$.

Let $S$ be the $n \times n$ signature matrix with $s_{jj}=1$ if $j \in S_{\textsc{max}}$ and  $s_{jj}=-1$ otherwise. The reduction results in a P-GLCP$(\mathbf{S}MS,\mathbf{S}q)$ with a stochastic K-matrix $M:=E(a)-\gamma P$ and $q:=-r+\gamma P S \mathbf{h} - \mathbf{S} \mathbf{h}$.

The arising matrices $\mathbf{S}MS$ are row diagonally dominant and also belong to the class of \emph{H-matrices}, which by definition are the matrices that are up to the signs of their components K-matrices. For square matrices, the intersection of P- and H-matrices is properly contained in the collection of hidden K-matrices~\cite{Pan:Hidden}.  Such a result cannot hold for general block matrices---USOs arising from the alternative complementarity formulation for games may contain directed cycles\footnote{In order to actually prove such a statement, we would have to formulate a generalization of Theorem~\ref{thm:RelHiddKK}. The result then follows from the fact that some game induces the P-USO of the $3$-cube that contains a directed cycle (USO no.~19 in~\cite{StiWat:Digraph-models}).} but hidden K-GLCPs induce LP-USOs.

\section{A strongly polynomial reduction from two-player stochastic games to their binary counterparts} \label{sec:Red2PG}

It is worth to investigate whether the complementarity formulations for two-player stochastic games from the last section can be reduced to ordinary LCPs in such a way that the game property is preserved. The reduction presented in Section~\ref{sec:PRed} obviously provides a reduction to ordinary P-LCPs, but the game property is probably lost. Fortunately, there is a generalization of the reduction scheme for the single-player games presented in Section~\ref{sec:KReduction}.

Consider any two-player stochastic game as defined in the previous section. The problem is to solve the optimality criterion \eqref{eq:2PlayerGame} for $v \in \mathbb{R}^n$, whose entries represent the total discounted reward in expectation for the $n$ states.

For reasons of simplicity, we directly reduce the optimality criterion. By ignoring formulations as GLCPs, we avoid unnecessary complexity. The reduction again proceeds by means of iterations. In every iteration, the number of actions of some state decreases by one, which is attained through creating some new state with two actions.

Let $\delta := (1 + \gamma) \slash 2$, and note that $\delta \in [0,1)$. Suppose that state $n$ is under control of the \textsc{min} player. Consider the following equation system, where $z \in \mathbb{R}^{n+1}$ is a variable vector.

\begin{subequations} \label{eq:Red2PlayerGame}
 {\normalsize
  \begin{align}
   z_j & = \max_{i \in [a_j]} \left\{ \frac{r_i^j}{2} + \delta \sum_{k = 1}^{n-1} \frac{\gamma}{2 \delta} p_{ik}^j z_k + \delta \frac{1}{2 \delta} z_j + \delta \frac{\gamma}{2 \delta} p^j_{in} z_{n+1} \right\} \qquad \fall j \in S_{\textsc{max}} , \label{eq:Red2PG1}\\
   z_j & = \min_{i \in [a_j]} \left\{ \frac{r_i^j}{2} + \delta \sum_{k = 1}^{n-1} \frac{\gamma}{2 \delta} p_{ik}^j z_k + \delta \frac{\gamma}{2 \delta} z_j + \delta \frac{\gamma}{2 \delta} p^j_{in} z_{n+1} \right\} \qquad \fall j \in S_{\textsc{min}} \backslash \{n\}, \label{eq:Red2PG2}\\
	 z_n & = \min_{i \in [a_n-1]} \left\{ r_i^n + \delta \sum_{k = 1}^{n-1} \frac{\gamma}{\delta} p_{ik}^n z_k + \delta \left(\frac{1}{\delta} -1\right) z_n + \delta \frac{\gamma}{\delta} p^n_{in} z_{n+1} \right\}, \label{eq:Red2PG3} \\
	 z_{n+1} & = \min \left\{ \frac{r_{a_n}^n}{2} + \delta \sum_{k = 1}^{n-1} \frac{\gamma}{2 \delta} p_{a_n k}^n z_k + \delta \left( \frac{\gamma}{2 \delta} p^n_{a_n n} + \frac{1}{2 \delta} \right) z_{n+1}, \delta z_n \right\}. \label{eq:Red2PG4}
	 \end{align}}
	\end{subequations}
	
We will prove that the equation system \eqref{eq:Red2PlayerGame} defines the optimality criterion of a two-player stochastic game of type $(a_1, \ldots, a_{n-1}, a_n-1, 2)$ with discount factor $\delta$, where the $(n+1)$th state is controlled by the \textsc{min} player. The ownership of the other states does not change. Moreover, the game values coincide with the game values of the original game. The latter fact is proven first.
	
\begin{lem} \label{lem:Red2PGGameVal}
 For a solution $z \in \mathbb{R}^{n+1}$ to \eqref{eq:Red2PlayerGame}, the solution $v \in \mathbb{R}^n$ to \eqref{eq:2PlayerGame} is given by $v_j := z_j$ for $j \in [n-1]$ and $v_n := z_{n+1}$.
\end{lem}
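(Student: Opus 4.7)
The plan is to show that defining $v_j := z_j$ for $j \in [n-1]$ and $v_n := z_{n+1}$ yields a vector that satisfies the original Bellman system~\eqref{eq:2PlayerGame}. Since this system has a unique solution by Shapley's theorem~\cite{shapley1953stochastic}, this identifies $v$ as the value vector of the initial game. The verification splits into the straightforward equations for $j \in [n-1]$ and the coupled pair \eqref{eq:Red2PG3}, \eqref{eq:Red2PG4} governing state $n$.

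For $j \in [n-1]$, whether $j$ lies in $S_{\textsc{max}}$ or in $S_{\textsc{min}} \setminus \{n\}$, the verification is a routine calculation. The self-loop term on the right side (a constant multiple of $z_j$ obtained from the $\delta \cdot (1/(2\delta))$ or $\delta \cdot (\gamma/(2\delta))$ coefficient) can be absorbed on the left; after clearing the factor and using $v_n = z_{n+1}$ to merge the sum over $k \in [n-1]$ with the extra term for index $n$, the equation reduces exactly to \eqref{eq:2PG1} or \eqref{eq:2PG2} evaluated at $v$.

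The interesting case is state $n$. Write $\mu_i := r_i^n + \gamma \sum_{k=1}^n p_{ik}^n v_k$ for the value of action $i$ under the substitution. Moving the self-loop $(1-\delta) z_n$ in \eqref{eq:Red2PG3} to the left-hand side and cancelling the common factor $\delta$ gives
\[
 \delta z_n \;=\; \min_{i \in [a_n - 1]} \mu_i,
\]
so $\delta z_n$ equals the best value among the retained actions. An analogous manipulation of the first expression $A$ inside the min of \eqref{eq:Red2PG4} shows that the identity $z_{n+1} = A$ is equivalent to $z_{n+1} = \mu_{a_n}$, while the inequality $z_{n+1} \leq A$ is equivalent to $z_{n+1} \leq \mu_{a_n}$, so the first branch of the min encodes ``take the excluded action $a_n$'' and the second branch delegates to $\delta z_n$.

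A two-case analysis based on which branch of \eqref{eq:Red2PG4} is active now finishes the argument. If $z_{n+1} = A \leq \delta z_n$, then $v_n = \mu_{a_n}$ and $\mu_{a_n} \leq \min_{i \in [a_n-1]} \mu_i$, so $v_n = \min_{i \in [a_n]} \mu_i$. If $z_{n+1} = \delta z_n \leq A$, then $v_n = \min_{i \in [a_n-1]} \mu_i$ and $v_n \leq \mu_{a_n}$, so again $v_n = \min_{i \in [a_n]} \mu_i$. Either way \eqref{eq:2PG2} holds at $j = n$. The main obstacle is precisely this case analysis together with the algebraic identification of the two branches of \eqref{eq:Red2PG4} with ``play action $a_n$'' and ``delegate to the retained actions''; the remaining verifications are mechanical.
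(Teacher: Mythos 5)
Your proof is correct and follows essentially the same route as the paper's: both establish that $\delta z_n = \min_{i \in [a_n-1]} \mu_i$ from~\eqref{eq:Red2PG3}, and both use the structure of the $\min$ in~\eqref{eq:Red2PG4} to conclude $v_n = \min_{i \in [a_n]} \mu_i$; the paper phrases the finish as ``one of the two inequalities \eqref{eq:final1}, \eqref{eq:final2} is an equality'' whereas you recast this as a case split on which branch of~\eqref{eq:Red2PG4} is active, but the content is identical.
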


\begin{proof}
For the states $j \in S_{\textsc{max}}$, it follows from the equations \eqref{eq:Red2PG1} that
 \begin{equation*}
  \begin{aligned}
   2 v_j & = 2 z_j \\
       & = 2 \max_{i \in [a_j]} \left\{ \frac{r_i^j}{2} + \delta \sum_{k = 1}^{n-1}\frac{\gamma}{2 \delta} p_{ik}^j z_k + \delta \frac{1}{2 \delta} z_j + \delta \frac{\gamma}{2 \delta} p^j_{in} z_{n+1} \right\} \\
		   & = \max_{i \in [a_j]} \left\{ r_i^j + \gamma \sum_{k = 1}^{n-1} p_{ik}^j z_k + z_j + \gamma p^j_{in} z_{n+1} \right\} \\
			 & = \max_{i \in [a_j]} \left\{ r_i^j + \gamma \sum_{k = 1}^n p_{ik}^j v_k \right\} + v_j.
  \end{aligned}
 \end{equation*}
Thus, the equations \eqref{eq:2PG1} are satisfied by $v$. An analogous argumentation applies for the states $j \in S_{\textsc{min}} \backslash \{n\}$. Next, by equation \eqref{eq:Red2PG3}, we have
\begin{equation*}
  \begin{aligned}
   \delta z_n & = \left(1 - \frac{1 - \gamma}{2} \right) z_n \\
                            & = \min_{i \in [a_n-1]} \left\{ r_i^n + \delta \sum_{k = 1}^{n-1} \frac{\gamma}{\delta} p_{ik}^n z_k + \delta \left(\frac{1}{\delta} -1\right) z_n + \delta \frac{\gamma}{\delta} p^n_{in} z_{n+1} \right\} - \frac{1 - \gamma}{2} z_n \\
														& = \min_{i \in [a_n-1]} \left\{ r_i^n + \gamma \sum_{k = 1}^{n-1} p_{ik}^n z_k + \frac{1-\gamma}{2} z_n + \gamma p^n_{in} z_{n+1} \right\} - \frac{1 - \gamma}{2} z_n \\
														& = \min_{i \in [a_n-1]} \left\{ r_i^n + \gamma \sum_{k = 1}^n p_{ik}^n v_k \right\}.
  \end{aligned}
\end{equation*}
By combining it with equation \eqref{eq:Red2PG4}, we conclude that
 \begin{equation} \label{eq:final1}
  \begin{aligned}
   v_n & = z_{n+1} \\
	     & \leq \delta z_n \\
			 & = \min_{i \in [a_n-1]} \left\{ r_i^n + \gamma \sum_{k = 1}^n p_{ik}^n v_k \right\}. \\
	\end{aligned}
 \end{equation}
On the other hand, we have
\begin{equation} \label{eq:final2}
  \begin{aligned}
   2 v_n & = 2 z_{n+1} \\
	     & \leq 2 \left( \frac{r_{a_n}^n}{2} + \delta \sum_{k = 1}^{n-1} \frac{\gamma}{2 \delta} p_{a_n k}^n z_k + \delta \left( \frac{\gamma}{2 \delta} p^n_{a_n n} + \frac{1}{2 \delta} \right) z_{n+1} \right) \\
			 & = r_{a_n}^n + \gamma \sum_{k = 1}^{n-1} p_{a_n k}^n z_k + \left( \gamma p^n_{a_n n} + 1 \right) z_{n+1} \\
			 & = r_{a_n}^n + \gamma \sum_{k = 1}^n p_{a_n k}^n v_k + v_n.
	\end{aligned}
 \end{equation}
Thus
$$v_n \leq r_{a_n}^n + \gamma \sum_{k = 1}^{n} p_{a_n k}^n v_k.$$
It actually holds that
\begin{equation*}
  \begin{aligned}
   v_n =  \min_{i \in [a_n]} \left\{ r_i^j + \gamma \sum_{k = 1}^n p_{ik}^n v_k \right\}
	\end{aligned}
 \end{equation*}
because either the inequality in \eqref{eq:final1} or the inequality in \eqref{eq:final2} holds with equality. Hence, equation \eqref{eq:2PG2} is likewise satisfied by $v$.
\end{proof}

Finally, it is straightforward to verify that the equations \eqref{eq:Red2PlayerGame} describe an optimality criterion of a game. The transition probability matrix is
$$P':=\begin{bmatrix} \frac{\gamma}{2 \delta} p^1_{\cdot 1} + \frac{1}{2 \delta} \mathbf{1} & \frac{\gamma}{2 \delta} p^1_{\cdot 2} & \frac{\gamma}{2 \delta} p^1_{\cdot 3} & \cdots & \frac{\gamma}{2 \delta} p^1_{\cdot (n-1)} & \mathbf{0} & \frac{\gamma}{2 \delta} p^1_{\cdot n} \\
  \frac{\gamma}{2 \delta} p^2_{\cdot 1} & \frac{\gamma}{2 \delta} p^2_{\cdot 2} + \frac{1}{2 \delta} \mathbf{1} & \frac{\gamma}{2 \delta} p^2_{\cdot 3} & \cdots & \frac{\gamma}{2 \delta} p^2_{\cdot (n-1)} & \mathbf{0} & \frac{\gamma}{2 \delta} p^2_{\cdot n} \\
	\vdots & \vdots & \ddots & \cdots & \vdots & \vdots & \vdots \\
	\frac{\gamma}{2 \delta} p^{n-1}_{\cdot 1} & \frac{\gamma}{2 \delta} p^{n-1}_{\cdot 2} & \frac{\gamma}{2 \delta} p^{n-1}_{\cdot 3} & \cdots & \frac{\gamma}{2 \delta} p^{n-1}_{\cdot (n-1)} + \frac{1}{2 \delta} \mathbf{1} & \mathbf{0} & \frac{\gamma}{2 \delta} p^{n-1}_{\cdot n} \\
	\frac{\gamma}{\delta} p^{n}_{[a_n-1] 1} & \frac{\gamma}{\delta} p^{n}_{[a_n-1] 2} & \frac{\gamma}{\delta} p^{n}_{[a_n-1] 3} & \cdots & \frac{\gamma}{\delta} p^{n}_{[a_n-1] (n-1)} & \left(\frac{1}{\delta} - 1\right) \mathbf{1} & \frac{\gamma}{ \delta} p^{n}_{[a_n-1] n} \\
	\frac{\gamma}{2 \delta} p^n_{a_n 1} & \frac{\gamma}{2 \delta} p^n_{a_n 2} & \frac{\gamma}{2 \delta} p^n_{a_n 3} & \cdots & \frac{\gamma}{2 \delta} p^n_{a_n (n-1)} & \mathbf{0} & \frac{\gamma}{2 \delta} p^n_{a_n n} + \frac{1}{2 \delta} \\
	0 & 0 & 0 & \cdots & 0 & 1 & 0
 \end{bmatrix}.$$

Obviously $P' \geq \mathbf{0}$, and since $\frac{\gamma}{2 \delta} + \frac{1}{2 \delta}= 1 \text{ and } \frac{\gamma}{\delta} + (\frac{1}{\delta} - 1) = 1$,
we have $P' \mathbf{1} = \mathbf{1}$. Hence, matrix $P'$ is rowstochastic. The reward vector is
$$r':=\begin{bmatrix}
      \frac{1}{2} r^1 \\
      \frac{1}{2} r^2 \\
			\vdots \\
			\frac{1}{2} r^{n-1} \\
			r^n_{[a_n-1]} \\
			\frac{1}{2} r^n_{a_n} \\
			0
\end{bmatrix}.$$

This completes an intermediate step of the reduction. The proof for the case $n \in S_{\textsc{max}}$ is analogous and left to the reader.

Through iteratively applying the above reduction scheme, two-player stochastic games can be reduced two games where every state has at most two actions. All in all, the number of states increases linearly in the total number of actions $m$. The discount factor increases in every reduction step by $(1-\gamma)\slash 2$. The game values to the original game are obtained through retracing the reduction steps backwards and using Lemma~\ref{lem:Red2PGGameVal}. In summary, the main result is the following.

\begin{thm}
 Two-player stochastic games reduce to their binary counterparts in strongly polynomial time.
\end{thm}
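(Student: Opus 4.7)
The plan is to iteratively apply the single-step reduction developed above, peeling off one action at a time from states with more than two actions. Given a game of type $a \in \mathbb{N}^n$ with $m = \sum_j a_j$ total actions, the quantity $\sum_j \max\{0, a_j - 2\}$ bounds the number of required iterations by $m - 2n$; after that many reductions every state has at most two actions. Relabelling allows the reduction (presented for the case that the targeted state is the last one) to be applied to any chosen state, and the case where that state lies in $S_{\text{MAX}}$ is symmetric to the $S_{\text{MIN}}$-case detailed in equations~(\ref{eq:Red2PG1})--(\ref{eq:Red2PG4}), obtained by interchanging the roles of $\max$ and $\min$ and reassigning ownership of the auxiliary state to \textsc{MAX}.

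To recover the solution to the original game, I would appeal to Lemma~\ref{lem:Red2PGGameVal} (and its \textsc{MAX}-symmetric analogue) at each step: the solution vector of the binary game is lifted back along the chain of reductions by the substitution $v_{j^*} := z_{n+1}$ for the reduced state and $v_j := z_j$ elsewhere. Structure preservation is immediate from the explicit formulas for $P'$ and $r'$ given above: $P'$ is rowstochastic because the scalings $\gamma/(2\delta)$ and $1/(2\delta)$ sum to one row by row, all entries are nonnegative, and the new discount factor $\delta = (1+\gamma)/2 \in [0,1)$ remains admissible. Hence the image at every step is again a two-player stochastic game.

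For the complexity bound, each reduction performs $O(|P| + |r|)$ arithmetic operations on a description whose dimensions grow by a constant per step. After $O(m)$ iterations both the number of states and the total number of actions are $O(m)$, so the overall arithmetic count is polynomial in $m$. The numerical sizes are also controlled: starting from discount factor $\gamma$, after $k$ reductions the factor becomes $1 - (1-\gamma)/2^k$, whose denominator gains only $k$ bits, and the rescaled probabilities and rewards inherit denominators bounded by this quantity. Thus every intermediate rational has encoding length polynomial in that of the input, and the reduction is strongly polynomial in the technical sense.

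The main obstacle I anticipate is bookkeeping of the discount factor $\delta_k = 1 - (1-\gamma)/2^k$, which tends to $1$ along the chain. While this growth does not impede strong polynomiality (the bit-length grows only linearly), it must be verified that the target binary game remains a well-formed discounted game at every step and that the rational magnitudes involved do not secretly blow up via cancellations or hidden scalings. A secondary but tedious concern is filling in the $S_{\text{MAX}}$-case of the single-step reduction, which the exposition defers, to ensure that the two reductions compose without sign mismatches when states of opposite ownership are peeled in succession.
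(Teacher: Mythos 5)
Your proposal matches the paper's own argument: iterate the single-step reduction of Section~\ref{sec:Red2PG} (applying the $S_{\textsc{max}}$-symmetric analogue as needed), observe that the number of iterations and the growth of the state set are linear in $m$, and recover the game values by back-substitution via Lemma~\ref{lem:Red2PGGameVal}. The paper itself stops at this level of detail; your added remarks on the evolution $\delta_k = 1 - (1-\gamma)/2^k$ and the polynomial growth of bit-lengths make explicit a point the paper leaves implicit, and the concern you flag about the $S_{\textsc{max}}$ case is exactly what the paper defers with ``analogous and left to the reader.'' (One small slip: $\sum_j \max\{0,a_j-2\}$ is not bounded above by $m-2n$ when some $a_j<2$; it is simply $O(m)$, which suffices.)
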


Since binary stochastic games admit formulations as ordinary P-LCPs, general two-player stochastic games can be formulated as ordinary P-LCPs that again represent games.

%\begin{appendix}
%\section{A solving method for the hidden Z-GLCP based on backward dynamic programming}

%\section{Submodularity and structured solutions to GLCPs with stochastic K-matrices}

%\end{appendix}

\bigskip

\textbf{Acknowledgment.} This work is supported by the JST, ERATO Large Graph Project. We would like to thank Jan Foniok, Komei Fukuda, Naonori Kakimura, and Hanna Sumita for various contributions.

\bibliography{lincomp}

\end{document}